\titleformat{\section}[block]{\Large\bfseries}{\thesection.}{3pt}{} % normal section
\titleformat{\subsection}[runin]{\normalfont\bfseries}{\thesubsection.}{3pt}{}[.] % subsection without vertical space after title
\def\RR{\mathbb{R}} % reals
\def\NN{\mathbb{N}} % naturals
\def\EE{\mathbb{E}} % expectation
\def\FF{\mathcal{F}} % an operator
\def\MM{\mathcal{M}} %space of bounded positive Borel measures.
\def\P{\mathcal{P}} % Poisson point measure
\def\Q{\mathcal{Q}} % another Poisson point measure
\def\R{\mathcal{R}} % yet another Poisson point measure
\def\ind{\mathbf{1}} % indicator
\def\ii{\mathbf{i}} % bold i
\def\Law{\textnormal{Law}} % law
\def\-{\text{-}} % small minus sign
\def\tp{\tilde{P}}%rescaled thermostat
\def\tpast{\tilde{P}^\ast}%adjoint of rescaled thermostat
\newtheorem{thm}{Theorem}
\newtheorem{lem}[thm]{Lemma}
\newtheorem{defi}[thm]{Definition}
\theoremstyle{definition}
\newtheorem{rmk}[thm]{Remark}
\title{On a thermostated Kac model with rescaling}
\author{Roberto Cortez\footnote{Universidad Andr\'es Bello, Departamento de Matem\'aticas. E-mail: \texttt{roberto.cortez.m@unab.cl}. Supported by Iniciaci\'on Fondecyt Grant 11181082 and by Programa Iniciativa Cient\'ifica Milenio through Nucleus Millenium Stochastic Models of Complex and Disordered Systems.}
\, and Hagop Tossounian\footnote{Universidad de Chile, DIM-CMM, E-mail: \texttt{htossounian@cmm.uchile.cl}. Supported by Fondecyt Postdoctoral
Project 3200130 and by Programa Iniciativa Cient\'ifica Milenio through Nucleus Millenium Stochastic Models of Complex and Disordered Systems.
}
}
\begin{document}

\maketitle

%%%%%%%%%%%%%%%%
% Abstract
%%%%%%%%%%%%%%%%

\begin{abstract}

We introduce a global thermostat on Kac's 1D model for the velocities of particles in a space-homogeneous gas subjected to binary collisions, also interacting with a (local) Maxwellian thermostat. The global thermostat rescales the velocities of all the particles, thus restoring the total energy of the system, which leads to an additional drift term in the corresponding nonlinear kinetic equation. We prove ergodicity for this equation, and show that its equilibrium distribution has a density that, depending on the parameters of the model, can exhibit heavy tails, and whose behaviour at the origin can range from being analytic, to being $C^k$, and even to blowing-up. Finally, we prove propagation of chaos for the associated $N$-particle system, with a uniform-in-time rate of order $N^{-\eta}$ in the squared $2$-Wasserstein metric, for an explicit $\eta \in (0, 1/3]$.

\end{abstract}

\textbf{Keywords:} Kac model, kinetic theory, global thermostat, approach to equilibrium, particle system, propagation of chaos.

\textbf{Mathematics Subject Classification (2020):} 82C40, 82C22.

%%%%%%%%%%%%%%%%
% Introduction
%%%%%%%%%%%%%%%%

\section{Introduction}

\subsection{Thermostated Kac model}
In 1956, Marc Kac \cite{kac1956} introduced a space-homogeneous dynamics for a collection of $N$ identical particles with one-dimensional velocities $\mathbf{v} = (v_1, \dots, v_N) \in \RR^N$, undergoing binary collisions. It can be described as follows: select two particles $i\neq j$ at random, and update their velocities according to the rule
\begin{equation}
\label{eq:Kac_collision_rule}
(v_i,v_j)
\mapsto (v_i', v_j')
= (v_i \cos\theta - v_j\sin\theta, v_i \sin\theta+ v_j\cos\theta),
\end{equation}
for $\theta$ chosen randomly and uniformly on $[0, 2\pi)$. This rule preserves the total energy, i.e., $(v_i')^2+(v_j')^2 = v_i^2+v_j^2$. Particles evolve continuously with time $t\geq 0$ by iterating \eqref{eq:Kac_collision_rule}, where the times between collisions are independent and exponentially distributed with parameter $\lambda N$; thus, every particle undergoes $2\lambda$ collisions per unit of time on average. This is known as \emph{Kac's model}, and the above procedure produces a pure-jump Markov process on $\RR^N$, called the \emph{particle system}; we denote $f_t^N$ the probability distribution of the particle system at time $t\geq 0$. Note that the total energy of the system $\sum_i v_i^2$ is preserved; thus, if the vector of initial velocities belongs to $S^{N-1}(\sqrt{N E}) := \{\mathbf{v} \in \RR^N : \sum_i v_i^2 = NE \}$, for some fixed energy $E>0$, then $f_t^N$ is supported on $S^{N-1}(\sqrt{N E})$ for all $t\geq 0$.

Assume that the family of initial distributions $(f_0^N)_{N\in\NN}$ is symmetric and \emph{chaotic} with respect to some probability measure $f_0$ on $\RR$, that is, for any fixed $k$, the marginal distribution of $f_0^N$ in the first $k$ variables converges weakly, as $N\to\infty$, to $f_0^{\otimes k}$. In other words, initially, any fixed number of particles becomes asymptotically independent and $f_0$-distributed, as the total number of particles grows. Assuming that $f_0$ has a density, Kac proved that this property is propagated to later times by the dynamics: for all $t\geq 0$, the family $(f_t^N)_{N\in\NN}$ is symmetric and chaotic with respect to the density $f_t$, which is the solution to the so-called Boltzmann-Kac equation:
\begin{equation}
\label{eq:BoltzmannKac}
\partial_t f_t(v)
= 2\lambda \int_\RR \int_0^{2\pi} [f_t(v')f_t(v_*') - f_t(v)f_t(v_*)] \frac{d\theta}{2\pi} dv_*.
\end{equation}
This property is now termed \emph{propagation of chaos}, and it has been extensively studied during the last decades, for this and other related kinetic models, see for instance \cite{carlen-carvalho-leroux-loss-villani2010,cortez2016,mckean1966,mischler-mouhot2013}.

Call $\sigma_N$ the uniform distribution on $S^{N-1}(\sqrt{N E})$, and assume that $f_0^N$ is supported on $S^{N-1}(\sqrt{N E})$ and has a density with respect to $\sigma_N$. Then, for all $t\geq 0$, $f_t^N$ also has a density with respect to $\sigma_N$, and the particle system is ergodic in $L^1(S^{N-1}(\sqrt{N E}), \sigma_N)$, having $1$ as the unique equilibrium distribution. Kac initially worked in $L^2(S^{N-1}(\sqrt{N E}), \sigma_N)$ and conjectured that there is a spectral gap bounded below uniformly in $N$. This conjecture was proven by Janvresse in \cite{janvresse2001} and the gap was computed explicitly by Carlen, Carvalho, and Loss in \cite{carlen-carvalho-loss2003}. Since convergence in $L^2$ gives a crude upper bound to convergence in $L^1$, the relative entropy $\int_{S^{N-1}(\sqrt{NE})}f_t^N \log f_t^N \sigma_N(d\mathbf{v})$ was studied. The best known convergence results for the relative entropy are exponential with rate of order $1/N$. To ensure fast convergence to equilibrium in $L^1$, the authors in \cite{bonetto-loss-vaidyanathan2014} looked for systems close to equilibrium. More specifically: for a larger system of $\mathcal{N}$ particles (where $\mathcal{N} \gg N \gg 1$), they considered initial states of the form $f^\mathcal{N}_0(\mathbf{v}) = l_N(v_1, \dots, v_N) \prod_{j=N+1}^\mathcal{N}\gamma(v_j)$,
where $l_N$ is a density on $L^1(\RR^N)$, and
\begin{equation}
    \label{eq:gaussian}
    \gamma(w) = (2\pi T)^{-1/2} e^{-w^2/(2T)}
\end{equation}
is the Gaussian at a fixed temperature $T$. They argued heuristically that for the first $N$ particles in the system, this effectively works as an external thermostat. Thus, they introduced the \emph{thermostated Kac model} for a system of $N$ particles, in which, in addition to the collisions mentioned above, every particle interacts with a thermostat at a rate $\mu$. That is: when the thermostat acts on particle $i \in \{1,\ldots,N\}$, its velocity $v_i$ gets replaced by $v_i\cos\theta- w \sin\theta$, and the rest of the velocities are not affected. Here $w$ is chosen independently and randomly with density $\gamma(w)$, and $\theta$ is again chosen uniformly on $[0,2\pi)$. We will refer to this rule as the \emph{weak thermostat}. As in Kac's original model, propagation of chaos also holds in this case, as first shown in \cite{bonetto-loss-vaidyanathan2014}; see also \cite{cortez-tossounian2019} for a quantitative result.

In this paper we consider the \emph{strong thermostat}, also known as the \emph{Andersen thermostat} and as the \emph{Maxwellian thermostat}, which, when acting on particle $i$, simply replaces velocity $v_i$ by $w$, sampled with density $\gamma$. The corresponding nonlinear limit equation in this case is
\begin{equation}
\label{eq:thermostatedBoltzmannKac}
\partial_t f_t(v)
= 2\lambda \int_\RR \int_0^{2\pi} [f_t(v')f_t(v_*') - f_t(v)f_t(v_*)] \frac{d\theta}{2\pi} dv_*
+\mu[\gamma(v) - f_t(v)].
\end{equation}
We note that the strong thermostat is related to the weak one through a van Hove limit, see \cite{tossounian-vaidyanathan2015} for details.

Interactions with the (weak or strong) thermostat no longer preserve the total energy, thus the velocity distribution for the system is supported on the whole space $\RR^N$. This model is ergodic in $L^1(\RR^N)$ and has the Gaussian $\prod_{i=1}^N \gamma(v_i)$ as its unique equilibrium distribution. The density $f_t^N$ converges exponentially fast to equilibrium in relative entropy $\int_{\RR^N} f_t^N \log( f_t^N / \prod_i \gamma(v_i) ) d\mathbf{v}$ with an exponent bounded below uniformly in $N$; see \cite{bonetto-loss-vaidyanathan2014}. We mention that the argument behind the model with thermostats was made rigorous in \cite{bonetto-loss-tossounian-vaidyanathan2017} as a mean-field limit of finite-reservoir systems using the $L^2$ metric and the Fourier-based Gabetta-Toscani-Wennberg metric $d_2$. Studying the convergence to equilibrium for the finite-reservoir system is not a simple problem (see  \cite{bonetto-geisinger-loss-ried2018}).

\subsection{Global thermostats and rescaling}
\label{sec:thermostat_with_rescaling}
 The thermostats mentioned above act \emph{locally}, in the sense that their action on one particle is not connected to their action on the rest of the particles. In this work, we plan on introducing \emph{global} thermostats to Kac's model which affect the velocities of all the particles of the system in a connected way. An example of a global thermostat is the Gaussian isokinetic thermostat (see \cite{carlen-mustafa-wennberg2015}, \cite{morriss-dettmann1998}). In \cite{bonetto-carlen-esposito-lebowitz-marra2014},
this global thermostat was applied to a space-dependent particle model with an external electric field, where it prevented the energy of the system from growing to infinity and introduced non-equilibrium steady states.

Another kind of global thermostat is obtained by \emph{rescaling} the velocity of each particle after an interaction against the external source, thus keeping the energy of the system constant. Global thermostats with rescaling have been used in molecular dynamics. An example is the Berendsen thermostat \cite{berendsen-postma-gunsteren-dinola-haak1984} which was used to study particle systems that have temperature and pressure as controllable parameters, instead of the total energy and volume  (see \cite{bussi-parrinello2008} and references therein).

In this work, we will consider the Kac particle system coupled to the strong thermostat where, in addition, we allow the system to restore its total energy towards $NE$ by introducing a rescaling mechanism on this model (hence making our thermostats global). One of our motivations for introducing this model is to obtain a system that retains some of the nice equilibration features of the (locally) thermostated particle system, while keeping the energy of the system closer to its initial value, as in the original Kac model.

More specifically, one rescaling mechanism that we will consider is the following: after a strong thermostat interaction of particle $i$ against a sample $w$ taken from the Gaussian distribution \eqref{eq:gaussian}, the velocity of \emph{every particle} in the system is multiplied by the factor
\begin{equation}
\label{eq:alphaN}
    \alpha_N(v_i, w) = \sqrt{\frac{NE}{NE- v_i^2+w^2}}.
\end{equation}
This has the effect of keeping the total energy $\sum_i v_i^2$ constant and equal to $NE$ almost surely. The corresponding finite particle system is defined rigorously through its generator in \eqref{eq:masterAlpha} in the Appendix. We remark that we will mostly be working with a particle system defined using a slightly different rescaling factor, see \eqref{eq:betaN} below. In the sequel, we will use the term \emph{thermostat} to refer to the local thermostat dynamics described by \eqref{eq:thermostatedBoltzmannKac}, while the term \emph{global thermostat} will refer to the rescaling mechanism.

Note that, for $N$ large, the factor $\alpha_N(v_i, w)$ becomes approximately 1; 
making the effect of rescaling after a single thermostat interaction negligible. However, since there are order $N$ such interactions on every finite time interval, the effect of the rescaling mechanism survives as $N\to\infty$, giving rise to an additional \emph{drift} term in \eqref{eq:thermostatedBoltzmannKac}, namely  $-\mu \frac{E-T}{2E} \partial_v (vf_t(v))$. This argument will be made rigorous in Lemma \ref{lem:chaosLN}. The Kac and thermostat interactions produce the same terms as in \eqref{eq:thermostatedBoltzmannKac}. Consequently, the corresponding candidate limit equation for our thermostated Kac model with rescaling is
\begin{equation}
\label{eq:PDE}
\begin{split}
    \partial_t f_t(v)
    &= 2\lambda \int_\RR \int_0^{2\pi} [f_t(v')f_t(v_*') - f_t(v)f_t(v_*)] \frac{d\theta}{2\pi} dv_* \\
    & \quad {} + \mu [\gamma(v) - f_t(v)]
    -A \partial_v (vf_t(v)),
\end{split}
\end{equation}
for the collection $(f_t)_{t\geq 0}$ of probability densities on $\RR$, where we introduced the constant
\[
A = \mu \frac{E-T}{2E}.
\]

The present paper can be seen as a continuation of our previous work \cite{cortez-tossounian2019}, where we used similar tools to make quantitative the propagation of chaos result for the thermostated Kac model (without rescaling) in \cite{bonetto-loss-vaidyanathan2014}.
Here, the main objects of study are the thermostated finite particle system with rescaling and its corresponding kinetic equation \eqref{eq:PDE}. To the best of our knowledge, the rescaling mechanism and its associated drift term are new in the context of Kac-type kinetic models. The drift term in \eqref{eq:PDE} has the effect of taking the energy of $f_t$ towards $E$, see Theorem \ref{thm:WellPosedness2}.
We remark that, apart from the physical motivation explained above, the nonlinear equation \eqref{eq:PDE} is mathematically interesting on its own. Indeed, as we shall see, the equation is contractive and admits an equilibrium density having properties that differ from those of the Gaussian: it can exhibit heavy tails or explosion at the origin, depending on the parameters $\lambda$, $\mu$, $E$, and $T$.

\subsection{Main results}

Our main results concern the convergence properties of the nonlinear equation \eqref{eq:PDE}, and the propagation of chaos for its corresponding particle system. We will use the following distance to quantify convergence: given $\nu$ and $\tilde{\nu}$ probability measures on $\RR^k$, their 2-\emph{Wasserstein distance} is given by
\begin{equation}
\label{eq:W2}
    W_2(\nu,\tilde{\nu})
    = \left( \inf \EE\left[ \frac{1}{k} \sum_{i=1}^k |X_i - Y_i|^2 \right]\right)^{1/2},
\end{equation}
where the infimum is taken over all \emph{couplings} of $\nu$ and $\tilde{\nu}$, that is, over all random vectors $\mathbf{X} = (X_1,\ldots,X_k)$ and $\mathbf{Y} = (Y_1,\ldots,Y_k)$ defined on some common probability space, such that $\Law(\mathbf{X}) = \nu$ and $\Law(\mathbf{Y}) = \tilde{\nu}$. We follow the convention of using the  normalizing factor $\frac{1}{k}$ to simplify the dependence on dimension. The infimum in \eqref{eq:W2} is always achieved by some $(\mathbf{X},\mathbf{Y})$, and such a pair is called an \emph{optimal coupling}, see \cite{villani2002} for details.

Regarding the nonlinear equation \eqref{eq:PDE}, we are interested in its stability properties, which we study in Section \ref{sec:PDE}. We define a notion of weak solution for continuous functions of $t\geq 0$ taking values in the space of probability measures, and we prove the existence and uniqueness of a solution $(f_t)_{t\geq 0}$ with bounded energy, see Definition \ref{def:weak_solutions} and Theorem \ref{thm:WellPosedness2}. In Lemma \ref{lem:moments} we prove the stability of the $r^\text{th}$-moments of $f_t$, for all $r>2$ satisfying
\begin{equation}
	\label{eq:unif-r-moment}
	r A < 2\lambda C_r + \mu,
	\end{equation}
where $C_r := 1 - 2\int_0^{2\pi} |\!\cos\theta|^r \frac{d\theta}{2\pi}>0$ (note that there always exists such an $r>2$, as long as $2\lambda+\mu>0$). The existence of an equilibrium distribution and its main properties are stated in the following result, proven at the end of Section \ref{sec:PDE}:

\begin{thm}[equilibrium distribution]
\label{thm:continuity}
Let $2\lambda+\mu>0$. There exists a probability measure $f_\infty$ on $\RR$ with energy $E$, such that $\lim_{t\to \infty} f_t = f_\infty$ weakly, for any $(f_t)_{t\geq 0}$ weak solution of \eqref{eq:PDE} having bounded initial energy. Moreover,
\[
    W_2(f_t,f_\infty)
    \leq W_2(f_0,f_\infty) e^{- \frac{\mu T}{2E} t},
\]
and $f_\infty$ is the unique stationary weak solution of \eqref{eq:PDE}. Moreover, $f_\infty(dv)$ has a density,  denoted by $f_\infty(v)$, that has the following properties:
\begin{enumerate}
    \item $f_\infty$ is even and monotone decreasing on $(0,\infty)$.
    
    \item $\int_\RR f_\infty(v) \vert v\vert^r dv < \infty$ if and only if $r$ satisfies condition \eqref{eq:unif-r-moment}.
    
    \item $f_\infty \in C^\infty(\RR\backslash \{0\})$, and satisfies for all $v \in \RR\backslash\{0\}$:
    \begin{equation}
        \label{eq:PDEstationary-strong}
        A \partial_v (v f_\infty(v))
        = 2\lambda (B_2[f_\infty, f_\infty](v) - f_\infty(v)) + \mu [\gamma(v) - f_\infty(v)].
    \end{equation}
    
    \item
    \label{item:p-derivatives}
    Regarding the behaviour of $f_\infty$ at the origin, we have: for any integer $p\geq 0$, $f_\infty \in C^p(\RR)$ if and only if
    \begin{equation}
    \label{eq:Cp-Condition}
    \frac{T}{E} < 1 + \frac{2}{p+1}\left( \frac{\mu}{2\lambda+\mu} \right)^{-1}.
    \end{equation}
    Consequently, if $\frac{T}{E} \geq 1 + 2\left( \frac{\mu}{2\lambda+\mu} \right)^{-1}$, then $\lim_{v\to 0}f_\infty(v) = \infty$. Moreover, if $E\geq T$, then $f_\infty$ is analytic. This also holds for non-integer $p>0$, if we interpret the statement  $f_\infty \in C^p(\RR)$ as $\int_\RR \vert \hat{f}_\infty (\xi)\vert  \vert \xi\vert^p \,d\xi <\infty$, where $\hat{f}_\infty$ denotes the Fourier transform of $f_\infty$.
\end{enumerate}
\end{thm}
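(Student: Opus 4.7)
My plan is to first establish an exponential $W_2$-contraction between arbitrary weak solutions (which yields the convergence rate and existence/uniqueness of $f_\infty$), then to extract symmetry, moments, and regularity away from $0$ from the stationary equation, and finally to attack the $C^p$-classification at the origin through Fourier analysis.

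\textbf{Contraction, existence, and uniqueness.} I would construct, on a common probability space, a coupled pair $(V_t,\tilde V_t)$ with marginals given by two weak solutions $f_t,\tilde f_t$ of \eqref{eq:PDE}, the initial coupling being optimal. The coupling is driven by: common drift $\dot V=AV$; simultaneous thermostat jumps to a shared $W\sim\gamma$ at Poisson rate $\mu$; and, at Kac times (rate $2\lambda$), rotation of $(V,V_*)$ and $(\tilde V,\tilde V_*)$ by a common $\theta\sim U[0,2\pi)$, with $(V_*,\tilde V_*)$ drawn independently from an optimal $W_2$-coupling of $(f_t,\tilde f_t)$. Averaging over $\theta$ gives $\frac{d}{dt}\EE[(V_t-\tilde V_t)^2] = (2A-\mu)\,\EE[(V-\tilde V)^2] + \lambda\bigl(W_2^2(f_t,\tilde f_t)-\EE[(V-\tilde V)^2]\bigr)$; the last term is $\leq 0$ since $W_2^2\leq \EE[(V-\tilde V)^2]$, yielding
\[
    \frac{d}{dt}\EE[(V_t-\tilde V_t)^2] \leq (2A-\mu)\,\EE[(V_t-\tilde V_t)^2] = -\tfrac{\mu T}{E}\,\EE[(V_t-\tilde V_t)^2],
\]
and Gronwall, combined with the initially optimal coupling, gives $W_2(f_t,\tilde f_t)\leq W_2(f_0,\tilde f_0)\,e^{-\mu T t/(2E)}$. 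Applied to $(f_t)$ and its time shift $(f_{t+s})$ (also a weak solution), and combined with the uniform-in-$s$ second-moment bound from Lemma \ref{lem:moments} which controls $W_2(f_0,f_s)$, this yields a Cauchy net in $W_2$; the limit $f_\infty$ is stationary (passing to the limit in the weak formulation), has energy $E$ (testing the stationary equation against $v^2$), and is unique as a stationary solution by contraction.

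\textbf{Symmetry, monotonicity, moments, regularity away from $0$.} Evenness: $v\mapsto f_t(-v)$ is also a weak solution of \eqref{eq:PDE} ($\gamma$ is even and the Kac rule $v\mapsto -v$-equivariant), so uniqueness forces $f_\infty$ to be even. Monotonicity on $(0,\infty)$: rewrite \eqref{eq:PDEstationary-strong} as a first-order linear ODE for $vf_\infty(v)$ and integrate using evenness together with radial monotonicity of the sources $\gamma$ and $B_2[f_\infty,f_\infty]$. Regularity: on $\RR\setminus\{0\}$ the coefficient $Av$ in \eqref{eq:PDEstationary-strong} is non-vanishing, $\gamma\in C^\infty$, and $B_2[f_\infty,f_\infty]$ is smooth on $\RR\setminus\{0\}$ (a standard smoothing property of the Boltzmann--Kac bilinear form for $L^1$ inputs), so a bootstrap gives $f_\infty\in C^\infty(\RR\setminus\{0\})$ with \eqref{eq:PDEstationary-strong} pointwise. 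Moments: the sufficiency part of (ii) follows from Lemma \ref{lem:moments} by passing to the limit $t\to\infty$; for necessity, test \eqref{eq:PDEstationary-strong} against $|v|^r$, use integration by parts to produce the drift contribution $-rA\int|v|^r f_\infty\,dv$ and the Povzner identity $\int|v|^r B_2[f_\infty,f_\infty]\,dv = (1-C_r)\int|v|^r f_\infty\,dv + $ (lower-order terms) for the Kac contribution; finiteness of $\int|v|^rf_\infty$ then forces \eqref{eq:unif-r-moment}.

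\textbf{Behaviour at the origin via Fourier.} This is the main obstacle. Fourier-transforming \eqref{eq:PDEstationary-strong} and using Bobylev's identity for the Kac operator yields the linear first-order ODE
\[
    A\xi\,\partial_\xi \hat f_\infty(\xi) = (2\lambda+\mu)\,\hat f_\infty(\xi) - 2\lambda\!\int_0^{2\pi}\!\hat f_\infty(\xi\cos\theta)\,\hat f_\infty(\xi\sin\theta)\,\tfrac{d\theta}{2\pi} - \mu\,\hat\gamma(\xi),
\]
and the $C^p$-regularity (in the sense $\int|\xi|^p|\hat f_\infty|\,d\xi<\infty$) is governed by the algebraic decay rate of $\hat f_\infty$ at infinity. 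When $A<0$ (that is, $T>E$), I would solve the ODE via the integrating factor $|\xi|^{(2\lambda+\mu)/|A|}$ and argue that $\hat\gamma$ (Gaussian) and the bilinear Kac term $K[\hat f_\infty](\xi):=\int_0^{2\pi}\hat f_\infty(\xi\cos\theta)\hat f_\infty(\xi\sin\theta)\,d\theta/(2\pi)$ are subleading; the latter requires a self-consistent estimate using $|\hat f_\infty|\leq 1$ together with a careful treatment of the singular angles $\theta\in\{0,\pi/2,\pi,3\pi/2\}$ (where $\cos\theta$ or $\sin\theta$ vanish) to show $K[\hat f_\infty]=O(\hat f_\infty/|\xi|)$. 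This yields the sharp asymptotic $\hat f_\infty(\xi)\sim L\,|\xi|^{-(2\lambda+\mu)/|A|}$ with $L\neq 0$, so $|\xi|^p\hat f_\infty\in L^1$ iff $p+1<(2\lambda+\mu)/|A|$; upon substituting $|A|=\mu(T-E)/(2E)$ this is precisely \eqref{eq:Cp-Condition}. When $A\geq 0$ (i.e., $E\geq T$), no algebraic decay $|\xi|^{-\beta}$ can balance the leading-order terms of the ODE, forcing $\hat f_\infty$ to decay faster than any polynomial; iterating the integrating-factor representation with $\mu\hat\gamma$ as the forcing and exploiting its Gaussian tail promotes this to exponential decay, giving analyticity of $f_\infty$. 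Closing the self-consistent estimate on $K[\hat f_\infty]$ near the singular angles is the crux of the argument.
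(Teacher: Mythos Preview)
Your contraction argument and the Fourier strategy for (iv) are broadly on target and close to the paper's approach. But there are genuine gaps in your treatment of (i)--(iii), all stemming from a missing structural ingredient.

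\textbf{Density.} You never establish that $f_\infty$ has a density; you pass directly to pointwise statements about $f_\infty(v)$. A priori $f_\infty$ is only a probability measure, and the stationary equation alone does not obviously rule out, say, an atom at $0$. The paper handles this by starting the dynamics from $f_0=\gamma$ and propagating the property ``even, non-increasing on $[0,\infty)$'' through the evolution (Remark~\ref{rmk:property_P}); this gives the uniform bound $f_t(v)\leq 1/(2|v|)$, which passes to the weak limit and forces $f_\infty = r\delta_0+(1-r)\rho$ with $\rho$ a density; a short test-function computation then shows $r=0$.

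\textbf{Monotonicity.} Your plan to read off monotonicity of $f_\infty$ on $(0,\infty)$ from the ODE \eqref{eq:PDEstationary-strong}, using ``radial monotonicity of the sources $\gamma$ and $B_2[f_\infty,f_\infty]$'', is circular: monotonicity of $B_2[f_\infty,f_\infty]$ is not a general feature of the Kac bilinear form on $L^1$; it holds precisely \emph{because} $f_\infty$ itself is even and monotone (cf.\ the computation in Remark~\ref{rmk:property_P}). The paper avoids the circularity by propagating monotonicity through the dynamics from $f_0=\gamma$, as above.

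\textbf{Smoothing of $B_2$.} The assertion that $B_2[f_\infty,f_\infty]\in C^\infty(\RR\setminus\{0\})$ is a ``standard smoothing property for $L^1$ inputs'' is not justified; the Kac gain term does not enjoy $C^\infty$-smoothing on arbitrary $L^1$ data. The paper instead proves smoothness of $f_\infty$ away from $0$ by a Fourier argument specific to the stationary equation: one shows inductively that $(\xi\,d/d\xi)^k\hat f_\infty$ is continuous and vanishes at infinity for every $k$ (Lemma~\ref{lem:Fourier}(iii)), and then recovers $C^\infty(\RR\setminus\{0\})$ by integrating by parts in the Fourier inversion. The same lemma also yields $\hat f_\infty\geq 0$, which is what makes the divergence computation in (iv) (your ``$L\neq 0$'') rigorous: without non-negativity of $\hat f_\infty$ and of $\hat B_2[\hat f_\infty,\hat f_\infty]$, the integrating-factor identity does not directly force $\int|\xi|^p\hat f_\infty\,d\xi=\infty$ at the threshold.

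In short, the missing device is the propagation of property~(P) from a well-chosen initial datum through both the evolution and its Fourier transform; this single idea delivers density, monotonicity, and $\hat f_\infty\geq 0$ in one stroke, and replaces the unjustified bootstrap.
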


Note that parts (ii) and (iv) of this theorem imply that unless we have $E=T$ or $\mu = 0$ (in which case $f_\infty=\gamma$), there are two cases: either $f_\infty$ does not have finite moments of high order (case $E>T$), or $f_\infty$ cannot have too many derivatives at the origin (case $E<T$). We summarize the smoothness properties of $f_\infty$ at the origin in Figure \ref{fig:phase-diagram}.

\begin{figure}[th]
    \centering
    \includegraphics[width=0.7\textwidth]{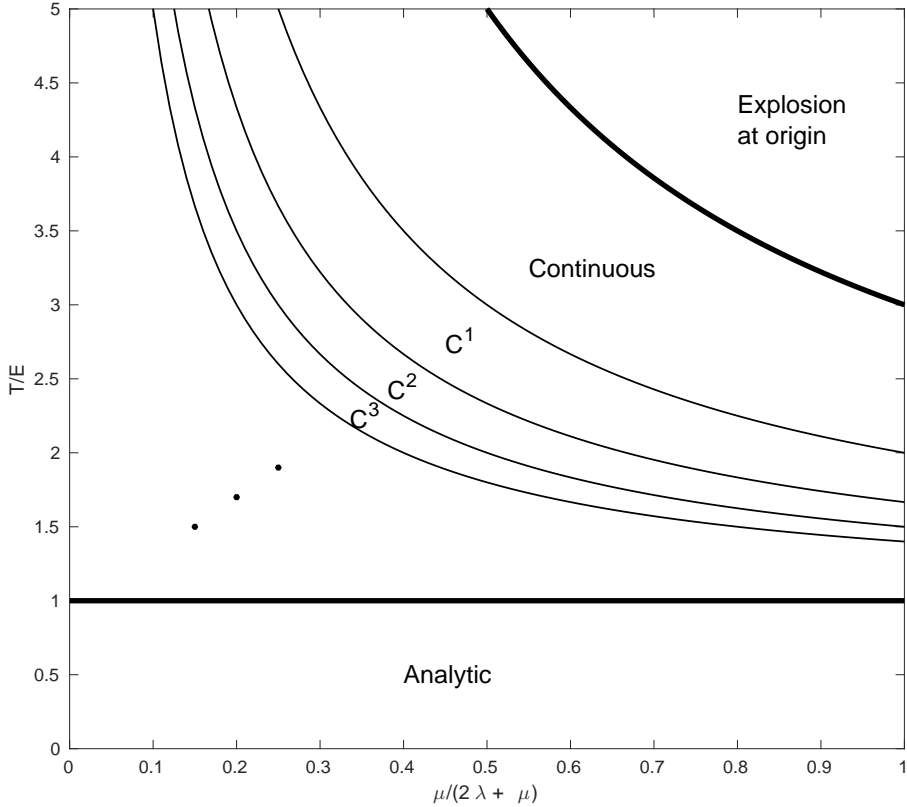}
    \caption{Phase diagram for the smoothness of $f_\infty$ at the origin.}
    \label{fig:phase-diagram}
\end{figure}

In Section \ref{sec:particle-system}, we turn to the study of the particle system associated with \eqref{eq:PDE} and its propagation of chaos property. Unfortunately, the rescaling factor $\alpha_N(v_i,w)$ in \eqref{eq:alphaN} is hard to handle mathematicaly, since it takes large values on sets where some $v_i^2$ is close to $NE$. Consequently, for the particle system with $\alpha_N$ as the rescaling factor, we could only prove propagation of chaos at the level of the generators, see Lemma \ref{lem:chaosLN}. In order to obtain a more tractable particle system, we will consider a different rescaling factor, namely 
\begin{equation}\label{eq:betaN}
\beta_N(w) = \sqrt{\frac{NE}{NE-E+w^2}},
\end{equation}
which is obtained from $\alpha_N(v_i,w)$ by replacing $v_i^2$ by its expected value $E$. We call\footnote{We do not make explicit the dependence of the particle system on the number of particles $N$. Also, we use superscripts for the particles' indexes.} $\mathbf{V}_t = (V_t^1,\ldots,V_t^N)$ the Markov process with Kac and thermostat interactions, rescaled using $\beta_N$, as described in Section \ref{sec:thermostat_with_rescaling}; see \eqref{eq:sde_ps} in Section \ref{sec:particle-system} for an explicit construction using an SDE with respect to a Poisson point measure. Even though this particle system no longer preserves the total energy exactly, the expected total energy is preserved, see Lemma \ref{lem:EV-squared}. We also prove a contraction property in Lemma \ref{lem:contraction_ps}, and deduce equilibration in Lemma \ref{lem:equilibration_ps}.

Before we state our second main result, let us recall the following characterization of chaoticity, see \cite[Proposition 2.2]{sznitman1989} for details: for fixed $t\geq 0$ the collection $(f_t^N)_{N\in\NN}$, with $f_t^N = \Law(\mathbf{V}_t)$, is $f_t$-chaotic if and only if the empirical (random) measure of the system
\[
\bar{\mathbf{V}}_t := \frac{1}{N} \sum_{i=1}^N \delta_{V_t^i}
\]
converges weakly to the (non-random) measure $f_t$.

\begin{thm}[propagation of chaos]
\label{thm:UPoC}
Fix $r>2$ such that condition \eqref{eq:unif-r-moment} is satisfied, and assume that $\int |v|^r f_0(dv) < \infty$. Then  there is a finite constant $C$ depending only on $\lambda$, $\mu$, $E$, $T$, $r$, and $\int |v|^r f_0(dv)$, such that for all $t\geq 0$ we have:
\begin{equation}
\label{eq:UPoC}
    \EE[W_2^2(\bar{\mathbf{V}}_t, f_t)]
    \leq 4 e^{-\frac{\mu T}{E} t} W_2^2(f_0^N, f_0^{\otimes N})
     + C
     \begin{cases}
        N^{-1/3}, & r>4, \\
        N^{-1/3}(\log N)^{2/3}, & r=4, \\
        N^{- \frac{r-2}{2(r-1)}}, &  r \in (2,4).
    \end{cases}
\end{equation}
\end{thm}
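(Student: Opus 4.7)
The approach is the standard coupling-plus-empirical-error strategy for propagation of chaos, following the lines of the authors' previous work \cite{cortez-tossounian2019}. The plan is to couple the particle system to a collection of $N$ independent nonlinear processes, exploit the particle-level contraction behind Lemma \ref{lem:contraction_ps}, and combine the resulting estimate with an empirical-measure convergence rate in $W_2^2$.

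\textbf{Step 1 (Coupling construction).} On a common probability space supporting the Poisson point measure driving the SDE \eqref{eq:sde_ps}, I would construct simultaneously the particle system $\mathbf{V}_t$ and an auxiliary system $\mathbf{W}_t = (W_t^1,\dots,W_t^N)$ of $N$ independent copies of the nonlinear process governed by \eqref{eq:PDE}, with common marginal law $f_t$. At $t=0$ we pick an optimal $W_2$-coupling between $\Law(\mathbf{V}_0)$ and $f_0^{\otimes N}$. Kac and thermostat events are made to occur at the same times in both systems. At a Kac event on pair $(i,j)$ with angle $\theta$, we rotate both $(V^i,V^j)$ and $(W^i,W^j)$ by $\theta$; since in the nonlinear process the true collision partner of $W^i$ should be an independent sample from $f_t$, a fresh iid partner should be introduced, producing a controllable empirical-partner discrepancy. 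At a thermostat event on particle $i$ with sample $w\sim\gamma$, we substitute both $V_{t-}^i$ and $W_{t-}^i$ by the same $w$, and apply the global rescaling $\beta_N(w)$ to $\mathbf{V}_t$ only, while $\mathbf{W}_t$ carries the continuous drift $-A\,\partial_v(v f)$.

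\textbf{Step 2 (Contraction estimate).} Set $u(t) := \EE[\frac{1}{N}\sum_i |V_t^i - W_t^i|^2]$. Computing $\frac{d}{dt}u(t)$ via the generator of the coupled process, the Kac term vanishes from the rotation coupling but leaves an empirical-partner remainder $R(t) \leq C_1\, \EE[W_2^2(\bar{\mathbf{W}}_t, f_t)]$; the thermostat event zeroes out $|V^i-W^i|^2$ at rate $\mu$; and the combined effect of the rescaling by $\beta_N(w)$ (expanded to order $1/N$, with $\EE w^2 = T$) and the drift $-A\partial_v(vf)$ on the nonlinear side is exactly the contraction provided by Lemma \ref{lem:contraction_ps}. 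Together, these give a differential inequality of the form $u'(t) \le -\frac{\mu T}{E}\, u(t) + R(t)$, whence by Grönwall
\[
u(t) \,\leq\, 2\,e^{-\frac{\mu T}{E} t}\, W_2^2(f_0^N, f_0^{\otimes N}) \;+\; C_1 \int_0^t e^{-\frac{\mu T}{E}(t-s)}\, \EE[W_2^2(\bar{\mathbf{W}}_s, f_s)]\, ds.
\]

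\textbf{Step 3 (Triangle inequality and empirical rate).} Using $W_2^2(\bar{\mathbf{V}}_t, \bar{\mathbf{W}}_t) \le \frac{1}{N}\sum_i |V_t^i - W_t^i|^2$ and the elementary inequality $(a+b)^2 \le 2a^2+2b^2$ applied to the metric $W_2$, we obtain
\[
\EE[W_2^2(\bar{\mathbf{V}}_t, f_t)] \leq 2\, u(t) + 2\, \EE[W_2^2(\bar{\mathbf{W}}_t, f_t)].
\]
Since $W_t^1,\dots,W_t^N$ are iid with law $f_t$, Fournier--Guillin-type bounds for the 1D empirical measure give $\EE[W_2^2(\bar{\mathbf{W}}_t, f_t)] \leq C\, a_N(r)$, where $a_N(r)$ is the sequence of rates appearing in \eqref{eq:UPoC}; the constant $C$ depends only on $\int |v|^r f_t(dv)$, which is uniformly bounded in $t$ by Lemma \ref{lem:moments} under condition \eqref{eq:unif-r-moment}. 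Substituting this into the estimate for $u(t)$ and into the triangle bound yields the claim.

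\textbf{Main obstacle.} The delicate point is to isolate the correct contraction rate $\mu T/E$ from the combined thermostat + rescaling + drift contribution: the rescaling by $\beta_N(w)$ acts on \emph{all} coordinates simultaneously, so a Taylor expansion of $\beta_N^2$ to order $1/N$ produces cross-terms involving $V_t^j$ for $j \neq i$, and controlling these cleanly requires the uniform $r$-th moment bounds of Lemma \ref{lem:moments} together with a careful symmetrization. A secondary difficulty is ensuring that the empirical-partner error injected at each Kac collision integrates (after Grönwall and the triangle inequality) to a term of the same order as the iid empirical error $\EE[W_2^2(\bar{\mathbf{W}}_t, f_t)]$, so that the final rate $a_N(r)$ in \eqref{eq:UPoC} is not degraded.
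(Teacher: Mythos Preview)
Your overall architecture (couple, contract, triangle) is the right one, but Step~1 contains an unresolved tension that invalidates the bound $R(t)\le C\,\EE[W_2^2(\bar{\mathbf W}_t,f_t)]$ you rely on in Step~2.

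You require the $W_t^i$ to be \emph{independent} Boltzmann processes, yet you also write ``we rotate both $(V^i,V^j)$ and $(W^i,W^j)$ by $\theta$'' at each Kac event. These are incompatible: if $W^i$ and $W^j$ jump simultaneously with the same angle, they are dependent. If instead, as you also suggest, $W^i$ collides with a fresh i.i.d.\ partner $Z_*\sim f_t$ independent of everything, then the post-collision discrepancy picks up $\frac12\EE|V_{t^-}^j-Z_*|^2$, which is of order $E$, not the empirical error $\EE[W_2^2(\bar{\mathbf W}_t,f_t)]$; so the claimed form of $R(t)$ does not follow. A telling symptom: if your scheme worked as written, Step~3 would deliver the i.i.d.\ rate $N^{-1/2}$ for $r>4$, strictly better than the $N^{-1/3}$ in the statement---an indication that an extra mechanism, costing a power of $N$, is missing from your sketch.

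The paper resolves this tension via a \emph{two-stage} coupling, following \cite{cortez-fontbona2016}. First, one couples $\mathbf V_t$ to Boltzmann processes $\mathbf Z_t=(Z_t^1,\dots,Z_t^N)$ that are \emph{allowed to be dependent}: at a Kac event $(i,j)$ both $Z^i$ and $Z^j$ jump, and the partner of $Z^i$ is taken to be $F_t^i(\mathbf Z_{t^-},\xi)$, a realization of $f_t$ \emph{optimally coupled to the empirical measure} $\bar{\mathbf Z}_t^i$ (see \eqref{eq:EW2ftZi}). With this choice the Kac remainder genuinely equals $a(t):=\EE[W_2^2(\bar{\mathbf Z}_t^1,f_t)]$, and the differential inequality for $h(t)=\EE|V_t^1-Z_t^1|^2$ carries a cross term $2\lambda\,a(t)^{1/2}h(t)^{1/2}$ (not the linear $R(t)$ you write), handled by the nonlinear Gr\"onwall inequality of Lemma~\ref{lem:Gronwall}. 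Second, since $\mathbf Z_t$ is not i.i.d., one cannot apply Fournier--Guillin to $\bar{\mathbf Z}_t$ directly; a separate decoupling step (Lemma~\ref{lem:decoupling}) gives $W_2^2\bigl(f_t^{\otimes k},\Law(Z_t^1,\dots,Z_t^k)\bigr)\le Ck/N$, and optimizing $k$ against $\varepsilon_k(f_t)$ in \eqref{eq:W2X} (Lemma~\ref{lem:decouplingEW2}) yields $a(t)\le CN^{-1/3}$ when $r>4$. This trade-off is exactly where the rate $N^{-1/3}$ comes from. Your identification of the rescaling-plus-drift bookkeeping as the delicate analytic step is accurate (it is Lemma~\ref{lem:app} in the paper), but the structural gap above is prior and more fundamental.
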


Thus, this result, proved in Section \ref{sec:particle-system}, establishes the propagation of chaos for the thermostated Kac particle system with rescaling, with an explicit uniform-in-time rate of order $N^{-1/3}$ in expected squared 2-Wasserstein distance, when $\int |v|^r f_0(dv) < \infty$ and condition \eqref{eq:unif-r-moment} is satisfied for some $r>4$. This is not so far from the optimal general rate $N^{-1/2}$, valid for the expected squared 2-Wasserstein distance between the empirical measure of an i.i.d.\ sequence and its common law, see \cite[Theorem 1]{fournier-guillin2013}.

To obtain an explicit chaos rate, one needs to ensure that the initial condition term $W_2(f_0^N, f_0^{\otimes N})^2$ converges to 0 as $N\to\infty$, hopefully with rate $N^{-1/3}$ or faster. For instance, one can simply take $f_0^N = f_0^{\otimes N}$, thus $W_2(f_0^N, f_0^{\otimes N})=0$. However, we would like Theorem \ref{thm:UPoC} to be useful when the initial distributions $f_0^N$ are supported on the sphere $S^{N-1}(\sqrt{NE})$, as it is commonly considered in the literature of Kac-type models. One method of obtaining such an $f_0$-chaotic sequence, used for instance in \cite{carlen-carvalho-leroux-loss-villani2010,carrapatoso2015}, is to \emph{condition} $f_0^{\otimes N}$ to the sphere; see for instance \cite[Theorem 1.5]{hauray-mischler2014} and \cite[Equation (5.33)]{carrapatoso-einav2013} for a specific rate.

In our next result, we provide an alternative approach, based on \emph{scaling}. Specifically: given a probability measure $f$ on $\RR$ satisfying $\int v^2 f(dv)= E$, take a random vector $\mathbf{X} = (X_1,\ldots,X_N)$ with distribution $f^{\otimes N}$, and define
\begin{equation}
\label{eq:fN}
f^N = \Law(\mathbf{Y}),
\quad \text{with $\mathbf{Y} = (Y_1,\ldots,Y_N)$ given by} \quad
Y_i = \left( \frac{E}{ \frac{1}{N}\sum_j X_j^2} \right)^{1/2} X_i
\end{equation}
on the event $\sum_i X_i^2>0$, while on the event $\sum_i X_i^2=0$ the vector $\mathbf{Y}$ is defined as some given exchangeable random vector taking values on $S^{N-1}(\sqrt{NE})$, independent of $\mathbf{X}$ (its particular distribution is irrelevant). By construction, $f^N$ is supported on the sphere $S^{N-1}(\sqrt{NE})$.

The following theorem provides a quantitative chaoticity estimate valid for general measures; its proof is given at the end of Section \ref{sec:particle-system}. To state our result, we introduce the following: given $N\in\NN$ and a probability measure $f$ on $\RR$, define
\begin{equation}
\label{eq:epsN}
\varepsilon_N(f)
= \EE W_2(\bar{\mathbf{X}}, f)^2,
\end{equation}
where $\mathbf{X} = (X_1,\ldots,X_N)$ has distribution $f^{\otimes N}$, and $\bar{\mathbf{X}} = \frac{1}{k} \sum_{i=1}^k \delta_{X^i}$ is its (random) empirical measure. The best available rates of convergence for $\varepsilon_N(f)$ are found in \cite[Theorem 1]{fournier-guillin2013}, which we present in our setting in \eqref{eq:eps}; we use these estimates in the Theorem below. We remark that the scaling procedure \eqref{eq:fN} and the main idea of the proof are already present in \cite[Lemma 25]{fournier-guillin2015} and \cite[eq.\ (20)]{cortez2016}, where explicit rates of chaoticity for the uniform measures on spheres towards the Gaussian are obtained.

\begin{thm}
\label{thm:chaoticSN}
Let $f$ be a probability measure on $\RR$ with $\int v^2 f(dv)= E$ and having finite $r^{\text{th}}$-moment for some $r>2$. Then, $(f^N)_{N\in\NN}$ given by \eqref{eq:fN} is a sequence of distributions supported on the sphere $S^{N-1}(\sqrt{NE})$ which is chaotic to $f$. Moreover, we have
\[
W_2^2(f^N, f^{\otimes N})
\leq \varepsilon_N(f).
\]
Consequently (see \eqref{eq:eps}), we obtain the following quantitative rate: there exists a constant $C$ depending only on $r$ and $\int |v|^r f(dv)$, such that
\[
W_2(f^N, f^{\otimes N})^2 \leq C
\begin{cases}
    N^{-1/2}, & r>4, \\
    N^{-1/2}\log(N), & r=4, \\
    N^{- (1-\frac{2}{r})}, &  r \in (2,4).
\end{cases}
\]
\end{thm}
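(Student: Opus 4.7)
The strategy is to use the canonical coupling provided by the scaling map itself, and to reduce the resulting transport cost to a quantity controlled by $\varepsilon_N(f)$ through a Cauchy--Schwarz argument comparing second moments. Concretely, let $\mathbf{X} = (X_1, \dots, X_N) \sim f^{\otimes N}$ and set $S_N = \frac{1}{N}\sum_{i=1}^N X_i^2$. On $\{S_N > 0\}$ define $Y_i = \sqrt{E/S_N}\, X_i$, and on $\{S_N = 0\}$ let $\mathbf{Y}$ be the independent exchangeable $S^{N-1}(\sqrt{NE})$-valued vector used in the definition of $f^N$. Then $(\mathbf{X}, \mathbf{Y})$ is a coupling of $f^{\otimes N}$ and $f^N$, and a direct computation yields
\[
\frac{1}{N} \sum_{i=1}^N (X_i - Y_i)^2 = \bigl(\sqrt{S_N} - \sqrt{E}\bigr)^2,
\]
with the same identity holding on $\{S_N = 0\}$ (where both sides equal $E$, since $X_i = 0$ and $\sum_i Y_i^2 = NE$). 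Therefore $W_2^2(f^N, f^{\otimes N}) \leq \EE[(\sqrt{S_N} - \sqrt{E})^2]$.

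The main step is the following elementary bound: for any probability measures $\mu, \nu$ on $\RR$ with finite second moments $m_2(\mu), m_2(\nu)$,
\[
W_2^2(\mu, \nu) \geq \bigl(\sqrt{m_2(\mu)} - \sqrt{m_2(\nu)}\bigr)^2.
\]
This follows by writing, for any coupling $(U,V)$ of $\mu$ and $\nu$, $\EE[(U-V)^2] = \EE[U^2] + \EE[V^2] - 2\EE[UV]$ and applying Cauchy--Schwarz to $\EE[UV]$. Taking $\mu = \bar{\mathbf{X}}$ (second moment $S_N$) and $\nu = f$ (second moment $E$) gives $W_2^2(\bar{\mathbf{X}}, f) \geq (\sqrt{S_N} - \sqrt{E})^2$ pointwise. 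Taking expectation and combining with the coupling bound above yields the desired inequality $W_2^2(f^N, f^{\otimes N}) \leq \EE[(\sqrt{S_N} - \sqrt{E})^2] \leq \varepsilon_N(f)$.

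For $f$-chaoticity of $(f^N)_{N}$, fix $k \in \NN$ and restrict the optimal coupling realizing $W_2^2(f^N, f^{\otimes N})$ to the first $k$ coordinates; by exchangeability this produces a coupling whose cost equals $\EE[(X_1 - Y_1)^2] = W_2^2(f^N, f^{\otimes N})$ in the normalization \eqref{eq:W2}, so the $k$-marginal of $f^N$ converges to $f^{\otimes k}$ in $W_2$, which implies weak convergence and hence the definition of chaoticity. Finally, the quantitative estimates are obtained by substituting the Fournier--Guillin bounds from \eqref{eq:eps} into $W_2^2(f^N, f^{\otimes N}) \leq \varepsilon_N(f)$. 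The only real content is the Cauchy--Schwarz observation of the second paragraph; everything else is bookkeeping and the application of a result already quoted in the paper.
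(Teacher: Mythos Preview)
Your proposal is correct and follows essentially the same approach as the paper: the same scaling coupling $(\mathbf{X},\mathbf{Y})$, the same Cauchy--Schwarz lower bound $W_2^2(\mu,\nu)\geq(\sqrt{m_2(\mu)}-\sqrt{m_2(\nu)})^2$ applied with $\mu=\bar{\mathbf{X}}$ and $\nu=f$, and the same appeal to \eqref{eq:eps} for the explicit rates. Your treatment of the event $\{S_N=0\}$ and the explicit deduction of chaoticity from the $W_2$ bound are slightly more detailed than the paper's version, but the argument is the same.
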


We stress the fact that the scaling \eqref{eq:fN} and the statement of Theorem \ref{thm:chaoticSN} make no assumptions about $f$, except that it has a finite $r^\text{th}$-moment for some $r>2$. In particular, $f$ is not assumed to have a density, which is required to even define the conditioned tensor product. We also note that the chaoticity estimate we provide is a ``strong'' one, in the sense that it compares the full distributions $f^N$ and $f^{\otimes N}$, instead a fixed number of marginals, or the empirical measure of a sample of $f^N$ against $f$, as in Theorem \ref{thm:UPoC}.

\subsection{Plan of the paper}

Our proofs use both analytic and probabilistic tools. The proof of the main properties of the equilibrium distribution $f_\infty$, stated in Theorem \ref{thm:continuity}, relies on the Fourier transform. To obtain our contraction and chaoticity estimates in Wasserstein distance, we will make use of probabilistic coupling arguments. Most notably, in Section \ref{sec:PDE} we will introduce the \emph{Boltzmann process},\footnote{The term ``Boltzmann process'' is typically used in the context of the Boltzmann equation. We will use the same name for our thermostated Kac model with rescaling.} a stochastic process having marginal laws equal to $f_t$, and use it throughout the rest of the paper to prove many of our convergence results.

We remark that our results also hold true in the case of the more physical weak thermostat mentioned above. Most of the the proofs can be easily adapted to this case. However, some intermediate lemmas, such as Lemma \ref{lem:app}, become quite technical. To keep the exposition simple, we chose to work with the strong thermostat.

The structure of the paper is as follows: in Section \ref{sec:PDE} we study the nonlinear equation \eqref{eq:PDE} and its properties. In Section \ref{sec:particle-system} we study the particle system and prove the propagation of chaos.
Finally, in the Conclusion we mention some open problems. The proofs of some technical lemmas are left to the Appendix.

%%%%%%%%%%%%%%%%
% PDE
%%%%%%%%%%%%%%%%
\section{Analysis of the nonlinear equation}
\label{sec:PDE}

\subsection{Well-posedness} 

In this section we study the well-posedness of \eqref{eq:PDE} and its main properties. We now introduce some useful notation. Let  $\mathcal{B}(\RR)$ denote the space of signed, finite Borel measures, and define the mapping $B_2: \mathcal{B}(\RR) \times \mathcal{B}(\RR) \rightarrow \mathcal{B}(\RR)$ via the equation
\begin{equation}\label{eq:B2}
  \int_\RR B_2[\nu, \tilde{\nu}](dv)\phi(v) = \int_\RR \int_\RR \int_0^{2\pi} \phi(v\cos\theta+ v_* \sin\theta) \frac{d\theta}{2\pi} \nu(dv)\tilde{\nu}(dv_*),
\end{equation}
for all  continuous and bounded functions $\phi$ and all $\nu, \tilde{\nu}$ in $\mathcal{B}(\RR)$. We note that $B_2[\cdot,\cdot]$ is a symmetric bilinear form on $\mathcal{B}(\RR)$ satisfying $\Vert B_2[\nu, \tilde{\nu}] \Vert \leq \Vert \nu \Vert \Vert \tilde{\nu} \Vert$, where $\Vert \cdot \Vert$ denotes the total variation norm. $B_2$ has the following monotonicity property for non-negative measures:
\[
    \nu \leq \tilde{\nu} 
    \quad \Rightarrow \quad
    B_2[\nu, \nu] \leq B_2[\tilde{\nu},\tilde{\nu}],
\]
where the inequality $\nu \leq \tilde{\nu}$ means $\nu(A) \leq \tilde{\nu}(A)$ for all measurable set $A$. If $\nu, \tilde{\nu}$ have densities $f(v),g(v) \in L^1(\RR)$, then $B_2[\nu,\tilde{\nu}]$ also has a density which we also denote by $B_2[f,g]$. In this case, $B_2[f,g]$ satisfies
 \[
    B_2[f, g](v) = \int_\RR \int_0^{2\pi} f(v')g(v_*') \frac{d\theta}{2\pi}dv_*.
 \]
Recall that $A = \mu \frac{E-T}{2E}$, $\gamma(w) = (2\pi T)^{-1/2} e^{-w^2/(2T)}$ is the thermostat distribution at temperature $T$, and $E>0$ is the average $1$-particle energy in our system. Thus, our nonlinear equation \eqref{eq:PDE} for the collection of densities $(f_t)_{t\geq 0}$ takes the form
\[
    \partial_t f_t(v)
    = 2\lambda (B_2[f_t, f_t]- f_t) + \mu [\gamma(v) - f_t(v)]  - A \partial_v (vf_t(v)),
\]
where $2\lambda$ and $\mu$ are the rates of Kac and thermostat interactions, respectively. We assume that $\int v^2 f_0(v)dv < \infty$. If $f_t$ has a density, we will also call it $f_t$, and we will use $f_t(v)dv$ and $f_t(dv)$ interchangeably (same for $\gamma(dv)$, etc.).
 
We now turn to the problem of existence and uniqueness of solutions for \eqref{eq:PDE}. Let us first write the equation in weak form. The key idea is to note that if $f_t(v)$ solves \eqref{eq:PDE}, then a formal computation shows that $e^{At} f_t(e^{At}v)$ solves an equation with no drift term. In order to provide a precise definition, for $t\geq 0$, let $B_1[t]$ be given by 
\begin{equation}
\label{eq:B1}
    B_1[t](dv)
    = e^{At} \gamma(e^{At}v) dv.
\end{equation}
Let $\P(\RR) \subseteq \mathcal{B}(\RR)$ denote the space of Borel probability measures on $\RR$, metrized with $\Vert \cdot \Vert$.

\begin{defi}[weak solution]
\label{def:weak_solutions}
We say that a collection $(f_t)_{t\geq 0} \in C([0,\infty),\P(\RR))$ is a weak solution to \eqref{eq:PDE}, if $(g_t)_{t\geq 0}$ defined by $\int \phi(v) g_t(dv) = \int \phi(e^{-At} v) f_t(dv)$ satisfies
\begin{equation}
\label{eq:PDEweak}
g_t
= f_0 + \int_0^t \left\{ 2\lambda (B_2[g_s,g_s] - g_s)
+ \mu (B_1[s] - g_s)
\right\} ds, \qquad \forall t\geq 0.
\end{equation}
\end{defi}

Note that $g$ satisfies \eqref{eq:PDEweak} if and only if 
\begin{equation}\label{eq:PDEweakNiceForm}
    g_t = e^{-Dt}f_0 + e^{-Dt} \int_0^t e^{Ds}\left\{ 2\lambda B_2[g_s,g_s] + \mu B_1[s]\right\} ds,
\end{equation}
where $D := 2\lambda + \mu$.
 We can now state and prove:

    \begin{thm}[well-posedness]
    \label{thm:WellPosedness2}
        For any $f_0 \in \P(\RR)$, there is a unique weak solution $(f_t)_{t\geq 0}$ to \eqref{eq:PDE}. If $f_0 \in L^1(\RR)$, then $f_t \in L^1(\RR)$ for all $t\geq 0$. If $f_0$ has finite $r^{\text{th}}$ moment, then so does $f_t$ for all $t$. In addition, if $\int_\RR v^2 f_0(dv)< \infty$, then the solution $f_t$ to \eqref{eq:PDEweak} satisfies
\[
    \int_\RR v^2 f_t(dv) = e^{-\frac{\mu T}{E}} \int_\RR v^2 f_0(dv) + E \left(1-e^{-\frac{\mu T}{E}}\right),
\]
for all future times. In this case, we see that
\begin{equation}\label{eq:energy.t}
    \int_\RR v^2 f_t(dv)\leq \max\left\{ E , \int_\RR v^2 f_0(dv)\right\}.
\end{equation}
    \end{thm}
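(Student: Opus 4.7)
The plan is to apply the Banach fixed point theorem to the integral equation \eqref{eq:PDEweakNiceForm}. Writing $D = 2\lambda + \mu$, I define
\[
\Phi(g)_t := e^{-Dt} f_0 + \int_0^t e^{-D(t-s)}\{2\lambda B_2[g_s, g_s] + \mu B_1[s]\}\,ds,
\]
and check that $\Phi$ preserves $C([0,\infty),\P(\RR))$: non-negativity is inherited from $f_0$, $B_2[g_s,g_s]$, and $B_1[s]$; since each has total mass $1$, $\Phi(g)_t(\RR) = e^{-Dt} + \int_0^t e^{-D(t-s)}(2\lambda+\mu)\,ds = 1$; and continuity in $t$ is immediate from dominated convergence.

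For the contraction, I work on the Banach space of continuous maps from $[0,\infty)$ to $\BR$ for which $\sup_{t\geq 0} e^{-Kt}\|g_t\| < \infty$ (for a large $K > 0$), where $\|\cdot\|$ is total variation. Bilinearity of $B_2$, together with $\|B_2[\nu,\tilde\nu]\| \leq \|\nu\|\|\tilde\nu\|$ and $\|g_s\| = \|\tilde g_s\| = 1$, yields $\|B_2[g_s,g_s] - B_2[\tilde g_s,\tilde g_s]\| \leq 2\|g_s - \tilde g_s\|$, and a direct estimate then gives
\[
\sup_{t\geq 0} e^{-Kt}\|\Phi(g)_t - \Phi(\tilde g)_t\| \leq \frac{4\lambda}{K+D} \sup_{t\geq 0} e^{-Kt}\|g_t - \tilde g_t\|,
\]
a strict contraction for $K$ large. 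Banach's theorem provides a unique fixed point in $C([0,\infty), \P(\RR))$. The $L^1$ preservation follows from Picard iteration starting from $g^{(0)}_t \equiv f_0$: at each step both $B_2[g^{(n)}_s, g^{(n)}_s]$ and $B_1[s]$ have densities, and absolute continuity is preserved under total-variation convergence.

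For moment propagation, I use the identity
\[
\int |v|^r B_2[g,g](dv) = c_r \iint (v^2+v_*^2)^{r/2} g(dv)g(dv_*),
\qquad c_r := \int_0^{2\pi}|\cos\theta|^r\,\frac{d\theta}{2\pi},
\]
which follows from writing $v\cos\theta+v_*\sin\theta = \sqrt{v^2+v_*^2}\cos(\theta-\psi(v,v_*))$ and exploiting the $2\pi$-periodicity of the integral in $\theta$. Combined with $(v^2+v_*^2)^{r/2} \leq 2^{r/2-1}(|v|^r + |v_*|^r)$ for $r \geq 2$ and the explicit formula $\int |v|^r B_1[s](dv) = e^{-rAs}\int|u|^r\gamma(u)du$, a Gr\"onwall-type argument applied to \eqref{eq:PDEweakNiceForm} yields boundedness of $\int|v|^r g_t(dv)$ on every bounded interval; this transfers to $f_t$ via $\int|v|^r f_t(dv) = e^{-rAt}\int|v|^r g_t(dv)$.

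Finally, specialising to $r=2$ gives $\int v^2 B_2[g,g](dv) = \int v^2 g(dv)$, so setting $\phi(t) := \int v^2 g_t(dv)$ and differentiating \eqref{eq:PDEweakNiceForm} produces the linear ODE $\phi'(t) = -\mu\phi(t) + \mu T e^{-2At}$, which I integrate explicitly. Using $\int v^2 f_t(dv) = e^{2At}\phi(t)$ together with the algebraic identity $\mu - 2A = \mu T/E$, the stated formula follows, and \eqref{eq:energy.t} is immediate since the right-hand side is a convex combination of $\int v^2 f_0(dv)$ and $E$. The main technical obstacle is justifying the formal differentiation that yields the ODE for $\phi$, which requires local integrability of $\phi$ on $[0,\infty)$; this is a by-product of the Gr\"onwall argument carried out at $r=2$.
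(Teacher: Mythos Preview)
Your argument is correct, but the route differs from the paper's. The paper builds the solution via Wild's monotone iteration: it defines $u_t^0 = e^{-Dt}f_0$ and $u_t^{n+1}$ by substituting $u^n$ into the right-hand side of \eqref{eq:PDEweakNiceForm}, observes that $u_t^n$ is \emph{increasing} in $n$ (using the monotonicity of $B_2$ on nonnegative measures), takes the limit in total variation, and then checks mass conservation by solving the ODE $h' = -(2\lambda+\mu)h + \mu + 2\lambda h^2$ with $h(0)=1$; uniqueness is obtained by showing that any weak solution dominates every $u_t^n$. Your Banach fixed-point approach with a weighted sup-norm is slicker for existence and uniqueness and bypasses both the monotonicity argument and the separate ODE check on the total mass. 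The price is that moment propagation is slightly less direct: in the paper's scheme $\int |v|^r u_t^n(dv)$ is increasing in $n$ and bounded by a Gr\"onwall majorant $R(t)$, so monotone convergence finishes the job; with Picard iterates you need to bound the moments of the iterates uniformly and then invoke lower semicontinuity of $\nu\mapsto\int |v|^r\nu(dv)$ under total-variation (hence weak) convergence, which your sketch implicitly does. Two small points: your convexity inequality $(v^2+v_*^2)^{r/2}\leq 2^{r/2-1}(|v|^r+|v_*|^r)$ covers only $r\geq 2$; for $r\in(0,2)$ one uses subadditivity instead, which the paper encodes in the single constant $2^{\max\{r/2,1\}}$. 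Finally, you carry out the energy computation explicitly via the ODE for $\phi$, whereas the paper's proof omits that calculation entirely.
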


\begin{proof}[Proof of Theorem \ref{thm:WellPosedness2}] 
The proof of existence and uniqueness for \eqref{eq:PDEweakNiceForm} uses the iterative construction of \cite{wild1951} (see also \cite{tanaka.s1968} and \cite{carlen-carvalho-gabetta2000}); which is standard. Here we provide a full proof for convenience of the reader. Let $\MM$ be the space of bounded, positive, Borel measures on $\RR$. Let $f_0$ be a Borel probability measure on $\RR$. Define the sub-probability measures $( u^n_t)_{n=0}^\infty$ inductively by
\begin{eqnarray}
    u^0_t & = & e^{-D t} f_0, \nonumber\\
    u^{n+1}_t & = & e^{-D t} f_0 + \int_0^t e^{-D (t-s)} \left( 2\lambda B_2[u^n_s, u^n_s] + \mu B_1[s] \right) ds \label{eq:iteration}
\end{eqnarray}
From the monotonicity properties of $B_2$, we see by induction that $u^n_t$ is continuous in $t$ for each $n$, that $u^n_t(\RR) \leq 1$, and that $(u^n_t)_n$ is increasing in $n$. Hence, for each $t$, we can define $u_t \in \MM$ as $u_t(A) = \lim_n u_t^n(A)$ for each measurable set $A$. Note that $u_t-u_t^n$ is a non-negative measure for each $t$, thus we have convergence in total variation, since
\[
    \lim_{n\rightarrow \infty} \Vert u^n_t - u_t \Vert =  \lim_{n\rightarrow \infty}  u_t(\RR) - u^n_t(\RR) = 0.
\]
This, together with the bilinearity of $B_2$ and its property that $\Vert B_2[ \nu_1, \nu_2] \Vert \leq \Vert \nu_1 \Vert \Vert \nu_2 \Vert$ for all signed measures $\nu_1, \nu_2$, implies that
\[
     \lim_{n\rightarrow \infty} \Vert B_2[u^n_t, u^n_t] - B_2[u_t, u_t] \Vert \leq \lim_{n\rightarrow \infty} \Vert u^n_t - u_t \Vert (u^n_t(\RR) + u_t(\RR)) = 0.
\]
Thus we can take the infinite $n$ limit in \eqref{eq:iteration} and establish that $u_t$ solves \eqref{eq:PDEweakNiceForm}. Being an increasing limit of continuous functions, $u: [0,\infty) \rightarrow \MM$ is lower semi-continuous, and thus measurable. Since $u_t(\RR) \leq 1$, $\forall t$, $u$ belongs to $L^\infty( [0,\infty), \MM)$. This fact, and \eqref{eq:PDEweakNiceForm} imply that $u_t$ is continuous in $t$. 

To prove that $u_t$ is a probability measure, we follow \cite{tanaka.s1968}. Using \eqref{eq:PDEweak}, we see that function $h(t)= u_t(\RR)$ is differentiable and satisfies the differential equation
\[
    h'(t) = -(2\lambda+\mu) h(t) + \mu + 2\lambda h(t)^2.
\]
Since $h(0)=1$, $h(t) \equiv 1$ for all $t$ as desired. To show the uniqueness of $u_t$, let $g_t \in C([0,\infty), \MM)$ satisfy \eqref{eq:PDEweakNiceForm}.
On one hand, $g_t\geq u^0_t$ by definition. And thus, by induction, $g_t \geq u^n_t$ a.e.\ $t$ for all $n$. Hence, we have $g_t \geq u_t$.
Since $g_t$ is continuous in $t$ by hypothesis, it must be a probability measure for all $t$ just like $u_t$. Thus, $\Vert g_t - u_t \Vert = g_t(\RR)- u_t(\RR) =0$.

We now show the propagation of being a density: let $f_0 \in L^1(\RR)$, then $u_t^n \in L^1(\RR)$ for all $\RR$ and we use the completeness of $L^1$ under the total variation norm. 
To prove the propagation of the $r^{\text{th}}$-moments, we first note that for any probability measures $\nu_1$ and $\nu_2$ with finite $r^{\text{th}}$ moments $n_r$ and $m_r$, $B_2[\nu_1, \nu_2]$ satisfies
\[
\int_\RR \vert x\vert^r B_2[\nu_1, \nu_2](dx) \leq  C_r \frac{n_r + m_r}{2}
\]
where $C_r = 2^{\max\{\frac{r}{2},1\}}\int_0^{2\pi} \vert \cos\theta\vert^r \frac{d\theta}{2\pi}$.
If $f_0$ has a finite $r^{\text{th}}$ moment for some $r>0$ then, by induction, we see that, for each $t$, $(\int_\RR u_t^n(dv) \vert v\vert^r)_n$ is finite, monotone increasing, and bounded above by the solution $R(t)$ to the following integral equation:
\[
    R(t) = e^{-(2\lambda + \mu) t} \int_\RR \vert v\vert^r f_0(dv) + C_r\int_0^t e^{-(2\lambda+\mu)(t-s)} (2\lambda+ \frac{\mu}{2})R(s) ds + C_r \int_\RR \vert w\vert^r \gamma(dw).
\]
Here $C_r$ is as above. $R(t)$ is finite due to Gronwall's inequality. The monotone convergence theorem implies that $R(t)$ controls the $r^{\text{th}}$ moment of $f_t$.
\end{proof}

\begin{rmk} \label{rmk:property_P}
The proof above can be used to show the following result which will be useful in proving Theorem \ref{thm:continuity}: if the initial density $f_0$ is even and monotone non-increasing on $[0,\infty)$, then $f_t$ also has this properties, for all $t$. To justify this, it is enough to show that being even and monotone on $[0, \infty)$ is preserved under the map $g \mapsto B_2[g,g]$. Indeed, this fact allows us to show inductively that $u_t^{n}$ is even and monotone on $[0, \infty)$ for all $n$, which then passes to the limit in $n$. Let us show the missing detail: fix $g \in L^1(\RR)$ nonnegative, even, and monotone on $[0,\infty)$. Changing the parameter $\theta$, one can write the post-collisional velocities as $v' = \sqrt{v^2+v_*^2} \cos\theta$ and $v_*' = \sqrt{v^2+v_*^2} \sin\theta$; thus, for $0\leq v \leq w$ and using the evenness of $g$, we have
\begin{align*}
    B_2[g,g](v)
    &= \int_\RR \int_0^{2\pi} g\left(\sqrt{v^2+v_*^2} ~|\! \cos\theta|\right) g\left(\sqrt{v^2+v_*^2} ~|\! \sin\theta|\right) \frac{d\theta}{2\pi} dv_* \\
    &\geq \int_\RR \int_0^{2\pi} g\left(\sqrt{w^2+v_*^2} ~|\! \cos\theta|\right) g\left(\sqrt{w^2+v_*^2} ~|\! \sin\theta|\right) \frac{d\theta}{2\pi} dv_* \\
    &= B_2[g,g](w),
\end{align*}
that is, $B_2[g,g](v)$ is non-increasing on $[0,\infty)$ when $g$ is.
\end{rmk}

\subsection{The Boltzmann process}

For a given $f_0 \in \P(\RR)$, let $(f_t)_{t\geq 0} \in C([0,\infty),\P(\RR))$ be the unique weak solution of \eqref{eq:PDE} given by Theorem \ref{thm:WellPosedness2}. We will now construct a stochastic process $(Z_t)_{t\geq 0}$, called the \emph{Boltzmann process}, such that $\Law(Z_t) = f_t$ for all $t\geq 0$. This process is the probabilistic counterpart of \eqref{eq:PDE}, and it represents the trajectory of a single particle immersed in the infinite population. It was first introduced by Tanaka \cite{tanaka1978} in the context of the Boltzmann equation for Maxwell molecules.

Consider a Poisson point measure $\P(dt,d\theta,dz)$ on $[0,\infty) \times [0,2\pi) \times \RR $ with intensity $2\lambda dt \frac{d\theta}{2\pi} f_t(dz)$, and an independent Poisson point measure $\Q(dt,dw)$ on $[0,\infty) \times \RR$ with intensity $\mu dt \gamma(dw)$. Consider also a random variable $Z_0$ with law $f_0$, independent of $\P$ and $\Q$. The process $Z_t$ is defined as the unique solution, starting from $Z_0$, to the stochastic differential equation
\begin{equation}
\label{eq:SDE}
\begin{split}
dZ_t
&= \int_0^{2\pi} \int_\RR [Z_{t^\-} \cos\theta - z\sin\theta - Z_{t^\-}] \P(dt,d\theta,dz) \\
& \quad {} + \int_\RR [w - Z_{t^\-}] \Q(dt,dw)
  + A Z_t dt.
\end{split}
\end{equation}

Strong existence and uniqueness of solutions for this SDE is straightforward, since the rates of $\P$ and $\Q$ are finite on bounded time intervals. To show that $\Law(Z_t) = f_t$, the argument is classical: one first shows that $\ell_t := \Law(Z_t)$ solves a linearized weak version of \eqref{eq:PDE}, namely
\begin{align*}
    \partial_t \int_\RR  \phi(v) \ell_t(dv)
    &=  2 \lambda \int_\RR \phi(v) (B_2[\ell_t, f_t] - \ell_t)(dv) \\
    & \quad {} + \mu \int_\RR \phi(v)( \gamma - \ell_t) (dv)
    + A \int_\RR v \phi'(v) \ell_t(dv)
\end{align*}
for all bounded and continuous $\phi$ for which $v\phi'(v)$ is bounded and continuous. This can be seen to have a unique solution in a weak sense similar to \eqref{eq:PDEweakNiceForm}, because: $\nu\mapsto B_2[\nu,f_t]$ is non-expanding in total variation for all $t$. Since $f_t$ is also a solution of this linearized version, we must have that $\ell_t = f_t$.

Using the Boltzmann process \eqref{eq:SDE} one can easily study the evolution of moments for the kinetic equation \eqref{eq:PDE}. As the next lemma shows, when $T \geq E$ (negative drift term in \eqref{eq:SDE}), every moment is propagated uniformly in time; otherwise, only some moments propagate:

\begin{lem}[propagation of moments]
    \label{lem:moments}
	Let $r>2$, assume that $\int |v|^r f_0(dv) < \infty$, and let $(f_t)_{t\geq 0}$ be the weak solution to \eqref{eq:PDE}. Then $\sup_{t\geq 0} \int |v|^r f_t(dv) < \infty$ if and only if $r$ satisfies condition \eqref{eq:unif-r-moment}.
\end{lem}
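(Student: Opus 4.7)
The plan is to derive and analyze a differential inequality for $M_r(t) := \int_\RR |v|^r f_t(dv)$ via the Boltzmann process $(Z_t)_{t\geq 0}$ defined in \eqref{eq:SDE}. Applying the generator of $Z_t$ to $\phi(v) = |v|^r$ and using the identity $\int_0^{2\pi} |a\cos\theta - b\sin\theta|^r \frac{d\theta}{2\pi} = c_r (a^2+b^2)^{r/2}$ with $c_r := \int_0^{2\pi}|\cos\theta|^r \frac{d\theta}{2\pi}$ (so that $C_r = 1 - 2c_r$), together with the fact that the partner variable in \eqref{eq:SDE} is an independent sample of law $f_t$, will yield
\begin{equation*}
    M_r'(t) = (rA - 2\lambda - \mu)\, M_r(t) + 2\lambda c_r \,\EE\bigl[(Z_t^2 + \tilde Z_t^2)^{r/2}\bigr] + \mu \gamma_r,
\end{equation*}
where $\tilde Z_t$ denotes an independent copy of $Z_t$ and $\gamma_r := \int |w|^r \gamma(dw) < \infty$. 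Theorem \ref{thm:WellPosedness2} ensures $M_r(t) < \infty$ for each fixed $t$, legitimizing this identity after a routine truncation of the unbounded test function $\phi$.

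For the ``if'' direction I will use a Povzner-type bound: for $r > 2$ there exists $K_r > 0$ such that
\begin{equation*}
    (a^2+b^2)^{r/2} \leq |a|^r + |b|^r + K_r \bigl(a^2 |b|^{r-2} + |a|^{r-2} b^2\bigr),
\end{equation*}
which follows by combining $(a^2+b^2)^{r/2} - |a|^r \leq \tfrac{r}{2} b^2 (a^2+b^2)^{r/2-1}$ with the elementary estimate $(a^2+b^2)^{r/2-1} \leq \max(1, 2^{r/2-2})(|a|^{r-2} + |b|^{r-2})$ (Jensen if $r \geq 4$, subadditivity if $2 < r \leq 4$) and then symmetrizing in $a, b$. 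Taking expectations and using the independence of $Z_t$ and $\tilde Z_t$ gives $\EE[(Z_t^2+\tilde Z_t^2)^{r/2}] \leq 2 M_r(t) + 2 K_r M_2(t) M_{r-2}(t)$, so that the above identity reduces to
\begin{equation*}
    M_r'(t) \leq -(2\lambda C_r + \mu - rA)\, M_r(t) + 4 \lambda c_r K_r M_2(t) M_{r-2}(t) + \mu \gamma_r.
\end{equation*}
Theorem \ref{thm:WellPosedness2} bounds $M_2(t)$ uniformly in $t$, and Hölder's inequality provides $M_{r-2}(t) \leq M_r(t)^{(r-2)/r}$; Young's inequality then absorbs this sub-linear factor into $\epsilon M_r(t) + C_\epsilon$. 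Choosing $\epsilon$ small enough so that the net coefficient of $M_r(t)$ remains strictly negative (using \eqref{eq:unif-r-moment}) produces $M_r'(t) \leq -\kappa M_r(t) + C$ with $\kappa, C > 0$, and Gronwall's inequality concludes $\sup_{t\geq 0} M_r(t) < \infty$.

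For the ``only if'' direction, the pointwise lower bound $(a^2+b^2)^{r/2} \geq |a|^r + |b|^r$ for $r \geq 2$ (from $(1+t)^{r/2} \geq 1 + t^{r/2}$ for $t \geq 0$) gives $\EE[(Z_t^2+\tilde Z_t^2)^{r/2}] \geq 2 M_r(t)$, hence
\begin{equation*}
    M_r'(t) \geq (rA - 2\lambda C_r - \mu)\, M_r(t) + \mu \gamma_r.
\end{equation*}
When \eqref{eq:unif-r-moment} fails, the coefficient of $M_r(t)$ is non-negative, and since $\mu\gamma_r > 0$ whenever $\mu > 0$, this forces $M_r(t) \geq M_r(0) + \mu \gamma_r t \to \infty$. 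Note that $\mu = 0$ forces $A = 0$, in which case \eqref{eq:unif-r-moment} holds automatically for all $r>2$, so this degenerate case does not arise. The main technical point is the rigorous derivation of the generator identity for the unbounded test function $|v|^r$, handled by approximating with bounded cut-offs $\phi_n \uparrow |v|^r$ and invoking monotone convergence together with the time-local finiteness of $M_r(t)$ from Theorem \ref{thm:WellPosedness2}.
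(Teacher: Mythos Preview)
Your proof is correct and follows essentially the same strategy as the paper's: compute the time derivative of $M_r(t)=\EE|Z_t|^r$ via the Boltzmann process, bound the Kac collision term above and below, and conclude with a Gronwall-type argument. The only (minor) difference is algebraic packaging: you first evaluate the $\theta$-integral exactly as $c_r(a^2+b^2)^{r/2}$ and then apply a Povzner-type bound, producing a cross term $M_2(t)M_{r-2}(t)$, whereas the paper bounds $|Z_t\cos\theta - z\sin\theta|^r$ pointwise via $(a+b)^r\le a^r+b^r+2^{r-1}(ab^{r-1}+ba^{r-1})$ before integrating in $\theta$, yielding a cross term of the form $\EE|Z_t|^{r-1}\cdot\EE|z|$; both sublinear terms are then handled by Jensen/H\"older and the uniform energy bound. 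Your explicit treatment of the degenerate case $\mu=0$ (where $A=0$ and \eqref{eq:unif-r-moment} holds automatically) is a small clarification the paper omits.
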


\begin{proof}
	Call $h(t) = \EE |Z_t|^r = \int |v|^r f_t(dv)$, since $\Law(Z_t) = f_t$. Using It\^o calculus for jump processes and the inequality $(a+b)^r \leq a^r + b^r + 2^{r-1}(ab^{r-1} + ba^{r-1})$ for $a,b\geq 0$, from \eqref{eq:SDE} we deduce that for almost all $t\geq 0$:
	\begin{align*}
	\partial_t h(t)
	&= 2\lambda \EE \int_0^{2\pi} \frac{d\theta}{2\pi} \int_\RR f_t(dz) [ |Z_t\cos\theta - z\sin\theta|^r - |Z_t|^r ] \\
	& \quad {} + \mu \EE \int_\RR [|w|^r - |Z_t|^r]\gamma(dw)
	+ rA \EE |Z_t|^r \\
	& \leq A_r h(t)  + B_r \EE\int_\RR |Z_t|^{r-1} |z| f_t(dz) + \tilde{B}_r,
	\end{align*}
	where $A_r = -2\lambda C_r - \mu + rA$, and $B_r,\tilde{B}_r$ are some constants depending on $r$. By Jensen's inequality, we have $\EE|Z_t|^{r-1} \leq h(t)^{1-1/r}$, and also $\int |z| f_t(dz) \leq \max\{E, \int f_0(dv) v^2\}^{1/2}$, because of equation \eqref{eq:energy.t}. Thus, we obtain
	\[
	A_r h(t) +  \tilde{B}_r
	\leq \partial_t h(t)
	\leq A_r h(t) + \max\left\{E, \int f_0(dv) v^2 \right\}^{1/2}B_r h(t)^{1-1/r} + \tilde{B}_r,
	\]
	where the first inequality is deduced similarly, using that $a^r + b^r \leq (a+b)^r$ for all $a,b\geq 0$. From these inequalities, the conclusion follows.
\end{proof}

We introduce the Boltzmann process \eqref{eq:SDE} since it allows us to use coupling arguments for obtaining convergence results in 2-Wasserstein distance, such as contraction and propagation of chaos. Regarding the former, we have:

\begin{lem}[contraction]
	\label{lem:contraction}
	Let $(f_t)_{t\geq 0}$ and $(\tilde{f}_t)_{t\geq 0}$ be two weak solutions to \eqref{eq:PDE} starting from possibly different initial conditions $f_0$ and $\tilde{f}_0$. Then
	\[
	W_2(f_t,\tilde{f}_t) \leq W_2(f_0,\tilde{f}_0)
	        e^{- \frac{\mu T}{2E} t }.
	\]
\end{lem}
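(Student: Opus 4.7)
The plan is to prove contraction via a probabilistic coupling of two Boltzmann processes. I would construct a pair $(Z_t, \tilde{Z}_t)$ with $\Law(Z_t) = f_t$ and $\Law(\tilde{Z}_t) = \tilde{f}_t$, driven by carefully chosen common randomness so that $D_t := Z_t - \tilde{Z}_t$ contracts in mean square. Concretely, choose a measurable family $(\pi_t)_{t\geq 0}$ of optimal couplings of $f_t$ and $\tilde{f}_t$, and define a Poisson point measure $\mathcal{N}(dt,d\theta,dz,d\tilde{z})$ on $[0,\infty)\times[0,2\pi)\times\RR^2$ with intensity $2\lambda\, dt\,\tfrac{d\theta}{2\pi}\,\pi_t(dz,d\tilde{z})$, together with the same thermostat measure $\Q$ as in \eqref{eq:SDE}. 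Starting from an optimal coupling $(Z_0,\tilde{Z}_0)$ of $f_0$ and $\tilde{f}_0$, solve the pair of SDEs in which a Kac atom of $\mathcal{N}$ sends $(Z,\tilde{Z})\mapsto(Z\cos\theta-z\sin\theta,\;\tilde{Z}\cos\theta-\tilde{z}\sin\theta)$, a thermostat atom of $\Q$ sends both coordinates to the same Gaussian sample $w$, and both coordinates have drift $A\cdot(\text{state})$. Since the $z$- and $\tilde{z}$-marginals of $\pi_t$ are $f_t$ and $\tilde{f}_t$, the marginal SDEs for $Z$ and $\tilde{Z}$ are copies of \eqref{eq:SDE}; by uniqueness of the linearized evolution (as used in the construction of the Boltzmann process) one has $\Law(Z_t)=f_t$ and $\Law(\tilde{Z}_t)=\tilde{f}_t$.

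Next I would apply It\^o's formula to $D_t^2$. At a Kac atom, $D$ becomes $D_{t^-}\cos\theta-(z-\tilde{z})\sin\theta$, so averaging over $\theta\in[0,2\pi)$ (the cross term vanishes because $\int_0^{2\pi}\cos\theta\sin\theta\,d\theta=0$, while $\int_0^{2\pi}\cos^2\theta\,\tfrac{d\theta}{2\pi}=\int_0^{2\pi}\sin^2\theta\,\tfrac{d\theta}{2\pi}=\tfrac12$) gives a contribution of $\lambda\bigl[-D_{t^-}^2+(z-\tilde{z})^2\bigr]$ per unit time. Integrating against $\pi_t$ and using that $\pi_t$ is optimal yields $-\lambda\,\EE D_t^2 + \lambda\,W_2^2(f_t,\tilde{f}_t)$. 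A thermostat atom sends $D$ to $0$, contributing $-\mu\,\EE D_t^2$; and the drift gives $+2A\,\EE D_t^2$. Combining, and using the elementary coupling bound $W_2^2(f_t,\tilde{f}_t)\leq \EE D_t^2$,
\[
\frac{d}{dt}\EE D_t^2
\;\leq\; (2A-\mu)\,\EE D_t^2
\;=\; -\frac{\mu T}{E}\,\EE D_t^2,
\]
where I used $2A-\mu=\mu(E-T)/E-\mu=-\mu T/E$. Gr\"onwall's lemma then gives $\EE D_t^2\leq W_2^2(f_0,\tilde{f}_0)\,e^{-\mu T t/E}$, and applying $W_2^2(f_t,\tilde{f}_t)\leq\EE D_t^2$ and taking square roots produces the stated inequality.

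The main obstacle is the rigorous setup of the coupling, rather than the computation: one must select the optimal couplings $\pi_t$ in a $t$-measurable fashion (a standard application of a measurable selection theorem in Wasserstein spaces, available since $t\mapsto(f_t,\tilde{f}_t)$ is continuous) and then verify that the Poisson-driven SDEs above admit a strong solution with the correct marginal laws. Once these points are in place, the It\^o expansion and the Gr\"onwall step are routine.
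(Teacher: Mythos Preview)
Your proposal is correct and follows essentially the same approach as the paper: construct two Boltzmann processes coupled through a joint Poisson measure with intensity built from optimal couplings $\pi_t$ of $(f_t,\tilde f_t)$, share the thermostat measure $\Q$ and the drift, then apply It\^o's formula to $(Z_t-\tilde Z_t)^2$ and use $W_2^2(f_t,\tilde f_t)\leq \EE D_t^2$ to close a Gr\"onwall inequality with rate $-\mu T/E$. If anything, you are slightly more explicit than the paper about the measurable-selection issue for $t\mapsto \pi_t$, which the paper handles implicitly.
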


\begin{proof}
	For all $t\geq 0$, let $F_t$ be an optimal coupling between $f_t$ and $\tilde{f}_t$, that is, $F_t$ is a probability measure on $\RR\times\RR$ such that $\int (z-\tilde{z})^2 F_t(dz,d\tilde{z}) = W_2(f_t,\tilde{f}_t)^2$. Let $\mathcal{S}(dt,d\theta,dz,d\tilde{z})$ be a Poisson point measure on $[0,\infty)\times[0,2\pi)\times\RR\times\RR$ with intensity $2\lambda dt \frac{d\theta}{2\pi}F_t(dz,d\tilde{z})$, and define $\P(dt,d\theta,dz) = \mathcal{S}(dt,d\theta,dz,\RR)$ and $\tilde{\P}(dt,d\theta,d\tilde{z}) = \mathcal{S}(dt,d\theta,\RR,d\tilde{z})$. In words, $\P$ and $\tilde{\P}$ are Poisson point measures, with intensities $2\lambda dt \frac{d\theta}{2\pi} f_t(dz)$ and $2\lambda dt \frac{d\theta}{2\pi} \tilde{f}_t(d\tilde{z})$ respectively, which have the same atoms in the $t$ and $\theta$ variables, and which are optimally-coupled realizations of $f_t$ and $\tilde{f}_t$ on the $z$ and $\tilde{z}$ variables. Also, let $\Q(dt,dw)$ be a Poisson point measure with intensity $\mu dt \gamma(dw)$ that is independent of $\mathcal{S}$, and set $\tilde{\Q} = \Q$. Let also $(Z_0, \tilde{Z}_0)$ be a realization of $F_0$, independent of everything else; in particular we have $\EE[(Z_0-\tilde{Z}_0)^2] = W_2(f_0,\tilde{f}_0)^2$.
		
	Let $Z_t$ and $\tilde{Z}_t$ be the solutions to the SDE \eqref{eq:SDE} with respect to $(\P,\Q)$ and $(\tilde{\P},\tilde{\Q})$, respectively, thus $\Law(Z_t) = f_t$ and $\Law(\tilde{Z}_t) = \tilde{f}_t$. Consequently, we have $W_2(f_t,\tilde{f}_t)^2 \leq \EE[(Z_t - \tilde{Z}_t)^2] =: h(t)$.
	Using It\^o calculus, we have:
	\begin{align*}
	    \partial_t h(t)
	    &= 2\lambda \EE \int_0^{2\pi} \frac{d\theta}{2\pi} \int_{\RR\times\RR} F_t(dz,d\tilde{z}) [(Z_t\cos\theta - z\sin\theta - \tilde{Z}_t\cos\theta + \tilde{z}\sin\theta)^2 - (Z_t-\tilde{Z}_t)^2] \\
	    & \quad {} + \mu \EE \int_\RR [(w-w)^2 - (Z_t-\tilde{Z}_t)^2] \gamma(dw)
	    + 2\mu \frac{E-T}{2E} \EE (Z_t-\tilde{Z}_t)^2 \\
	    &= 2\lambda \EE \int_0^{2\pi} \frac{d\theta}{2\pi} \int_{\RR\times\RR} F_t(dz,d\tilde{z}) [(\cos^2\theta - 1)(Z_t-\tilde{Z}_t)^2 + (z-\tilde{z})^2\sin^2\theta ]
	    - \frac{\mu T}{E} h(t),
	\end{align*}
	where in the last step the cross term vanished because $\int_0^{2\pi} \cos\theta\sin\theta d\theta = 0$. Since $\int(z-\tilde{z})^2 F_t(dz,d\tilde{z}) = W_2(f_t,\tilde{f}_t)^2 \leq h(t)$, the integral in the last line is bounded above by 0. We thus obtain $\partial_t h(t) \leq - \frac{\mu T}{E} h(t)$, which yields the result.
\end{proof}

\begin{rmk}
	We can also consider the following parametrization of the Kac collisions, introduced by Hauray \cite{hauray2016}: when particles with velocities $v$ and $v_*$ interact, the new velocity for $v$ is $v' = \sqrt{v^2+v_*^2} \cos \theta$. Using this parametrization and the case $\int v^2 f_0(dv) = E = \int v^2 \tilde{f}_0(dv)$, a similar coupling argument yields the following contraction result in $W_4$:
	\begin{align*}
	W_4(f_t,\tilde{f}_t)^4
	&\leq \EE[(Z_0-\tilde{Z}_0)^4] \exp\left( -\left[2\lambda + \frac{2\mu T}{E}-\mu \right] t \right) \\
	& \quad {} + \EE[(Z_0^2-\tilde{Z}_0^2)^2] \exp\left( -\left[\frac{\lambda}{2} + \frac{2\mu T}{E}-\mu \right] t \right).
	\end{align*}
\end{rmk}

Now we show the existence of a steady state $f_\infty$ for \eqref{eq:PDE} and its main properties. Thanks to the contraction property of the solutions in Wasserstein metric (Lemma \ref{lem:contraction}), we can prove:

\begin{lem}[equilibration]
\label{lem:equilibrium}
There exists a probability measure $f_\infty$ on $\RR$ with energy $E$, such that $\lim_{t\to \infty} f_t = f_\infty$ weakly, for any $(f_t)_{t\geq 0}$ weak solution of \eqref{eq:PDE} having bounded initial energy. Moreover,
\[
    W_2(f_t,f_\infty)
    \leq W_2(f_0,f_\infty) e^{-\frac{\mu T}{2E} t},
\]
$f_\infty$ is the unique stationary weak solution of \eqref{eq:PDE}, and is also the unique solution to
\begin{equation}
\label{eq:PDEstationary}
- A \int_\RR \phi'(v) v f_\infty(dv)
= \int_\RR \phi(v) \{ 2\lambda(B_2[f_\infty,f_\infty]-f_\infty)
    +\mu (\gamma(v) - f_\infty)\}dv
\end{equation}
for every bounded test function $\phi$ on $\RR$ with bounded and continuous first derivatives.
\end{lem}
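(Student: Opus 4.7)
My plan is to leverage the contraction estimate in Lemma \ref{lem:contraction} to produce a $W_2$-Cauchy family $(f_t)$, identify the limit $f_\infty$ via the semigroup structure, and finally verify that it satisfies the stationary weak equation \eqref{eq:PDEstationary}.

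\textbf{Step 1 (Cauchy property and identification of the limit).} Applying Lemma \ref{lem:contraction} to the solution $(f_t)_{t\geq 0}$ and to the time-shifted solution $(f_{t+s})_{t\geq 0}$, viewed as the weak solution starting from $f_s$, gives
\[
W_2(f_t, f_{t+s}) \leq W_2(f_0, f_s)\, e^{-\frac{\mu T}{2E} t}.
\]
The moment bound \eqref{eq:energy.t} yields $\sup_{s\geq 0} \int v^2 f_s(dv) \leq \max\{E, \int v^2 f_0(dv)\}$, so a trivial independence coupling bounds $W_2(f_0, f_s)$ uniformly in $s$. Hence $(f_t)$ is $W_2$-Cauchy in the complete space $(\P_2(\RR), W_2)$ and converges to some $f_\infty$. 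Applying Lemma \ref{lem:contraction} to two solutions with different initial data and letting $t\to\infty$ shows the limit is independent of $f_0$. Since $W_2$-convergence implies convergence of second moments, and $\int v^2 f_t(dv) \to E$ by Theorem \ref{thm:WellPosedness2}, we conclude $\int v^2 f_\infty(dv) = E$; weak convergence is a consequence of $W_2$-convergence.

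\textbf{Step 2 (Stationarity, rate, and uniqueness).} The flow $\Phi_t : f_0 \mapsto f_t$ is a $W_2$-continuous semigroup, continuity being direct from Lemma \ref{lem:contraction}. Combining these two properties with the universality established in Step 1 gives
\[
\Phi_t f_\infty = \Phi_t \bigl( \lim_{s\to\infty} \Phi_s f_0 \bigr) = \lim_{s\to\infty} \Phi_{t+s} f_0 = f_\infty,
\]
so $f_\infty$ is stationary. The exponential rate $W_2(f_t, f_\infty) \leq W_2(f_0, f_\infty) e^{-\mu T t/(2E)}$ now follows by applying Lemma \ref{lem:contraction} once more, this time comparing $(f_t)$ with the constant solution $(f_\infty)$. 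Uniqueness among stationary weak solutions with finite energy is immediate by the same argument: any other such $\tilde f_\infty$ satisfies $W_2(\tilde f_\infty, f_\infty) \leq W_2(\tilde f_\infty, f_\infty) e^{-\mu T t/(2E)}$ for all $t$, forcing equality.

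\textbf{Step 3 (Derivation of the stationary weak equation).} This step is the main obstacle, because Definition \ref{def:weak_solutions} characterizes weak solutions through the rescaled family $g_t$ satisfying \eqref{eq:PDEweak}, rather than $f_t$ directly; moreover, for the stationary $f_t \equiv f_\infty$, the associated $g_t$ is the push-forward of $f_\infty$ under $v \mapsto e^{-At} v$ and genuinely depends on $t$. I would test \eqref{eq:PDEweak} against a bounded $\phi$ with bounded continuous derivative and differentiate both sides at $t=0$. On the left-hand side, dominated convergence (justified by $\int |v|\, f_\infty(dv) < \infty$ and the bound $|Av e^{-At}\phi'(e^{-At}v)| \leq C|v|$ uniformly on compact time intervals) gives
\[
\partial_t \int \phi(e^{-At} v)\, f_\infty(dv) \Big|_{t=0} = -A \int v\, \phi'(v)\, f_\infty(dv).
\]
On the right-hand side, continuity in total variation of $s\mapsto g_s$ and $s\mapsto B_1[s]$, together with the bilinearity and continuity of $B_2$, shows that the integrand in \eqref{eq:PDEweak} is continuous in $s$, so its derivative at $s=0$ equals the integrand evaluated at $s=0$, namely $2\lambda \int \phi(B_2[f_\infty, f_\infty] - f_\infty)\, dv + \mu \int \phi (\gamma - f_\infty)\, dv$. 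Equating the two expressions yields \eqref{eq:PDEstationary}. The uniqueness of solutions to \eqref{eq:PDEstationary} reduces to Step 2, since any such solution is, by reversing this computation, a stationary weak solution of \eqref{eq:PDE}.
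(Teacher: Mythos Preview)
Your proposal is correct and follows essentially the same approach as the paper's proof. The only cosmetic differences are that the paper establishes stationarity by first deriving the rate (letting $s\to\infty$ in the Cauchy estimate) and then specializing to $f_0=f_\infty$, rather than via your semigroup-continuity argument, and that it is slightly more explicit about the density argument needed when reversing Step~3 (since \eqref{eq:PDEstationary} is only tested against bounded $C^1$ functions while \eqref{eq:PDEweak} is an equality of measures).
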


\begin{proof}
For any $t,s\geq 0$, note that $f_{t+s}$ can be seen as the weak solution to \eqref{eq:PDE} at time $t$, starting from $f_s$. Hence, using Lemma \ref{lem:contraction} and the fact that the energy is always bounded by $\max\{ E, \int f_0(dv) v^2\}$, we have
\begin{equation}
\label{eq:W2fts}
    W_2(f_t,f_{t+s})
    \leq  W_2(f_0,f_s) e^{- \frac{\mu T}{2E} t}
    \leq 2 \max\left\{ E, \int v^2 f_0(dv) \right\}^{1/2} e^{-\frac{\mu T}{2E} t}.
\end{equation}
This shows that for any sequence $(t_n)_{n\in\NN}$ with $t_n \to \infty$, the collection $(f_{t_n})_{n\in\NN}$ is a Cauchy sequence with respect to $W_2$, thus it converges to some limit distribution $f_\infty$ (see \cite[Theorem 6.18]{villani2009} for completeness in $W_2$), which is the same for every such $(t_n)_{n\in\NN}$. Thus $(f_t)_{t\geq 0}$ itself converges to $f_\infty$ in $W_2$. Similarly, if $(\tilde{f}_t)_{t\geq 0}$ is the weak solution to \eqref{eq:PDE} starting from $\tilde{f}_0$, then it converges in $W_2$ to some $\tilde{f}_\infty$. Thus, for all $t\geq 0$ we have
\[
W_2(f_\infty, \tilde{f}_\infty)
\leq W_2(f_\infty, f_t)
    + W_2(f_t, \tilde{f}_t)
    + W_2(\tilde{f}_t, \tilde{f}_\infty),
\]
and letting $t\to\infty$ gives $W_2(f_\infty, \tilde{f}_\infty) = 0$, again thanks to Lemma \ref{lem:contraction}. This shows that the limit is the same for any weak solution of \eqref{eq:PDE}. Since the convergence is in $W_2$, we have $\int v^2 f_\infty(dv) = \lim_t \int v^2 f_t(dv) = E$. Letting $s\to\infty$ in \eqref{eq:W2fts} gives $W_2(f_t,f_\infty) \leq W_2(f_0,f_\infty) e^{-\frac{\mu T}{2E} t}$. Moreover, taking $f_0 = f_\infty$, the last inequality implies that $W_2(f_t,f_\infty) = 0$, thus $f_t = f_\infty$ for all $t\geq 0$. That is, $f_\infty$ is a stationary solution, and uniqueness is straightforward.

Now we prove that $f_\infty$ satisfies \eqref{eq:PDEstationary}. By our definition of weak solution, we know that $(g_t)_{t\geq 0}$ given by $\int \phi(v) g_t(dv) = \int \phi(e^{-At}v) f_\infty(dv)$ satisfies \eqref{eq:PDEweak}. Noting that
\begin{align*}
    \int_\RR \phi(v) B_1[t](dv)
    &= \int_\RR \phi(e^{-At}v) \gamma(v) dv
    \quad \text{and} \\
    \int_\RR \phi(v) B_2[g_t,g_t](dv)
    &= \int_\RR \phi(e^{-At}v) B_2[f_\infty,f_\infty](dv),
\end{align*}
we thus obtain for every bounded function $\phi$ with continuous and bounded first derivatives:
\begin{align*}
\int_\RR \phi(e^{-At}v) &\{ 2\lambda(B_2[f_\infty,f_\infty] - f_\infty) + \mu(\gamma(v)-f_\infty) \}(dv) \\
&= \partial_t \int_\RR \phi(e^{-At} v) f_\infty(dv) 
= -A e^{-At} \int_\RR \phi'(e^{-At}v) v f_\infty(dv).
\end{align*}
Evaluating at $t=0$ shows that $f_\infty$ solves \eqref{eq:PDEstationary}. Now we prove that $f_\infty$ is the only solution: given $f\in \P(\RR)$ satisfying \eqref{eq:PDEstationary}, a similar computation shows that $(g_t)_{t\geq 0}$ defined by $\int \phi(v) g_t(dv) = \int \phi(e^{-At}v) f(dv)$ satisfies \eqref{eq:PDEweak} integrated against test functions $\phi$ which are bounded and with bounded first derivatives. By a density argument, it follows that $(g_t)_{t\geq 0}$ itself satisfies \eqref{eq:PDEweak}, thus $f$ is a stationary weak solution of \eqref{eq:PDE}. By the uniqueness of the stationary distribution, we deduce that $f = f_\infty$.
\end{proof}

We now prepare for the proof of Theorem \ref{thm:continuity}. Note that the case $\lambda=0$ in equation \eqref{eq:PDE} can be solved explicitly by the method of characteristics. We summarize the result in the following lemma, which we state without proof.

\begin{lem}\label{lem:8} Let $\lambda=0$ and $f_0 \in L^1(\RR)$. Then the weak solution $(f_t)_{t\geq 0}$ to \eqref{eq:PDE} is given by
\[
    f_t(v) = e^{-\mu t}[ e^{-A t} f_0(e^{-A t}v)] + \int_0^t \mu e^{-\mu s} \left( e^{-A s} \gamma(e^{-A s} v) \right) ds.
\]
Also, $f_\infty$ is given by
\begin{equation}\label{eq:lambda0finfty}
    f_\infty(v) = \int_0^\infty e^{-(1+\frac{A}
    {\mu})s}  \gamma(e^{-\frac{A}{\mu} s} v) ds.
\end{equation}
\end{lem}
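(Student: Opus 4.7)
My plan is to use the Boltzmann process \eqref{eq:SDE}, which for $\lambda=0$ has no binary collision term and reduces to a piecewise-deterministic jump process: between the atoms of $\Q(dt,dw)$, which form a homogeneous Poisson process of rate $\mu$, the velocity satisfies $dZ_t/dt = AZ_t$, so $Z_t$ is multiplied by $e^{A\Delta t}$; at each atom $(t,w)$, the current value is replaced by $w \sim \gamma$. Since $\Law(Z_t)=f_t$ and uniqueness is provided by Theorem \ref{thm:WellPosedness2}, it suffices to compute $\Law(Z_t)$ explicitly.

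I would decompose according to whether any jump has occurred by time $t$. Let $N_t$ denote the number of atoms of $\Q$ in $[0,t]$. On the event $\{N_t=0\}$, which has probability $e^{-\mu t}$, one has $Z_t = Z_0 e^{At}$, whose conditional law has density $v\mapsto e^{-At}f_0(e^{-At}v)$. On $\{N_t\geq 1\}$, let $\tau$ be the last atom time before $t$ and $W\sim\gamma$ its mark; by independent increments and the memoryless property, on this event the age $s=t-\tau$ has density $\mu e^{-\mu s}$ on $[0,t)$ and is independent of $W$. Conditional on $s$, the variable $We^{As}$ has density $v\mapsto e^{-As}\gamma(e^{-As}v)$. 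Combining, for any bounded continuous $\phi$,
\[
\EE\phi(Z_t) = e^{-\mu t}\EE\phi(Z_0 e^{At}) + \int_0^t \mu e^{-\mu s}\,\EE\phi(We^{As})\,ds,
\]
which in terms of densities is exactly the formula in the statement.

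For $f_\infty$, I would simply let $t\to\infty$: Lemma \ref{lem:equilibrium} gives $f_t\to f_\infty$ (weakly), the first term in the formula vanishes, and the integral converges to $\int_0^\infty \mu e^{-\mu s}e^{-As}\gamma(e^{-As}v)\,ds$. The change of variables $u=\mu s$ then recovers \eqref{eq:lambda0finfty}.

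No step is genuinely hard; the only point requiring a small amount of care is the identification of the age density $\mu e^{-\mu s}$ on $[0,t)$ (with the remaining mass $e^{-\mu t}$ corresponding to $\{N_t=0\}$), which is a standard Poisson calculation. As an alternative route, one could bypass the probabilistic construction altogether and verify directly that the candidate formula defines a weak solution in the sense of Definition \ref{def:weak_solutions}: both $h(t,v)=e^{-At}f_0(e^{-At}v)$ and $g(s,v)=e^{-As}\gamma(e^{-As}v)$ solve the pure-drift equation $\partial_t h + A\partial_v(vh)=0$, so a brief computation shows $\partial_t f_t = \mu[\gamma - f_t] - A\partial_v(vf_t)$, and uniqueness from Theorem \ref{thm:WellPosedness2} concludes.
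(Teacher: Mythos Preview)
Your proof is correct. The paper does not actually give a proof of this lemma; it only remarks that the case $\lambda=0$ ``can be solved explicitly by the method of characteristics'' and states the result without further argument. Your main route is genuinely different: rather than solving the transport equation $\partial_t f_t + A\partial_v(vf_t) = \mu(\gamma - f_t)$ along characteristics, you exploit the Boltzmann process, which for $\lambda=0$ degenerates to a piecewise-deterministic Markov process whose law you can write down by conditioning on the last thermostat jump. This is elegant and entirely in the spirit of the paper's probabilistic machinery; it also makes the structure of $f_\infty$ transparent as the law of $We^{A\tau/\mu}$ with $W\sim\gamma$ and $\tau\sim\exp(1)$ independent, which is exactly the remark the paper makes immediately after the lemma. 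Your alternative route (direct verification that the formula solves the PDE, plus uniqueness from Theorem~\ref{thm:WellPosedness2}) is essentially the method of characteristics the paper alludes to. Either approach works; the probabilistic one buys you the interpretation of $f_\infty$ for free, while the analytic one avoids invoking the construction of the Boltzmann process.
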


\begin{rmk}
When $\lambda=0$, the measure $f_\infty$ given by \eqref{eq:lambda0finfty} can be seen as the law of $X e^{\frac{E-T}{2E} \tau}$, where $X$ and $\tau$ are independent random variables, with $X \sim \gamma$ and $\tau \sim \exp(1)$.
\end{rmk}

In Theorem \ref{thm:continuity}, we will use the Fourier transform to prove the smoothness properties of $f_\infty$. We use the convention
\[
    \hat{\nu}(\xi) 
    = \int_\RR e^{-2\pi i v\xi} \nu(dv)
\]
for any finite measure $\nu$. Note that for any such $\nu_1$ and $\nu_2$ we have $\widehat{B_2[\nu_1, \nu_2]} = \hat{B}_2[\hat{\nu}_1,\hat{\nu}_2]$, where
\begin{equation}
    \label{eq:FourierB2}
    \hat{B}_2[z,w](\xi) := \int_0^{2\pi} z(\xi\cos\theta) w(\xi\sin\theta) \frac{d\theta}{2\pi}.
\end{equation}
Call $y(\xi) = \hat{f}_\infty(\xi)$. We first provide some properties of $y(\xi)$ in the following two lemmas. The proofs are given in sections \ref{sec:Fourier} and \ref{sec:techincal}. For convenience, we introduce the following terminology: we say that a function $\phi:\RR\to \RR \cup \{\infty\}$ satisfies property (P) if it is non-negative, even, and non-increasing on $[0,\infty)$.

\begin{lem}\label{lem:Fourier} For the Fourier transform $y(\xi)$, we have:
\begin{enumerate}
    \item $y \in C^2(\RR)$ and satisfies for all $\xi\in\RR$:
    \begin{equation}
    \label{eq:FourierInfinity}
    - A \xi \frac{dy}{d\xi}(\xi) = 2\lambda \left( \int_0^{2\pi} y(\xi\cos\theta) y(\xi\sin\theta)\frac{d\theta}{2\pi} - y(\xi) \right) + \mu (\hat{\gamma}(\xi) - y(\xi)).
    \end{equation}
\item $y(\xi)$ satisfies (P).
\item For any $k=0,1,2, \dots$, we have $\left(\xi \frac{d}{d\xi}\right)^k y \in C(\RR)$ and $\displaystyle \lim_{\vert \xi\vert \rightarrow\infty}\left(\xi \frac{d}{d\xi}\right)^k y = 0$.
     In particular, there is a constant $C_k$ for which we have
     \[
     \left| \frac{d^k y}{d\xi^k}(\xi) \right|
     \leq C_k |\xi|^{-k}.
     \]
\end{enumerate}
\end{lem}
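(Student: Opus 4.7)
The plan is to address the three parts in turn, using the stationary weak equation \eqref{eq:PDEstationary} from Lemma~\ref{lem:equilibrium}, the iterative construction behind Theorem~\ref{thm:WellPosedness2}, and the monotonicity preservation of Remark~\ref{rmk:property_P}.

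For (i), I plug the test function $\phi(v)=e^{-2\pi\ii v\xi}$ (admissible after splitting into real and imaginary parts, each of which is bounded with bounded continuous derivative) into \eqref{eq:PDEstationary}. A direct calculation gives $\int\phi'(v)\,v\,f_\infty(dv)=\xi\,y'(\xi)$, while the bilinear term becomes $\hat B_2[y,y](\xi)$ by the identity $\widehat{B_2[\nu_1,\nu_2]}=\hat B_2[\hat\nu_1,\hat\nu_2]$, and the remaining terms yield $\hat\gamma$ and $y$; this produces \eqref{eq:FourierInfinity}. The regularity $y\in C^2(\RR)$ follows from $\int v^2 f_\infty(dv)=E<\infty$.

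For (ii), which I expect to be the main obstacle since Fourier transforms of probability measures need not be non-negative, I propagate property (P) through the monotone iteration \eqref{eq:iteration} used in the proof of Theorem~\ref{thm:WellPosedness2}. Because $f_\infty$ is independent of the initial condition, I may start from $f_0=\gamma$, so that $\hat u_t^0=e^{-Dt}\hat\gamma$ already satisfies (P). Inductively, the Fourier transform of $u_t^{n+1}$ decomposes into non-negative summands: $\widehat{B_1[s]}(\xi)=\hat\gamma(e^{-As}\xi)$ is manifestly in (P), and the bilinear piece $\hat B_2[\hat u_s^n,\hat u_s^n](\xi)=\int_0^{2\pi}\hat u_s^n(\xi\cos\theta)\,\hat u_s^n(\xi\sin\theta)\,\tfrac{d\theta}{2\pi}$ inherits monotonicity from $\hat u_s^n$ because $|\xi_1\cos\theta|\leq|\xi_2\cos\theta|$ (and similarly for sine) whenever $|\xi_1|\leq|\xi_2|$. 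The total-variation convergence $u_t^n\nearrow f_t$ upgrades to uniform convergence of Fourier transforms, and then $\hat f_t\to y$ pointwise by the weak convergence $f_t\to f_\infty$, so (P) descends to $y$.

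For (iii), set $\mathcal L:=\xi\partial_\xi$, which acts as a derivation on products. I first show $y(\xi)\to 0$: by (ii), $y$ is non-increasing and bounded by $y(0)=1$, so $y(\xi)\to\ell\in[0,1]$; if $\ell>0$, dominated convergence gives $\hat B_2[y,y](\xi)\to\ell^2$ and \eqref{eq:FourierInfinity} forces $A\xi y'(\xi)\to\ell[2\lambda(1-\ell)+\mu]>0$, which is incompatible with $y$ bounded and non-increasing on $[0,\infty)$ for either sign of $A$ (when $A=0$ one has $f_\infty=\gamma$ and the statement is immediate). For general $k$, apply $\mathcal L^k$ to \eqref{eq:FourierInfinity} and use the Leibniz-type identity
\[
\mathcal L^k\hat B_2[y,y]=\sum_{j=0}^k\binom{k}{j}\hat B_2[\mathcal L^j y,\mathcal L^{k-j}y]
\]
to write $-A\,\mathcal L^{k+1}y$ as a linear combination of $\mathcal L^j y$ ($j\leq k$), $\mathcal L^k\hat\gamma$, and bilinear terms in lower-order $\mathcal L^j y$. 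Continuity and decay at infinity propagate inductively: $\mathcal L^k\hat\gamma$ is a polynomial in $\xi^2$ times a Gaussian, and each bilinear term decays by dominated convergence, since for a.e.\ $\theta$ at least one of $|\xi\cos\theta|,|\xi\sin\theta|$ tends to infinity while the other factor stays bounded. The pointwise bound $|y^{(k)}(\xi)|\leq C_k|\xi|^{-k}$ then follows from the expansion $\mathcal L^k=\sum_j S(k,j)\,\xi^j\partial_\xi^j$ in Stirling numbers of the second kind and induction on $k$.
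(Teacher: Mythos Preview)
Your proof is correct and tracks the paper's argument closely. Part (i) is identical; in (ii) you use the monotone Wild iteration where the paper sets up a contraction fixed-point in Fourier space, but both reduce to showing that $z\mapsto\hat B_2[z,z]$ preserves property~(P) and then propagating (P) from $\hat\gamma$ through the iterates (a small notational slip: the limit of $u_t^n$ is $g_t$, not $f_t$, but this is harmless since $\hat f_t(\xi)=\hat g_t(e^{At}\xi)$ and (P) is dilation-invariant). In (iii) your inductive Leibniz computation for $\mathcal L=\xi\partial_\xi$ is the same as the paper's.

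The one substantive difference is the base case $k=0$ in (iii). The paper dispatches it in one line via the Riemann--Lebesgue lemma, using that $f_\infty$ has a density---a fact established separately in the proof of Theorem~\ref{thm:continuity}. You instead extract $y(\xi)\to 0$ directly from the ODE~\eqref{eq:FourierInfinity}: if $y\to\ell>0$ then $A\xi y'(\xi)\to\ell[2\lambda(1-\ell)+\mu]>0$, which contradicts either the monotonicity or the non-negativity of $y$ depending on the sign of $A$. Your argument is self-contained and avoids the forward reference to $f_\infty\in L^1$, at the price of a short sign discussion; the paper's is quicker but depends on that external input.
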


\begin{lem}\label{lem:GaussTrick}
If $z(\xi): [0, \infty)\rightarrow [0,1]$ is continuous and satisfies 
\begin{equation}\label{eq:intineq}
    z(\xi) \leq r \hat{\gamma}(\xi) + (1-r) \int_0^{\frac{\pi}{2}} z(\xi \cos\theta)z(\xi\sin\theta)\frac{d\theta}{\pi/2},
\end{equation}
for some $r \in (0,1)$, then
\begin{equation}\label{eq:zleqgamma}
    z(\xi)\leq \hat{\gamma}(\xi) \qquad \forall \xi.
\end{equation}
When $r=0$, equation \eqref{eq:zleqgamma} still holds if we assume in addition that $\lim_{\xi \downarrow 0} (z(\xi)- 1)\xi^{-2} = -2 \pi^2 E$.
\end{lem}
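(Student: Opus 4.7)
The plan is to iterate the inequality \eqref{eq:intineq} and exploit the Gaussian identity $\hat{\gamma}(\xi\cos\theta)\hat{\gamma}(\xi\sin\theta) = \hat{\gamma}(\xi)$, a direct consequence of $\cos^2\theta+\sin^2\theta=1$ and the quadratic form in the exponent. Define the operator $T[w](\xi) := r\hat{\gamma}(\xi) + (1-r)\int_0^{\pi/2} w(\xi\cos\theta) w(\xi\sin\theta)\frac{d\theta}{\pi/2}$. The hypothesis reads $z \leq T[z]$, and since $T$ is monotone on non-negative functions and preserves the pointwise bound by $1$ (because $T[1] = r\hat{\gamma}+(1-r) \leq 1$), induction gives $z \leq T^n[z] \leq 1$ for every $n\geq 1$.

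The key step is a probabilistic representation of $T^n[z]$ through a Galton--Watson-type tree: at each level, every alive node independently either dies with probability $r$ (becoming a ``$\hat{\gamma}$-leaf'') or splits with probability $1-r$ into two children tagged with scaling factors $\cos\Theta$ and $\sin\Theta$ for an independent $\Theta\sim\mathrm{Unif}[0,\pi/2]$; alive nodes at depth $n$ are ``$z$-leaves''. Writing $C_\ell$ for the product of the $\cos\Theta/\sin\Theta$ factors along the root-to-$\ell$ path, a routine induction on $n$ yields
\[
T^n[z](\xi) = \EE\Bigl[\prod_{\ell \in L_\gamma}\hat{\gamma}(\xi C_\ell) \cdot \prod_{\ell \in L_z} z(\xi C_\ell)\Bigr].
\]
The energy conservation $(C_\ell\cos\Theta)^2 + (C_\ell\sin\Theta)^2 = C_\ell^2$ at each split forces $\sum_{\text{all }\ell} C_\ell^2 = 1$ deterministically; combined with $z\leq 1$ and $\hat{\gamma}(\xi)=e^{-2\pi^2 T\xi^2}$ this gives $T^n[z](\xi) \leq \EE[\exp(-2\pi^2 T\xi^2 S_n)]$ with $S_n := \sum_{\ell\in L_\gamma} C_\ell^2$. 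Conditioning on what happens at the root shows $\EE[1-S_n] = (1-r)^n$; under the natural coupling that grows the tree one layer at a time, $S_n$ is monotone, so $S_n \uparrow 1$ almost surely. Bounded convergence then yields $T^n[z](\xi)\to \hat{\gamma}(\xi)$, and passing to the limit in $z\leq T^n[z]$ proves $z\leq \hat{\gamma}$.

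The addendum $r=0$ is the main obstacle, since the branching never terminates and every leaf is a $z$-leaf, so the previous argument only returns the trivial bound $z\leq 1$. Here the plan is to feed in the asymptotic $(z(\eta)-1)\eta^{-2}\to -2\pi^2 E$, which yields for every $\varepsilon>0$ a $\delta>0$ with $z(\eta)\leq \exp(-(2\pi^2 E - \varepsilon)\eta^2)$ on $|\eta|\leq\delta$, while $z\leq 1$ still holds globally. For any $p>2$ the constant $c_p := \int_0^{\pi/2}\cos^p\theta\,\frac{d\theta}{\pi/2}$ satisfies $2c_p<1$, so by linearity of expectation $\EE\sum_\ell|C_\ell|^p = (2c_p)^n \to 0$, and Markov's inequality gives $\EE\sum_{\ell:\,|\xi C_\ell|>\delta} C_\ell^2 \to 0$. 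Splitting the leaf product between the good leaves ($|\xi C_\ell|\leq\delta$) and the bad ones, and using energy conservation on the good part, produces $\prod_\ell z(\xi C_\ell)\leq \exp\bigl(-(2\pi^2 E-\varepsilon)\xi^2 Y_n\bigr)$ with $Y_n\to 1$ in probability; bounded convergence followed by $\varepsilon\downarrow 0$ yields $z(\xi)\leq e^{-2\pi^2 E\xi^2}$, which is bounded by $\hat{\gamma}(\xi)$ under the compatibility $E\geq T$ enforced at the origin by the asymptotic itself.
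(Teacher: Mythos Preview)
Your argument is correct and takes a genuinely different route from the paper. For $r\in(0,1)$ the paper equips the set $S=\{w\in C([0,\infty),[0,1]):\sup_{\xi>0}|w(\xi)-1|/\xi^2<\infty\}$ with the weighted metric $\|f-g\|_S=\sup_{\xi>0}|f(\xi)-g(\xi)|/\xi^2$, verifies that the map $\mathcal G$ (your $T$) is a contraction there with ratio $1-r$ and fixed point $\hat\gamma$, and then iterates from $z_0=\max\{z,\hat\gamma\}$ using monotonicity to squeeze $z$ under $\hat\gamma$. For $r=0$ the paper argues by contradiction: if $z(\xi_1)>\hat\gamma(\xi_1)$, one writes $z(\xi_1)=e^{-2\pi^2\bar E\xi_1^2}$ with $\bar E<E$, uses the integral inequality to produce a smaller $\xi_2$ with $z(\xi_2)\ge e^{-2\pi^2\bar E\xi_2^2}$, and iterates down to a sequence $\eta_j\downarrow 0$ that violates the second-order asymptotic at the origin. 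Your branching/Wild-tree representation replaces the Banach fixed-point machinery by a direct probabilistic computation: energy conservation $\sum_\ell C_\ell^2=1$ together with $\EE[1-S_n]=(1-r)^n$ handles $r>0$, while for $r=0$ the moment bound $\EE\sum_\ell C_\ell^p=(2c_p)^n\to 0$ (any $p>2$) forces all leaf scales $\xi C_\ell$ into the region where the origin asymptotic applies. The paper's approach is shorter and purely analytic; yours is more constructive, makes transparent why the Gaussian is singled out (it is the multiplicative character for the energy-splitting), and dovetails with the Wild-sum construction already used in the well-posedness proof. Two minor remarks: strictly speaking you only establish $\limsup_n T^n[z]\le\hat\gamma$ rather than $T^n[z]\to\hat\gamma$, but that is all the lemma requires; and both your proof and the paper's tacitly need $E\ge T$ in the $r=0$ case (otherwise the conclusion already fails near $\xi=0$), which you rightly flag as a compatibility condition.
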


\begin{proof}[Proof of Theorem \ref{thm:continuity}]
The existence $f_\infty$, the convergence of $f_t$ in $W_2$ and the fact that it is the unique stationary weak solution of \eqref{eq:PDE}, were proven in Lemma \ref{lem:equilibrium}.
We now prove that $f_\infty$ has a density. For this, we write $f_\infty$ as $\lim_{t\rightarrow \infty} f_t$ where $f_0:= \gamma$. Since $f_0$ satisfies property (P), so does $f_t$ for all $t\geq 0$, thanks to Remark \ref{rmk:property_P}; thus $f_t(v) \leq \frac{1}{2\vert v\vert}$ for all $v\neq 0$ due to the inequality  $1/2 \geq \int_0^v f_t(x) dx \geq v f_t(v)$ which holds for all $v\geq 0$. Using Lemma \ref{lem:equilibrium} to pass to the weak-limit of infinite $t$, we have that for every bounded and continuous function $\phi$ we have $\int \phi(v) f_\infty(dv) \leq \int \phi(v) \frac{dv}{2|v|}$.
This, together with an approximating argument using continuous and bounded functions that $\phi$ vanish around 0, implies that $f_\infty$ necessarily has the form
\begin{equation}\label{eq:delta}
f_\infty = r \delta_0 + (1-r) \rho,
\end{equation}
where $\delta_0$ is the Dirac mass at 0, $\rho(v)$ is some density function satisfying (P) and the estimate $\rho(v) \leq \frac{1}{2|v|}$ for all $v$, and $r\in[0,1)$ is some constant.
Our next task is to show that $r=0$. Equation \eqref{eq:delta} implies that there is an $L^1(\RR)$ probability density $\rho_1$ such that the following holds
\begin{equation}\label{eq:delta2}
    B_2[f_\infty, f_\infty] = r^2 \delta_0 + (1- r^2) \rho_1.
\end{equation}
We substitute equations \eqref{eq:delta} and \eqref{eq:delta2} into equation \eqref{eq:PDEstationary} and let $\phi(v)= e^{-\eta v^2}$. This gives
\begin{align*}
&2 A (1-r) \int_\RR \eta v^2 e^{-\eta v^2} \rho(v)dv \\
&= 2\lambda (r^2- r) - \mu r
+ \int_\RR e^{-\eta v^2}(2\lambda((1-r^2) \rho_1(v) - (1-r) \rho(v)) + \mu \gamma(v)) dv.
\end{align*}
Letting $\eta\rightarrow \infty$ and applying the dominated convergence implies that $2\lambda (r^2- r) - \mu r$ must be $0$. The only solution $r$ in $[0,1]$ compatible with having energy $E > 0$ is $r=0$. This shows that $f_\infty$ has a density that satisfies (P), thus proving (i).

Now we show (ii). The forward implication follows directly from choosing $f_\infty$ as the initial condition and using Lemma \ref{lem:moments}. For the converse: assume \eqref{eq:unif-r-moment} holds. Take $f_0$ as any density with finite $r^{\text{th}}$ moment and let $\phi_M(x) = \max(M, \vert x\vert^r)$. Since $f_t \to f_\infty$ weakly, Lemma \ref{lem:moments} implies that we have
\[
\int_\RR \phi_M(v) f_\infty(v) dv
= \lim_{t\rightarrow \infty}\int_\RR \phi_M(v) f_t(v) dv
\leq \sup_{t\geq 0} \int_\RR \vert v\vert^r f_t(v)dv
< \infty.
\]
The conclusion follows from the monotone convergence theorem.

Now, we prove (iii). To show that $f_\infty \in C^k(\RR\backslash\{0\})$ for any $k$, we use the Fourier transform. Fix $k$, and let $v>0$. We will express $f_\infty(v)$ in terms of its Fourier transform:  $y(\xi)$. Since $f_\infty \in L^1(\RR)$, we have that
\[ 
f_\infty(v) = \lim_{V\rightarrow \infty} \int_{-V}^V \left(1 - \frac{\vert \xi \vert}{V}\right) y(\xi) e^{2\pi i v.\xi} d\xi
\]
Here the limit is taken in $L^1(\RR)$. We can divide this integral into an integral over $[-1,1]$, which is a smooth function in $v$, and an integral over $\{ 1 \leq \vert \xi \vert \leq V\}$. On this interval, we also note that the integrand can be expressed as 
\begin{equation}
\label{eq:FourierLebesgue}
c y(\xi)\left(1 - \frac{\vert \xi \vert}{V}\right) \frac{d^k}{d\xi^k}\left( e^{2\pi i v} v^{-k}\right).
\end{equation}
After performing $k$ integrations-by-part and then taking the limit as $V\rightarrow \infty$, we see that the contribution of $\{ \vert \xi \vert \geq 1\}$ is   
\[
\int_{\{\vert \xi\vert \geq 1\}} c e^{-2\pi i v} v^{-k}  \frac{d^k}{d\xi^k} y(\xi) d\xi
\]
plus boundary terms at $\pm 1$. The boundary terms at $\pm 1$ are smooth in $v$ so long as $v>0$, while property (iii) in Lemma \ref{lem:Fourier} implies that the boundary terms at $\pm V$ vanish as $V\rightarrow \infty$. The integral in \eqref{eq:FourierLebesgue} is in the sense of Lebesgue; and is in $C^{k-2}$ because $y^{(k)}(\xi)$ decays like $\xi^{-k}$. In particular, $f_\infty$ is in $C^{k-2}$ away from $v=0$. Similarly one can show that $B_2[f_\infty, f_\infty] \in C^{k-2}(\RR\backslash\{0\})$, justifying the pointwise equality in \eqref{eq:PDEstationary-strong}.

Finally, we prove (iv), concerning the boundedness and smoothness properties of $f_\infty$ at the origin. We start by proving the last claim in (iv) concerning the analyticity of $y$. When $E \geq T$ ($A \geq 0$), we have $-A \xi \frac{\partial y}{\partial \xi} \geq 0$ on $[0, \infty)$. Thus, on $[0, \infty)$ we have:
\[
    y(\xi) \leq \frac{2\lambda}{2\lambda+\mu} \hat{\gamma}(\xi) +  \frac{\mu}{2\lambda+\mu}\int_0^{\pi/4} y(\xi \cos\theta) y(\xi \sin\theta)\, \frac{d\theta}{\pi/4}.
\]
Thus, by Lemma \ref{lem:GaussTrick}, we have $\displaystyle  y(\xi) \leq \hat{\gamma}(\xi)$. Therefore $y(\xi)$ is in $L^1(\RR)$, making $f_\infty$ analytic.
Therefore, we need only consider the case when $E<T$ ($A<0$). Note that \eqref{eq:Cp-Condition} is equivalent to $D > - A (p+ 1)$.
We rewrite equation \eqref{eq:FourierInfinity} in the following form. Recall that $D=2\lambda+\mu$.
\begin{equation}\label{eq:Fourier:aux}
    \left( \xi^{D/\vert A\vert } y(\xi)\right)' = \xi^{D/\vert A\vert - 1} \left( \frac{2\lambda}{\vert A\vert} \hat{B}_2(\xi) + \frac{\mu}{\vert A\vert} \hat{\gamma}(\xi) \right)
\end{equation}
Here $\hat{B}_2(x) = \int_{\theta=0}^{2\pi} y(x\cos\theta)y(x\sin\theta)\frac{d\theta}{2\pi}$ as in Lemma \ref{lem:Fourier}. Note that all the constants are positive. We integrate both sides of \eqref{eq:Fourier:aux} on $[0,\xi]$ and multiply the resulting equation by $\xi^{p-D/\vert A\vert }$. This leads to the equation:
\begin{equation*}
    y(\xi) \xi^p = \xi^{p-D/\vert A\vert }\int_0^\xi x^{D/\vert A\vert  - 1} \left( \frac{2\lambda}{\vert A\vert} \hat{B}_2(x) + \frac{\mu}{\vert A\vert} \hat{\gamma}(x) \right) dx
\end{equation*}
Hence, we can show that $\int_\RR y(\xi)\vert \xi\vert^B d\xi$  is infinite when $B$ is large enough as follows.
\begin{eqnarray}\label{eq:typofixed}
 \int_0^\infty y(\xi) \xi^p d\xi & = & \int_{\xi=0}^\infty \xi^{p-D/\vert A\vert }\int_0^\xi x^{D/\vert A\vert  - 1} \left( \frac{2\lambda}{\vert A\vert } \hat{B}_2(x) + \frac{\mu}{\vert A\vert } \hat{\gamma}(x) \right) dx \, d\xi \nonumber \\& = & \int_{x=0}^\infty  x^{D/\vert A\vert - 1} \left( \frac{2\lambda}{\vert A\vert } \hat{B}_2(x) + \frac{\mu}{\vert A\vert } \hat{\gamma}(x) \right) \left( \int_{\xi=x}^\infty \xi^{p-D/\vert A\vert } d\xi \right) dx
\end{eqnarray}
 When $p\geq \frac{D}{\vert A\vert}-1$, this is infinite due to the $\xi$-integral.
  We now show that if $p< \frac{D}{\vert A\vert}-1$, then $\int_\RR y(\xi) \xi^p d\xi $ is finite.
 Consider the integral $\int_0^X \xi^p \hat{B}_2(\xi)d\xi$. As above, we restrict the average over the $\theta$ variable to the interval $[0, \pi/4]$. By using polar coordinates: $(x,z):= (\xi \cos\theta, \xi \sin\theta)$, augmenting the domain to the triangle $\{y \leq x \text{ and } x\leq R \}$, and applying the change of variables $z = x \times t$ we obtain:
 \begin{equation}\label{eq:B2hat_is_small}
  \int_0^X \xi^p \hat{B}_2(\xi)d\xi \leq  \int_{x=0}^X x^p y(x) \left( \frac{4}{\pi} \int_{t=0}^1 (1+t^2)^{\frac{p-1}{2}} y(t x) \,dt \right) dx
 \end{equation}
  Equation \eqref{eq:B2hat_is_small}, together with the fact that $\lim_{\xi\rightarrow \infty} y(\xi)= 0$, implies that 
  \[
        \int_0^X \xi^p \hat{B}_2(\xi) \,d\xi \leq \frac{X_0^{p+1}}{p+1} + \epsilon \int_{X_0}^X \xi^p y(\xi)\, d\xi, 
  \]
 for any $\epsilon$ whenever $X \geq X_0(\epsilon)$ is large enough. Thus, multiplying \eqref{eq:FourierInfinity} by $\xi^p$ and integrating the result on the interval $[0,X]$ (with $X$ large enough) , and performing an integration by parts gives: 
\[ 
-A X^{p+1} y(X) + (2\lambda + \mu - 2\lambda \epsilon + A (p+1)) \int_0^X \xi^p y(\xi) d\xi \leq 2\lambda \frac{X_0^{p+1}}{p+1} + \mu \int_0^\infty \xi^p \hat{\gamma}(\xi) d\xi
\]
Since $A<0$, we can neglect the term: $-A X^{p+1} y(X)$. Also, because
$D- \vert A\vert (p+1) >0$, and w.l.o.g. $\lambda>0$, choosing $\epsilon < (D-|A|(p+1))/2\lambda$ and letting $X \rightarrow \infty$ shows that $\int_0^\infty \xi^p y(\xi) d\xi <\infty$.

\end{proof}

%%%%%%%%%%%%%%%%
% Particle system
%%%%%%%%%%%%%%%%

\section{Particle system}\label{sec:particle-system}

In this section we study the propagation of chaos for the particle system associated to the equation \eqref{eq:PDE} with rescaling factor $\beta_N$ given in \eqref{eq:betaN}. We provide an explicit construction of the particle system using an SDE, following \cite{cortez-fontbona2016}. To this end, for fixed $N\in\NN$, let $\R(dt,d\theta,d\xi,d\zeta)$ be a Poisson point measure on $[0,\infty)\times[0,2\pi)\times [0,N)^2$ with intensity
\[
N\lambda dt \frac{d\theta}{2\pi} \frac{d\xi d\zeta \ind_{\{\ii(\xi) \neq \ii(\zeta)\}}}{N(N-1)}
= \frac{\lambda dt d\theta d\xi d\zeta \ind_{\{\ii(\xi) \neq \ii(\zeta)\}}}{2\pi(N-1)},
\]
where $\ii$ is the function that associates to a variable $\xi\in[0,N)$ the discrete index $\ii(\xi) = \lfloor \xi \rfloor +1 \in \{1,\ldots,N\}$. In words: at rate $N\lambda$, the measure $\R$ selects collision times $t\geq 0$, and for each such time, it independently samples a parameter $\theta$ uniformly at random on $[0,2\pi)$, and a pair $(\xi,\zeta) \in [0,N)^2$ such that $\ii(\xi) \neq \ii(\zeta)$, also uniformly. The pair $(\ii(\xi),\ii(\zeta))$ provides the indices of the particles involved in Kac-type collisions. Let $\Q_1(dt,dw),\ldots,\Q_N(dt,dw)$ be a collection of independent Poisson point measures, also independent of $\R$, each having intensity $\mu dt \gamma(dw)$. Finally, let $\mathbf{V}_0 = (V_0^1,\ldots,V_0^N)$ be an exchangeable collection of random variables, independent of everything else, such that $\EE[(V_0^1)^2] = E$.

The particle system, which we denote by $\mathbf{V}_t = (V_t^1,\ldots,V_t^N)$, is defined as the unique jump-by-jump solution of the SDE
\begin{equation}
    \label{eq:sde_ps}
    \begin{split}
        d\mathbf{V}_t
        &= \int_0^{2\pi} \int_{[0,N)^2} \sum_{i,j=1, i\neq j}^N [ \mathbf{a}_{ij}(\mathbf{V}_{t^\-},\theta) - \mathbf{V}_{t^\-} ] \ind_{\{ \ii(\xi)=i,\ii(\zeta)=j\}} \R(dt,d\theta,d\xi,d\zeta) \\
        &\quad {} + \sum_{i=1}^N \int_\RR [\mathbf{b}_i(\mathbf{V}_{t^\-},w) - \mathbf{V}_{t^\-}] \Q_i(dt,dw).
    \end{split}
\end{equation}
that starts at $\mathbf{V}_0$.
Here, for $\mathbf{v} \in \RR^N$, the vectors $\mathbf{a}_{ij}(\mathbf{v},\theta)\in\RR^N$ and $\mathbf{b}_i(\mathbf{v},w)\in\RR^N$ are defined as
\[
\mathbf{a}_{ij}(\mathbf{v},\theta)^k
= \begin{cases}
v^i \cos\theta - v^j \sin\theta & \text{if $k=i$,} \\
v^i \sin\theta + v^j \cos\theta & \text{if $k=j$,} \\
v^k & \text{otherwise},
\end{cases}
\qquad
\mathbf{b}_i(\mathbf{v},w)^k
= \begin{cases}
\beta_N(w)w & \text{if $i=k$,} \\
\beta_N(w)v^k & \text{otherwise},
\end{cases}
\]
where $\beta_N(w)$ is the rescaling factor given by
\[
    \beta_N(w) = \sqrt{\frac{NE}{NE-E+w^2}}.
\]
We call $\mathbf{V}_t$ the particle system, and denote $f_t^N = \Law(\mathbf{V}_t)$ its distribution on $\RR^N$. For any $i=1,\ldots,N$, from \eqref{eq:sde_ps} it follows that the particle $V_t^i$ satisfies the SDE 
\begin{equation}
\label{eq:sde_vi}
    \begin{split}
        dV_t^i
        &= \int_0^{2\pi} \int_0^N [V_{t^\-}^i \cos\theta - V_{t^\-}^{\ii(\xi)}\sin\theta - V_{t^\-}^i] \P_i(dt,d\theta,d\xi) \\
        & \quad {} + \int_\RR [\beta_N(w)w - V_{t^\-}^i] \Q_i(dt,dw)
        + \sum_{j=1, j\neq i}^N \int_\RR [\beta_N(w)V_{t^\-}^i - V_{t^\-}^i] \Q_j(dt,dw),
    \end{split}
\end{equation}
where $\P_i$ is defined as
\[
\P_i(dt,d\theta,d\xi)
= \R(dt,d\theta,[i-1,i), d\xi)
+ \R(dt,-d\theta, d\xi, [i-1,i)),
\]
and where we have $-d\theta$ to transform $\sin\theta$ into $-\sin\theta$. Clearly, $\P_i$ is a Poisson point measure on $[0,\infty)\times[0,2\pi)\times[0,N)$ with intensity $\frac{2 \lambda dt d\theta d\xi \ind_{\{\ii(\xi)\neq i\}}}{2\pi(N-1)}$.

\begin{lem}[contraction for the p.s.]
\label{lem:contraction_ps}
Let $\tilde{\mathbf{V}}_t$ be the solution to \eqref{eq:sde_ps} starting from a possibly different exchangeable initial condition $\tilde{\mathbf{V}}_0$. Then for all $t\geq 0$,
\[
\EE[ (V_t^1 - \tilde{V}_t^1)^2]
= \EE[ (V_0^1 - \tilde{V}_0^1)^2] e^{-G_N \mu t},
\]
where $G_N := \int \frac{N w^2 \gamma(dw)}{NE-E+w^2} >0$ uniformly in $N$. Denoting $\tilde{f}_t^N = \Law(\tilde{\mathbf{V}}_t)$, this implies that
\[
W_2(f_t^N, \tilde{f}_t^N)
\leq W_2(f_0^N, \tilde{f}_0^N) e^{-\frac{G_N \mu}{2} t}.
\]
\end{lem}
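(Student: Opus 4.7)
I would prove this via a synchronous coupling: drive both SDEs \eqref{eq:sde_ps} with the \emph{same} Poisson point measures $\R, \Q_1, \ldots, \Q_N$, and start from a jointly exchangeable coupling $(\mathbf{V}_0, \tilde{\mathbf{V}}_0)$ of $f_0^N$ and $\tilde{f}_0^N$. Such a coupling exists because both marginals are exchangeable, so any coupling can be symmetrized by averaging over joint permutations (and for the $W_2$ bound one moreover takes it to be optimal, which can likewise be chosen exchangeable since the cost is permutation-invariant). Since the driving measures and the SDE are symmetric under joint index permutations, the process $((V_t^i, \tilde{V}_t^i))_{i=1}^N$ remains jointly exchangeable for all $t\geq 0$. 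Setting $D_t^i := V_t^i - \tilde{V}_t^i$, this yields $\EE[(D_t^i)^2]=\EE[(D_t^1)^2]=: h(t)$ for every $i$, and
\[
W_2(f_t^N,\tilde{f}_t^N)^2 \leq \frac{1}{N}\sum_{i=1}^N \EE[(D_t^i)^2] = h(t).
\]

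Next, I would apply It\^o's formula for jump processes to $(D_t^1)^2$ using the SDE \eqref{eq:sde_vi} for $V_t^1$ and its tilde analogue, organizing contributions by source. A Kac jump of $\P_1$ with partner index $j$ transforms $D^1 \mapsto D^1\cos\theta - D^j\sin\theta$; expanding the square and using $\int_0^{2\pi}\sin\theta\cos\theta\,d\theta=0$ leaves $\frac{\lambda}{N-1}\sum_{j\neq 1}\EE[(D^j)^2-(D^1)^2]$, which vanishes by exchangeability. A $\Q_1$ jump replaces both $V_t^1$ and $\tilde{V}_t^1$ by the common value $\beta_N(w)w$, so $D^1\mapsto 0$, contributing $-\mu h(t)$. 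Each $\Q_j$ with $j\neq 1$ rescales both $V_t^1$ and $\tilde{V}_t^1$ by the same $\beta_N(w)$, hence $D^1\mapsto \beta_N(w) D^1$, and summing over the $N-1$ such indices gives $(N-1)\mu(\EE[\beta_N(W)^2]-1)\,h(t)$ with $W\sim\gamma$.

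A short algebraic simplification based on $\beta_N(w)^2-1=(E-w^2)/(NE-E+w^2)$ collapses the combined thermostat contribution to precisely $-\mu G_N h(t)$, so $h'(t)=-\mu G_N h(t)$ and integration yields the claimed identity. The uniform positivity $G_N>0$ is immediate from $NE-E+w^2\leq N(E+w^2)$, which gives $G_N\geq \int w^2/(E+w^2)\,\gamma(dw)>0$. The $W_2$ estimate then follows by taking $h(0)=W_2(f_0^N,\tilde{f}_0^N)^2$ and extracting square roots. The only delicate point is ensuring the coupled process stays jointly exchangeable; this rests on the symmetric structure of the driving noise and the exchangeability of the initial coupling. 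The remainder is routine jump-process It\^o calculus together with the algebraic manipulation that identifies $G_N$.
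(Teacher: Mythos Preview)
Your proof is correct and follows essentially the same approach as the paper: synchronous coupling with the same Poisson measures, It\^o calculus on $h(t)=\EE[(V_t^1-\tilde V_t^1)^2]$, cancellation of the Kac term via exchangeability and $\int_0^{2\pi}\sin\theta\cos\theta\,d\theta=0$, and the algebraic identity for $\beta_N(w)^2-1$ to identify $-\mu G_N$. The only minor difference is your uniform lower bound $G_N\geq\int w^2/(E+w^2)\,\gamma(dw)$, whereas the paper uses monotonicity in $N$ to get $G_N\geq\int 2w^2/(2E+w^2)\,\gamma(dw)$; both are valid and suffice.
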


\begin{proof}
Since the pair $(\mathbf{V}_0,\tilde{\mathbf{V}}_0)$ can be chosen as an optimal coupling between $f_0^N$ and $\tilde{f}_0^N$, the second claim is a direct consequence of the first, because $ W_2(f_0^N, \tilde{f}_0^N)^2 \leq \EE[(V_t^1-\tilde{V}_t^1)^2]$ by exchangeability. Let us prove the first claim: for $h(t) = \EE[(V_t^1-\tilde{V}_t^1)^2]$, using exchangeability again, we have from \eqref{eq:sde_vi}:
\begin{align*}
    \partial_t h(t)
    &= 2\lambda \EE \int_0^{2\pi} \frac{d\theta}{2\pi} [( (V_t^1-\tilde{V}_t^1) \cos\theta - (V_t^2 - \tilde{V}_t^2) \sin\theta)^2 - (V_t^1-\tilde{V}_t^1)^2] \\
    &\quad {} + \mu \EE \int_\RR [(\beta_N(w)w - \beta_N(w)w)^2 - (V_t^1-\tilde{V}_t^1)^2] \gamma(dw)\\
    &\quad {} + (N-1)\mu\EE \int_\RR [(\beta_N(w)V_t^1 - \beta_N(w)\tilde{V}_t^1)^2 - (V_t^1-\tilde{V}_t^1)^2] \gamma(dw)\\
    & = \left(-1  + (N-1) \int_\RR [\beta_N(w)^2 -1] \gamma(dw)\right) \mu h(t),
\end{align*}
where the first term vanished because $\int_0^{2\pi} \cos\theta \sin\theta d\theta = 0$. A straightforward computation shows that the constant in front of $\mu h(t)$ in the last line equals  $-G_N$, which proves the first claim. Finally, note that $\frac{Nw^2}{NE-E+w^2}$ is bounded below by $\frac{Nw^2}{NE+w^2}$, which increases with $N$; thus, $G_N \geq \int \frac{2w^2 \gamma(dw)}{2E+w^2} >0$ uniformly in $N$.
\end{proof}

In the usual Kac particle system (without thermostat and rescaling) the initial energy $\sum_i (V_0^i)^2$ is a.s.\ preserved. Although this is no longer the case for our p.s., we still have preservation of the \emph{expected} energy:

\begin{lem}\label{lem:EV-squared} $\EE[(V_t^1)^2] = E$ for all $t\geq 0$.
\end{lem}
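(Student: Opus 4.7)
The plan is to apply It\^o's formula for jump processes to $(V_t^1)^2$ using the SDE \eqref{eq:sde_vi}, obtain a linear ODE for $h(t) := \EE[(V_t^1)^2]$, and verify that $h\equiv E$ solves it with the right initial condition. Write $h(0) = E$ by assumption on $\mathbf{V}_0$, so it suffices to establish the ODE and check that $E$ is its unique solution with that initial value.

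First I would handle the contribution from the Kac jumps (the $\P_i$ integral). Averaging the squared post-collisional velocity over $\theta\in[0,2\pi)$ collapses the cross term $-2\cos\theta\sin\theta\, V_t^1 V_t^j$, yielding $\frac{1}{2}[(V_t^j)^2-(V_t^1)^2]$ for each partner $j\neq 1$; taking expectations and using exchangeability of $\mathbf{V}_t$, this term is zero. Hence Kac collisions make no contribution to $\partial_t h$.

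Next I would handle the two thermostat contributions. The $\Q_1$ term (particle 1 being resampled) contributes $\mu\int_\RR [\beta_N(w)^2 w^2 - (V_t^1)^2] \gamma(dw)$; the $\Q_j$ terms for $j\neq 1$ (particle 1 being rescaled) contribute $(N-1)\mu\,\EE[(V_t^1)^2] \int_\RR [\beta_N(w)^2-1]\gamma(dw)$. Using $\beta_N(w)^2 = \frac{NE}{NE-E+w^2}$, set
\[
I_1 = \int_\RR \frac{w^2\,\gamma(dw)}{NE-E+w^2}, \qquad I_2 = \int_\RR \frac{(E-w^2)\,\gamma(dw)}{NE-E+w^2}.
\]
Then $\partial_t h(t) = \mu N E\, I_1 + \mu\bigl[-1+(N-1)I_2\bigr] h(t)$, a linear ODE.

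The key algebraic observation is
\[
N I_1 + (N-1) I_2 = \int_\RR \frac{N w^2 + (N-1)(E-w^2)}{NE-E+w^2}\,\gamma(dw) = \int_\RR \frac{NE-E+w^2}{NE-E+w^2}\,\gamma(dw) = 1,
\]
so $\mu N E I_1 + \mu E[-1+(N-1)I_2] = \mu E[N I_1 - 1 + (N-1)I_2] = 0$. Thus $h\equiv E$ is a solution, and by uniqueness for the linear ODE with $h(0)=E$ we conclude $\EE[(V_t^1)^2]=E$ for all $t\geq 0$. The only non-routine step is the identity $N I_1+(N-1)I_2=1$, which is precisely the choice built into the definition of $\beta_N$ in \eqref{eq:betaN}: replacing the instantaneous particle energy $v_i^2$ by its expected value $E$ produces exactly the rescaling that preserves mean energy.
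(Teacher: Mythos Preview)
Your proof is correct and follows essentially the same approach as the paper: derive a linear ODE for $h(t)=\EE[(V_t^1)^2]$ via It\^o calculus, use exchangeability to kill the Kac contribution, and verify that $h\equiv E$ is the unique solution. The paper expresses the ODE as $\partial_t h(t) = \mu E G_N - \mu G_N h(t)$ with $G_N = \int \frac{N w^2 \gamma(dw)}{NE-E+w^2} = N I_1$; your identity $N I_1 + (N-1)I_2 = 1$ is exactly what shows that $-1+(N-1)I_2 = -N I_1 = -G_N$, so the two presentations coincide.
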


\begin{proof}
For $h(t) := \EE[(V_t^1)^2]$, a similar argument as in the proof of Lemma \ref{lem:contraction_ps} gives $\partial_t h(t) = \mu E G_N - \mu G_N h(t)$, which implies that $h(t) = E$ for all $t\geq 0$, because $h(0) = \EE[(V_0)^2] = E$.
\end{proof}

The above results imply the existence of an equilibrium distribution $f_\infty^N$ for the particle system, which is $f_\infty$-chaotic. This is the content of the following:

\begin{lem}[equilibration for the particle system]
\label{lem:equilibration_ps}
For each $N \geq 2$,  there is a probability measure $f_\infty^N$, depending on the parameters $\lambda, \mu, E, $ and $T$, such that for any exchangeable initial condition, the distribution $f^N_t = \Law(\mathbf{V}_t)$ satisfies
\begin{equation}
\label{eq:equilibration_ps}
W_2(f_t^N, f_\infty^N)
\leq W_2(f_0^N, f_\infty^N) e^{-\frac{G_N \mu}{2} t}
\qquad \forall t \geq 0.
\end{equation}
Moreover, if $r>2$ is chosen so that condition \eqref{eq:unif-r-moment} holds (recall that it always exists), then there exists a constant $C$ depending on $r$, $\lambda$, $\mu$, $E$, and $T$, such that for $\mathbf{U}$ with distribution $f_\infty^N$, we have the following chaoticity estimate of $f_\infty^N$ with respect to $f_\infty$:
\begin{equation}
\label{eq:equilibration_ps_chaos}
    \EE[W_2(\bar{\mathbf{U}}, f_\infty)^2]
    \leq C \begin{cases}
        N^{-1/3}, & r>4, \\
        N^{-1/3}(\log N)^{2/3}, & r=4, \\
        N^{- \frac{r-2}{2(r-1)}}, &  r \in (2,4).
    \end{cases}
\end{equation}
\end{lem}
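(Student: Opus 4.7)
The plan is to split the proof into the two parts stated in the lemma and handle them in order, each by leveraging a result already in hand.

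\textbf{Existence of $f_\infty^N$ and the exponential bound \eqref{eq:equilibration_ps}.} I would mimic the blueprint of Lemma \ref{lem:equilibrium}. For any $s,t\geq 0$, the law $f^N_{t+s}$ is the distribution at time $t$ of the particle system started from the exchangeable law $f^N_s$, so the contraction of Lemma \ref{lem:contraction_ps} gives
\[
W_2(f^N_t, f^N_{t+s}) \leq W_2(f^N_0, f^N_s)\, e^{-G_N \mu t/2}.
\]
By Lemma \ref{lem:EV-squared} every coordinate of $\mathbf{V}_s$ has second moment $E$ for all $s\geq 0$, so a triangle-inequality coupling against $\delta_0^{\otimes N}$ and the normalization in the definition of $W_2$ bound $W_2(f^N_0, f^N_s)$ uniformly in $s$. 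Hence $(f^N_t)_{t\geq 0}$ is Cauchy in $W_2$ and admits a limit $f^N_\infty$; exchangeability is preserved by the dynamics and passes to the limit. Letting $s\to\infty$ in the display gives \eqref{eq:equilibration_ps}. Applying Lemma \ref{lem:contraction_ps} to two different exchangeable initial conditions shows that the limit is independent of the starting point, so $f^N_\infty$ depends only on $N,\lambda,\mu,E,T$ as claimed.

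\textbf{Chaoticity estimate \eqref{eq:equilibration_ps_chaos}.} The idea is to apply Theorem \ref{thm:UPoC} at equilibrium. I would take the particle system started at $\mathbf{V}_0 \sim f^N_\infty$ and solve the kinetic equation from $f_0 = f_\infty$. Under the hypothesis \eqref{eq:unif-r-moment}, part (ii) of Theorem \ref{thm:continuity} yields $\int |v|^r f_\infty(dv) < \infty$, so the moment hypothesis in Theorem \ref{thm:UPoC} is met. From part (i) above, starting from $f^N_\infty$ and applying \eqref{eq:equilibration_ps} gives $W_2(f^N_t, f^N_\infty) = 0$, hence $\Law(\mathbf{V}_t) = f^N_\infty$ for all $t$; likewise $f_t = f_\infty$. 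Therefore $\bar{\mathbf{V}}_t \stackrel{d}{=} \bar{\mathbf{U}}$ and Theorem \ref{thm:UPoC} reads
\[
\EE[W_2^2(\bar{\mathbf{U}}, f_\infty)] \leq 4 e^{-\mu T t/E}\, W_2^2(f^N_\infty, f_\infty^{\otimes N}) + C\,\rho_r(N),
\]
where $\rho_r(N)$ denotes the $r$-dependent rate appearing in \eqref{eq:UPoC}. Because both $f^N_\infty$ and $f_\infty^{\otimes N}$ have per-coordinate second moment equal to $E$, one has $W_2^2(f^N_\infty, f_\infty^{\otimes N}) \leq 4E$ uniformly in $N$ by the triangle inequality through $\delta_0^{\otimes N}$; letting $t\to\infty$ kills the first term and yields \eqref{eq:equilibration_ps_chaos}.

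\textbf{Main difficulty.} Neither part is technically deep: the Cauchy argument for (i) follows verbatim from the proof of Lemma \ref{lem:equilibrium}, and (ii) is essentially a forward-reference to Theorem \ref{thm:UPoC}. The only mildly delicate point is that the chaoticity bound at stationarity is obtained by letting $t\to\infty$ in the non-stationary estimate of Theorem \ref{thm:UPoC}, which is legitimate only because the initial coupling term $W_2^2(f^N_\infty, f_\infty^{\otimes N})$ is finite uniformly in $N$; this is precisely what the conservation of expected energy in Lemma \ref{lem:EV-squared} together with the moment bound of Theorem \ref{thm:continuity}(ii) provides.
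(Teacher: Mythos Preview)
Your proposal is correct and follows the same approach as the paper: the existence of $f_\infty^N$ and \eqref{eq:equilibration_ps} via the Cauchy argument from Lemma \ref{lem:equilibrium} using the contraction of Lemma \ref{lem:contraction_ps}, and then \eqref{eq:equilibration_ps_chaos} by applying Theorem \ref{thm:UPoC} at the stationary pair $(f_\infty^N, f_\infty)$ and letting $t\to\infty$. Your explicit verification that $W_2^2(f_\infty^N, f_\infty^{\otimes N})$ is bounded uniformly in $N$ (so that the exponential term genuinely vanishes) is a detail the paper leaves implicit.
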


\begin{proof}
Equation \eqref{eq:equilibration_ps} follows from Lemma \ref{lem:contraction_ps} with the same argument as in the proof of Lemma \ref{lem:equilibrium}, so we omit the details. We now prove \eqref{eq:equilibration_ps_chaos}. Take $f_0^N = f_\infty^N$ and $f_0 = f_\infty$, thus $f_t^N = f_\infty^N$ and $f_t = f_\infty$ for all $t\geq 0$. Therefore, $\EE[W_2(\bar{\mathbf{U}}, f_\infty)^2] = \EE[W_2(\bar{\mathbf{V}}_t, f_t)^2]$ for all $t\geq 0$. We also know from Theorem \ref{thm:continuity}-(ii) that $f_\infty$ has finite $r^{th}$-moment. Hence, applying Theorem \ref{thm:UPoC} (to be proven shortly), \eqref{eq:UPoC} holds. Equation \eqref{eq:equilibration_ps_chaos} follows by letting $t\to\infty$.
\end{proof}

We now prepare the grounds for proving our uniform propagation of chaos result, Theorem \ref{thm:UPoC}. The proof is based on a coupling argument. The main idea, introduced in \cite{cortez-fontbona2016}, is to define a collection $\mathbf{Z}_t = (Z_t^1,\ldots,Z_t^N)$ of Boltzmann processes in such a way that, for each $i=1,\ldots,N$, the process $Z_t^i$ remains as close as possible to the particle $V_t^i$. The construction goes as follows. First, consider a random vector $\mathbf{Z}_0 = (Z_0^1,\ldots,Z_0^N)$ with distribution $f_0^{\otimes N}$, independent of $\R$ and of the $\Q_i$'s, optimally coupled to $\mathbf{V}_0$, that is,
\begin{equation}
\label{eq:W2f0N}
    \EE\left[ \frac{1}{N} \sum_{i=1}^N (V_0^i - Z_0^i)^2 \right]
    = W_2(f_0^N, f_0^{\otimes N})^2.
\end{equation}
Now, mimicking \eqref{eq:SDE} and \eqref{eq:sde_vi}, we define $Z_t^i$ as the solution, starting from $Z_0^i$, to the SDE
\begin{equation}
\label{eq:sde_zi}
    \begin{split}
        dZ_t^i
        &= \int_0^{2\pi} \int_0^N [Z_{t^\-}^i \cos\theta - F_t^i(\mathbf{Z}_{t^\-},\xi)\sin\theta - Z_{t^\-}^i] \P_i(dt,d\theta,d\xi) \\
        & \quad {} + \int_\RR [w - Z_{t^\-}^i] \Q_i(dt,dw)
        + A Z_t^i dt,
    \end{split}
\end{equation}
using the same Poisson point measures $\P_i$ and $\Q_i$ as in \eqref{eq:sde_vi}. Here, $F_t^i : \RR^N \times [0,N) \to \RR$ is a measurable function with the following property: for each $\mathbf{z} \in \RR^N$ and any uniformly distributed random variable $\xi$ on the set $[0,N) \backslash [i-1,i)$, the pair $(F_t^i(\mathbf{z},\xi), z^{\ii(\xi)})$ is a realization of the optimal coupling between $f_t$ and the empirical measure $\bar{\mathbf{z}}^i := \frac{1}{N-1}\sum_{j\neq i} \delta_{z^j}$. That is, for all $\mathbf{z} \in \RR^N$,
\begin{equation}
    \label{eq:EW2ftZi}
    \int_0^N \left(F_t^i(\mathbf{z},\xi) - z^{\ii(\xi)}\right)^2 \frac{d\xi \ind_{\{\ii(\xi)\neq i\}}}{N-1}
    = W_2(f_t, \bar{\mathbf{z}}^i)^2.
\end{equation}
The proof of existence of such a function $F_t^i$ can be found in \cite[Lemma 3]{cortez-fontbona2016}.

Comparing \eqref{eq:SDE} and \eqref{eq:sde_zi}, it is clear that each $Z_t^i$ is indeed a Boltzmann process, thus $\Law(Z_t^i) = f_t$ for all $i=1,\ldots,N$. In words, the dynamics of $Z_t^i$ mimics as closely as possible the one of $V_t^i$: it uses the same jump times, the same collision parameters $\theta$, and the same samples of the thermostat. Moreover, for $t\geq 0$ fixed, the expression $V_{t^\-}^{\ii(\xi)}$ in \eqref{eq:sde_vi} corresponds to a $\xi$-realization of the (random) empirical measure $\frac{1}{N-1} \sum_{j\neq i} \delta_{V_{t^\-}^j}$, while $F_t^i(\mathbf{Z}_{t^\-},\xi)$ in \eqref{eq:sde_zi} is a $\xi$-realization of $f_t$ that is optimally coupled to the empirical measure $\frac{1}{N-1} \sum_{j\neq i} \delta_{Z_{t^\-}^j}$. However, regarding the rescaling mechanism, we note that, while in the system an interaction of any particle $j\neq i$ against the thermostat (slightly) modifies the velocity $V_t^i$ (last term in \eqref{eq:sde_vi}), for the process $Z_t^i$ this mechanism takes the form of the drift term $AZ_t^i dt$.

Because the vector $\mathbf{Z}_t = (Z_t^1,\ldots,Z_t^N)$ uses the same randomness as that of the particle system, the processes $Z_t^i$ and $Z_t^j$ have a simultaneous jump whenever a Kac-type interaction between $V_t^i$ and $V_t^j$ takes place. Consequently, the processes $Z_t^1,\ldots,Z_t^N$ are \emph{not independent}. Thus, to prove Theorem \ref{thm:UPoC}, we need to show that this dependence vanishes as $N\to\infty$. We do this in a quantitative way and uniformly in time in the following lemma. 

\begin{lem}[decoupling]
\label{lem:decoupling}
There exists a constant $C<\infty$ depending only on $E$, $T$, $\lambda$ and $\mu$, such that for all fixed $k\in\NN$ we have for all $t\geq 0$:
\[
W_2\left(f_t^{\otimes k}, \Law(Z_t^1,\ldots,Z_t^k) \right)^2
\leq \frac{Ck}{N}.
\]
\end{lem}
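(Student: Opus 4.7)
The goal is to construct an exchangeable coupling between $(Z_t^1,\dots,Z_t^k)$ and $k$ independent Boltzmann processes $(\tilde{Z}_t^1,\dots,\tilde{Z}_t^k)$, with $\tilde{Z}_0^i = Z_0^i$, such that $\EE[(Z_t^1 - \tilde{Z}_t^1)^2] \leq Ck/N$ uniformly in $t$. The bound on $W_2$ then follows from the $1/k$-normalization in \eqref{eq:W2} and exchangeability, since
\[
W_2^2(f_t^{\otimes k}, \Law(Z_t^1,\dots,Z_t^k)) \leq \EE\bigl[\tfrac{1}{k}\textstyle\sum_{i=1}^k (Z_t^i - \tilde{Z}_t^i)^2\bigr] = \EE[(Z_t^1 - \tilde{Z}_t^1)^2].
\]

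\textbf{Construction of the coupling.} Use the same thermostat measure $\Q_i$ and the same drift $A\tilde{Z}_t^i\,dt$ in the SDE for $\tilde{Z}_t^i$ as in \eqref{eq:sde_zi}. Decompose $\P_i = \P_i^{\mathrm{out}} + \P_i^{\mathrm{in}}$, where $\P_i^{\mathrm{out}}$ collects atoms with $\ii(\xi) > k$ (rate $\tfrac{2\lambda(N-k)}{N-1}$) and $\P_i^{\mathrm{in}}$ those with $\ii(\xi) \le k$, $\ii(\xi) \neq i$ (rate $\tfrac{2\lambda(k-1)}{N-1}$). The crucial point is that the $\R$-atoms underlying the different outer measures $\P_i^{\mathrm{out}}$, $i \le k$, involve pairwise disjoint index pairs (one tracked, one untracked), so the family $\{\P_i^{\mathrm{out}}\}_{i \le k}$ is jointly independent. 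Let $\tilde{Z}_t^i$ jump at each $\P_i^{\mathrm{out}}$-atom \emph{synchronously} with $Z_t^i$, using the same partner $F_t^i(\mathbf{Z}_{t^-},\xi)$ and the same angle $\theta$. At $\P_i^{\mathrm{in}}$-atoms only $Z_t^i$ jumps; to compensate the lost rate, introduce a family of Poisson point measures $\{\hat{\P}_i\}_{i \le k}$ on $[0,\infty)\times[0,2\pi)\times\RR$ with intensities $\tfrac{2\lambda(k-1)}{N-1}\,dt\,\tfrac{d\theta}{2\pi}\,f_t(dz)$, jointly independent and independent of $\R$ and of the $\Q_i$'s, and let $\tilde{Z}_t^i$ jump at each $\hat{\P}_i$-atom with partner $z$. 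With a suitable internal resampling of $\xi$ at outer atoms (ensuring that the restricted partner law is exactly $f_t$), the effective driving measure of each $\tilde{Z}_t^i$ has compensator $2\lambda\,dt\,\tfrac{d\theta}{2\pi}\,f_t(dz)$; combined with the joint independence across $i \le k$ of the driving measures, uniqueness in law for the Boltzmann SDE yields that the joint law of $(\tilde{Z}_t^1,\dots,\tilde{Z}_t^k)$ is $f_t^{\otimes k}$.

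\textbf{Differential inequality and Gronwall.} Let $h(t) := \EE[(Z_t^1 - \tilde{Z}_t^1)^2]$. Using It\^o's formula for jump processes: the shared $\Q_1$-atoms reset the difference to $0$, contributing $-\mu h(t)$; the shared drift contributes $+2A h(t)$; at each $\P_1^{\mathrm{out}}$-atom both processes jump with identical partner and angle so that $\Delta(Z_t^1 - \tilde{Z}_t^1) = (\cos\theta - 1)(Z_{t^-}^1 - \tilde{Z}_{t^-}^1)$, contributing $-\lambda\tfrac{N-k}{N-1}\,h(t)$ after integration in $\theta$. The inner $\P_1^{\mathrm{in}}$- and auxiliary $\hat{\P}_1$-atoms, each of rate $\tfrac{2\lambda(k-1)}{N-1}$, produce mismatched jumps (only $Z_t^1$ jumps in the former, only $\tilde{Z}_t^1$ in the latter), and their expected squared contribution is bounded by $C'(k-1)/(N-1)$ using the uniform-in-time control of $\EE(Z_t^1)^2$, $\EE(\tilde{Z}_t^1)^2$ and $\int v^2 f_t(dv)$ provided by \eqref{eq:energy.t} and Lemma \ref{lem:EV-squared}. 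Collecting all terms yields
\[
h'(t) \leq -\kappa\,h(t) + C'\,\frac{k-1}{N-1},
\]
with $\kappa = \lambda\tfrac{N-k}{N-1} + \tfrac{\mu T}{E} > 0$ uniformly in $N$ (for $k$ fixed) and $C'$ depending only on $\lambda,\mu,E,T$ and $\int v^2 f_0(dv)$. Since $h(0) = 0$, Gronwall gives $h(t) \leq C'(k-1)/(\kappa(N-1)) \leq Ck/N$ uniformly in $t$, as required.

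\textbf{Main obstacle.} The most delicate point is to achieve simultaneously (i)~joint independence of the limiting processes $\tilde{Z}_t^i$ and (ii)~tight pathwise coupling to $Z_t^i$. A naive synchronous coupling using $F_t^i(\mathbf{Z}_{t^-},\xi)$ at every $\P_i$-atom gives perfect contraction but makes $\tilde{Z}_t^i$ depend on the other $\tilde{Z}_t^j$'s through $\mathbf{Z}$; conversely, using fully independent partners destroys the contraction and produces an $O(1)$ forcing term instead of $O(k/N)$. The compromise above---synchronizing on the jointly independent ``outer'' part $\P_i^{\mathrm{out}}$ and paying the mismatch cost only on the ``inner'' and auxiliary parts of total rate $\tfrac{2\lambda(k-1)}{N-1}$---is precisely what produces the $Ck/N$ rate, and is in the spirit of the mean-field coupling of \cite{cortez-fontbona2016}.
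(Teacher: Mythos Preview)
Your strategy—build $k$ independent Boltzmann processes $\tilde Z_t^i$ sharing most of the Poisson noise with $Z_t^i$, then Gronwall on $h(t)=\EE[(Z_t^1-\tilde Z_t^1)^2]$—is exactly the paper's, and your differential inequality is the right one. There is, however, a genuine gap in the construction. The function $F_t^i(\mathbf z,\cdot)$ is built so that $F_t^i(\mathbf z,\xi)\sim f_t$ only when $\xi$ is uniform on the \emph{full} set $[0,N)\setminus[i-1,i)$; nothing says that $F_t^i(\mathbf z,\xi)\sim f_t$ when $\xi$ is restricted to $[k,N)$. Hence at your outer jumps the partner law need not be $f_t$, and combining with fresh $f_t$-samples at the auxiliary jumps does not repair this. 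Your suggested ``internal resampling of $\xi$ at outer atoms'' would mean replacing the outer $\xi$ by some $\tilde\xi$ uniform on the full range, but then the partner for $\tilde Z_t^i$ becomes $F_t^i(\mathbf Z_{t^-},\tilde\xi)\neq F_t^i(\mathbf Z_{t^-},\xi)$, the synchronous increment is no longer $(\cos\theta-1)(Z_{t^-}^1-\tilde Z_{t^-}^1)$, and the extra $\sin^2\theta\,(F_t^1(\mathbf Z_{t^-},\xi)-F_t^1(\mathbf Z_{t^-},\tilde\xi))^2$ term contributes $O(1)$ to $h'(t)$ rather than $O(k/N)$.

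The fix is small and is precisely what the paper does: keep the outer partner $F_t^i(\mathbf Z_{t^-},\xi)$ unchanged (no resampling there), and at the auxiliary jumps use $F_t^i(\mathbf Z_{t^-},\hat\xi)$ with $\hat\xi$ uniform on the \emph{inner} range $[0,k)\setminus[i-1,i)$ instead of a fresh $f_t$-sample. Then the effective $\xi$ across outer and auxiliary jumps is uniform on all of $[0,N)\setminus[i-1,i)$, so the partner law is exactly $f_t$; synchrony at outer jumps is preserved; and the mismatched-jump bound is unchanged because $F_t^i(\mathbf Z_{t^-},\hat\xi)$ has second moment controlled by $\int v^2 f_t(dv)$. (The paper implements this via a slightly different, asymmetric splitting—replacing only the ``$i$ is second, partner $\le k$'' piece of $\P_i$ by an independent copy $\tilde\R$—which has the same effect: the $\xi$-variable of $\tilde\P_i$ still ranges over the full set, while the $\tilde\P_i$ for $i\le k$ become jointly independent.) With this correction your argument goes through verbatim.
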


\begin{proof}
We follow the proof of \cite[Lemma 6]{cortez-fontbona2016}, which relies on a coupling argument; see also \cite[Lemma 3]{cortez2016}. For fixed $k\in\NN$, the plan is to define a collection $\tilde{Z}_t^1,\ldots,\tilde{Z}_t^k$ of \emph{independent} Boltzmann processes that they stay close to $Z_t^1,\ldots,Z_t^k$ on expectation. For $i\in\{1,\ldots,k\}$, the jumps of the process $\tilde{Z}_t^i$ will use the same randomness source that $Z_t^i$ uses, except when there is a simultaneous Kac-type jump with some $Z_t^j$, $j\in\{1,\ldots,k\}$, $j\neq i$; in that case, either $\tilde{Z}_t^i$ or $\tilde{Z}_t^j$ will not jump at that instant. To compensate for the missing jumps, we will use an additional, independent source of randomness to generate new jumps. For $k \ll N$, this kind of jumps occur much less frequently, thus this construction will give the desired estimate.

To this end, let $\tilde{\R}$ be an independent copy of the Poisson point measure $\R$, and for $i=1,\ldots,k$, define
\begin{align*}
\tilde{\P}_i(dt,d\theta,d\xi)
&= \R(dt,d\theta,[i-1,i), d\xi) \\
& \qquad {} + \R(dt,-d\theta, d\xi, [i-1,i)) \ind_{[k,N)}(\xi) \\
& \qquad {} + \tilde{\R}(dt, -d\theta, d\xi, [i-1,i)) \ind_{[0,k)}(\xi),
\end{align*}
which is a Poisson point measure with intensity $\frac{2 \lambda dt d\theta d\xi \ind_{\{\ii(\xi)\neq i\}}}{2\pi(N-1)}$, just as $\P_i$. Note that the Poisson measures $\tilde{\P}_1,\ldots,\tilde{\P}_k$ are independent by construction. Mimicking \eqref{eq:sde_zi}, we define $\tilde{Z}_t^i$ as the solution, starting from $\tilde{Z}_0^i = Z_0^i$, to the SDE
\begin{equation}
    \begin{split}
    \label{eq:SDE_tilde_zi}
    d\tilde{Z}_t^i
    &= \int_0^{2\pi} \int_0^N [\tilde{Z}_{t^\-}^i \cos\theta - F_t^i(\mathbf{Z}_{t^\-},\xi)\sin\theta - \tilde{Z}_{t^\-}^i] \tilde{\P}_i(dt,d\theta,d\xi) \\
    & \quad {} + \int_\RR [w - \tilde{Z}_{t^\-}^i] \Q_i(dt,dw)
    + A \tilde{Z}_t^i dt.
    \end{split}
\end{equation}
It is clear that $\tilde{Z}_t^1,\ldots,\tilde{Z}_t^k$ is an exchangeable collection of Boltzmann processes. Moreover, using the independence of $\tilde{\P}_1,\ldots,\tilde{\P}_k$ and the fact that $F_t^i(\mathbf{z},\xi)$ has distribution $f_t$ for any $\mathbf{z}\in \RR^N$ and any $\xi$ uniformly distributed on $[0,N)\backslash [i-1,i)$, one can prove that the processes $\tilde{Z}_t^1,\ldots,\tilde{Z}_t^k$ are independent. For a full proof of this fact in a very similar setting, we refer the reader to \cite[Lemma 6]{cortez-fontbona2016}.

Call $h(t) := \EE[(Z_t^1 - \tilde{Z}_t^1)^2]$. By exchangeability, we have
\[
W_2\left(f_t^{\otimes k}, \Law(Z_t^1,\ldots,Z_t^k) \right)^2
\leq \EE\left[ \frac{1}{k} \sum_{i=1}^k (Z_t^i - \tilde{Z}_t^i)^2\right]
= h(t),
\]
thus it suffices to obtain the desired estimate for $h(t)$. From \eqref{eq:sde_zi} and \eqref{eq:SDE_tilde_zi}, using It\^o calculus, we obtain:
\begin{equation}
    \begin{split}
    \label{eq:dht_decoupling}
    \partial_t h(t)
    &= \EE \int_0^{2\pi} \int_0^N \Delta_1 \left[\R(dt,d\theta,[0,1), d\xi) + \R(dt,-d\theta,d\xi,[0,1))\ind_{[k,N)}(\xi)\right] \\
    & \qquad {} + \EE \int_0^{2\pi} \int_0^N \Delta_2  \R(dt,-d\theta,d\xi,[0,1))\ind_{[0,k)}(\xi) \\
    & \qquad {} + \EE \int_0^{2\pi} \int_0^N \Delta_3  \tilde{\R}(dt,-d\theta,d\xi,[0,1))\ind_{[0,k)}(\xi) \\
    & \qquad {} + \EE \int_\RR [(w-w)^2 - (Z_{t^\-}^1-\tilde{Z}_{t^\-}^1)^2] \Q_1(dt,dw)+ 2 A \EE[(Z_t^1 - \bar{Z}_t^1)^2],
    \end{split}
\end{equation}
where $\Delta_1$ corresponds to the increment of $(Z_t^1 - \tilde{Z}_t^1)^2$ when $Z_t^1$ and $\tilde{Z}_t^1$ have a simultaneous Kac-type jump, $\Delta_2$ is the increment when only $Z_t^1$ jumps, and $\Delta_3$ is the increment when only $\tilde{Z}_t^1$ jumps. Thanks to the indicator $\ind_{[0,k)}(\xi)$ and the fact that $\Delta_2$ and $\Delta_3$ involve only second-order products of $f_t$-distributed variables (recall that $\int v^2 f_t(dv) = E$ for all $t\geq 0$), we deduce that the second and third terms in \eqref{eq:dht_decoupling} are bounded above by $\frac{Ck}{N}$. On the other hand, for $\Delta_1$ we have
\begin{align*}
    \Delta_1 &= \left[Z_{t^\-}^1 \cos\theta - F_t^1(\mathbf{Z}_{t^\-},\xi) \sin\theta
    -\tilde{Z}_{t^\-}^1 \cos\theta + F_t^1(\mathbf{Z}_{t^\-},\xi) \sin\theta
    \right]^2
    - (Z_{t^\-}^1 - \tilde{Z}_{t^\-}^1)^2 \\
    &= - (1-\cos^2\theta) (Z_{t^\-}^1 - \tilde{Z}_{t^\-}^1)^2.
\end{align*}
Thus, simply discarding the negative term with the indicator $\ind_{[k,N)}(\xi)$ in \eqref{eq:dht_decoupling}, we deduce that
\begin{align*}
    \partial_t h(t)
    &\leq  -\EE \int_0^{2\pi} \int_1^N (1-\cos^2\theta) (Z_t^1 - \tilde{Z}_t^1)^2 \frac{2 \lambda d\theta d\xi}{2\pi(N-1)}
    + \frac{Ck}{N} - \mu h(t) + 2A h(t) \\
    &= -(\lambda + \mu - 2A) h(t) + \frac{Ck}{N}.
\end{align*}
Recall that $A = \mu \frac{E-T}{2E}$, thus $\partial_t h(t) + (\lambda + \frac{\mu T}{E}) h(t) \leq \frac{Ck}{N}$. Since $h(0) = 0$, the desired bound follows from the last inequality by multiplying by $e^{(\lambda + \frac{\mu T}{E})t}$ and integrating.
\end{proof}

We now want to use the previous lemma to obtain a chaoticity estimate for $\mathbf{Z}_t$ in terms of $\EE W_2(\bar{\mathbf{Z}}_t, f_t)^2$. We will need some previous results. Recall that (see \eqref{eq:epsN}) given a probability measure $\nu$ on $\RR$ and $k\in\NN$, we call $\varepsilon_k(\nu) = \EE W_2(\bar{\mathbf{X}}, \nu)^2$, where $\mathbf{X} = (X^1,\ldots,X^k)$ is a vector of i.i.d.\ $\nu$-distributed random variables, and $\bar{\mathbf{X}} = \frac{1}{k} \sum_{i=1}^k \delta_{X^i}$ is its (random) empirical measure. The best available general estimate for $\varepsilon_k(\nu)$, found in \cite[Theorem 1]{fournier-guillin2013}, asserts that if $\nu$ has finite $r$ moment for some $r>2$, then there exists a constant $C>0$ depending only on $r$ such that for all $k\in\NN$,
\begin{equation}
    \label{eq:eps}
    \varepsilon_k(\nu)
    \leq C \left( \int |x|^r \nu(dx) \right)^{2/r}
    \begin{cases}
        k^{-1/2}, & r>4, \\
        k^{-1/2} \log(k), & r=4, \\
        k^{-(1-\frac{2}{r})}, & r\in(0,2).
    \end{cases}
\end{equation}

\begin{rmk}
\label{rmk:eps}
The case $r=4$ is omitted in the statement of \cite[Theorem 1]{fournier-guillin2013}, but their proof can be easily adapted to include it. We provide the needed adjustments in Section \ref{sec:app_Fournier} of the Appendix.
\end{rmk}

We will also need the following estimate, which is a consequence of \cite[Lemma 7]{cortez-fontbona2016}: for any exchangeable random vector $\mathbf{X}$ on $\RR^N$ and any probability measure $\nu$ on $\RR$, there exists a constant $C>0$ depending only on the second moments of $\nu$ and $X^1$, such that for any $k\leq N$,
\begin{equation}
    \label{eq:W2X}
    \frac{1}{2} \EE [W_2(\bar{\mathbf{X}}, \nu)^2]
    \leq W_2\left(\nu^{\otimes k}, \Law(X^1,\ldots,X^k)\right)^2
        + \varepsilon_k(\nu) + C\frac{k}{N}.
\end{equation}

We can now state and prove the following:

\begin{lem}
\label{lem:decouplingEW2}
Fix $r>2$ such that $\int |v|^r f_0(dv) < \infty$ and condition \eqref{eq:unif-r-moment} is satisfied for this $r$. Then there exists a constant $C$ depending only on $\lambda$, $\mu$, $E$, $T$, $r$, and $\int |v|^r f_0(dv)$, such that for all $t\geq 0$,
\begin{equation}
\label{eq:W_2Ztft}
\EE [W_2(\bar{\mathbf{Z}}_t, f_t)^2]
\leq C
\begin{cases}
        N^{-1/3}, & r>4, \\
        N^{-1/3}(\log N)^{2/3}, & r=4, \\
        N^{- \frac{r-2}{2(r-1)}}, &  r \in (2,4).
    \end{cases}
\end{equation}
Moreover, the same bound holds with $\bar{\mathbf{Z}}_t^i = \frac{1}{N-1}\sum_{j\neq i} \delta_{Z_t^j}$ in place of $\bar{\mathbf{Z}}_t$.
\end{lem}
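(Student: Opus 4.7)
The plan is to apply the general estimate \eqref{eq:W2X} with $\mathbf{X} = \mathbf{Z}_t$ and $\nu = f_t$, which is legitimate since $(Z_t^1,\ldots,Z_t^N)$ is exchangeable (by symmetry of the construction and the initial coupling), and $\int v^2 f_t(dv)$ is uniformly bounded in $t$ by Theorem \ref{thm:WellPosedness2} while $\EE[(Z_t^1)^2] = \int v^2 f_t(dv)$. This yields, for every $k \leq N$,
\[
\tfrac{1}{2}\EE[W_2(\bar{\mathbf{Z}}_t, f_t)^2]
\leq W_2\bigl(f_t^{\otimes k}, \Law(Z_t^1,\ldots,Z_t^k)\bigr)^2 + \varepsilon_k(f_t) + C\tfrac{k}{N}.
\]
The first term is then handled directly by Lemma \ref{lem:decoupling}, giving a contribution $\leq Ck/N$; the real work is to control $\varepsilon_k(f_t)$ uniformly in $t$ and then optimize $k$.

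To estimate $\varepsilon_k(f_t)$ uniformly in $t$, I would invoke Lemma \ref{lem:moments}, whose hypotheses are satisfied by the assumption that \eqref{eq:unif-r-moment} holds for $r$ and that $\int |v|^r f_0(dv) < \infty$. This provides a constant $M_r := \sup_{t \geq 0} \int |v|^r f_t(dv) < \infty$ depending only on $\lambda,\mu,E,T,r$ and $\int |v|^r f_0(dv)$. Plugging $M_r$ into \eqref{eq:eps} gives a uniform-in-$t$ bound on $\varepsilon_k(f_t)$ of the form $C k^{-1/2}$, $C k^{-1/2}\log k$, or $C k^{-(1-2/r)}$, depending on which regime $r$ falls in.

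Combining these pieces, I then optimize the trade-off between $\varepsilon_k(f_t)$ and the $Ck/N$ term by choosing $k$ appropriately: in the regime $r>4$, balancing $k^{-1/2}$ against $k/N$ gives $k \sim N^{2/3}$ and total bound $N^{-1/3}$; in the regime $r=4$, balancing up to logarithmic factors gives $k \sim (N/\log N)^{2/3}$ and total bound $N^{-1/3}(\log N)^{2/3}$; in the regime $r \in (2,4)$, balancing $k^{-(1-2/r)}$ against $k/N$ gives $k \sim N^{r/(2r-2)}$ and total bound $N^{-(r-2)/(2(r-1))}$, matching exactly the three cases in \eqref{eq:W_2Ztft}.

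For the addendum with $\bar{\mathbf{Z}}_t^i$ in place of $\bar{\mathbf{Z}}_t$, the argument is essentially identical: by exchangeability, $(Z_t^j)_{j\neq i}$ is an exchangeable $(N-1)$-tuple whose $k$-marginals are still controlled by Lemma \ref{lem:decoupling} (applied after relabeling), so the very same application of \eqref{eq:W2X} with $N$ replaced by $N-1$ and the same optimization in $k$ yields the identical asymptotic rate. The main subtlety in the whole argument is securing the uniform-in-$t$ moment bound $M_r$; once this is in hand from Lemma \ref{lem:moments}, the rest is a straightforward balance of the two competing error terms.
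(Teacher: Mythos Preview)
Your approach is essentially identical to the paper's: apply \eqref{eq:W2X} with $\mathbf{X}=\mathbf{Z}_t$ and $\nu=f_t$, bound the first term via Lemma~\ref{lem:decoupling}, control $\varepsilon_k(f_t)$ uniformly in $t$ using Lemma~\ref{lem:moments} and \eqref{eq:eps}, then optimize in $k$; the addendum for $\bar{\mathbf{Z}}_t^i$ is handled by taking $\mathbf{X}=(Z_t^j)_{j\neq i}$ in \eqref{eq:W2X}, exactly as you propose.

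There is one small slip in the $r=4$ optimization. With $k \sim (N/\log N)^{2/3}$ you get
\[
k^{-1/2}\log k \;\sim\; N^{-1/3}(\log N)^{1/3}\cdot \log N \;=\; N^{-1/3}(\log N)^{4/3},
\]
which is worse than the claimed $N^{-1/3}(\log N)^{2/3}$. The correct balance (and the paper's choice) is $k \sim (N\log N)^{2/3}$, for which both $k^{-1/2}\log k$ and $k/N$ are of order $N^{-1/3}(\log N)^{2/3}$.
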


\begin{proof}
For any $k\leq N$, we obtain from \eqref{eq:W2X} applied to $\nu=f_t$ and $\mathbf{X} = \mathbf{Z}_t$:
\begin{align*}
\frac{1}{2} \EE [W_2(\bar{\mathbf{Z}}_t, f_t)^2]
& \leq W_2\left(f_t^{\otimes k}, \Law(Z_t^1,\ldots,Z_t^k)\right)^2
+ \varepsilon_k(f_t) + C\frac{k}{N} \\
& \leq C\frac{k}{N} + \varepsilon_k(f_t) + C\frac{k}{N},
\end{align*}
where in the last step we used Lemma \ref{lem:decoupling}. Since \eqref{eq:unif-r-moment} holds for this $r>2$, we know from Lemma \ref{lem:moments} that the $r^{\text{th}}$ moment of $f_t$ is bounded uniformly in time. If $r>4$, from \eqref{eq:eps} we obtain $\varepsilon_k(f_t) \leq C k^{-1/2}$ for all $t\geq 0$, and then taking $k \sim N^{2/3}$ gives the result. If $r\in(2,4)$, take $k \sim N^{\frac{r}{2(r-1)}}$. Finally, if $r=4$, take $k \sim (N \log N )^{2/3}$. The estimate for $\bar{\mathbf{Z}}_t^i$ is deduced similarly, taking $\mathbf{X} = (Z_t^j)_{j\neq i}$ in \eqref{eq:W2X}.
\end{proof}

The proof of Theorem \ref{thm:UPoC} requires the following two technical lemmas. The proof of Lemma \ref{lem:app} is provided in the Appendix, while the proof of Lemma \ref{lem:Gronwall} can be found for instance in \cite[Lemma 4.1.8]{ambrosio-gigli-savare2008}.

\begin{lem}\label{lem:app}
Let $\nu$ be any probability measure on $\RR$ with $\int_\RR w^2 \nu(dw) = T$ and $\int_\RR w^4 \nu(dw) =: M_4(\nu)< \infty$. Let $\bar{K}^1_N, \bar{K}^2_N,$ and $\bar{K}^3_N$ be given as follows.
\begin{eqnarray*}
    \bar{K}^1_N &=& (N-1) \int_\RR \nu(dw) (1 - \beta_N(w))^2 \\
    \bar{K}^2_N & = & \int_\RR \nu(dw) (\beta_N(w)-1)^2 w^2\\ 
    \bar{K}^3_N &= &-2(N-1)\int_\RR \nu(dw) \beta_N(w) (1-\beta_N(w))- \frac{E - T}{E}.
\end{eqnarray*}
Then there a constant $C$ that depends only on $T, E$, and $M_4$ for which the following holds.
     \begin{equation*}
         \bar{K}^1_N+\bar{K}^2_N + \vert \bar{K}^3_N \vert \leq \frac{C(T,E,M_4)}{N}.
     \end{equation*}
\end{lem}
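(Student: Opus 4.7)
The plan is to start from the algebraic identity
\[
1 - \beta_N(w) = \frac{w^2 - E}{S(\tilde{S}+S)},
\qquad S := \sqrt{NE-E+w^2},\ \tilde{S} := \sqrt{NE},
\]
obtained by rationalizing the definition of $\beta_N$, and to combine it with the trivial lower bounds $S\geq \sqrt{(N-1)E}$, $S^2 \geq w^2$, and $\tilde{S}+S \geq \tilde{S}=\sqrt{NE}$, valid for every $w\in\RR$. From this factorization one also reads off $\beta_N(w)(1-\beta_N(w)) = \tilde{S}(w^2-E)/[S^2(\tilde{S}+S)]$, which will drive the estimate for $\bar{K}^3_N$.

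The bounds for $\bar{K}^1_N$ and $\bar{K}^2_N$ are then routine. For $\bar{K}^1_N$, squaring the factorization and using $S(\tilde{S}+S) \geq \sqrt{N(N-1)}\,E$ yields $(1-\beta_N)^2 \leq (w^2-E)^2/[N(N-1)E^2]$, so
\[
\bar{K}^1_N \;\leq\; \frac{1}{NE^2}\int_\RR (w^2-E)^2\,\nu(dw),
\]
which is clearly $\leq C(T,E,M_4)/N$. For $\bar{K}^2_N$, the additional lower bound $S^2\geq w^2$ cancels the $w^2$ factor, giving the analogous estimate $\bar{K}^2_N \leq \int(w^2-E)^2\,\nu(dw)/(NE)$.

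The hard part is $\bar{K}^3_N$, because the leading behaviour of $-2(N-1)\beta_N(1-\beta_N)$ is $(w^2-E)/E + O(1/N)$, and this matches $-(E-T)/E$ only after integrating against $\nu$; a pointwise bound on $|\beta_N - 1|$ is too crude. The plan is therefore to make this cancellation algebraically explicit. Using the factored form of $\beta_N(1-\beta_N)$ above together with the identity $(E-T)/E = -E^{-1}\int (w^2-E)\,\nu(dw)$, rewrite
\[
\bar{K}^3_N
= -\int_\RR (w^2-E)\bigl[A_N(w) - E^{-1}\bigr]\,\nu(dw),
\qquad A_N(w) := \frac{2(N-1)\tilde{S}}{S^2(\tilde{S}+S)}.
\]
The key algebraic step I would then carry out is the identity (a short common-denominator computation using $\tilde{S}-S = (E-w^2)/(\tilde{S}+S)$)
\[
A_N(w) - \frac{1}{E}
= \frac{E-w^2}{E(\tilde{S}+S)^2} - \frac{2w^2\,\tilde{S}}{ES^2(\tilde{S}+S)}.
\]

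From here the conclusion is immediate: the two summands are bounded, respectively, by $|E-w^2|/(NE^2)$ and $2w^2/[(N-1)E^2]$, via the elementary lower bounds listed above (and $\tilde{S}/(\tilde{S}+S)\leq 1$). Multiplying by $|w^2-E|$ and integrating against $\nu$ brings in only $\int(w^2-E)^2\,\nu$ and $\int w^2|w^2-E|\,\nu \leq M_4+ET$, both controlled in terms of $T$, $E$ and $M_4$. This yields $|\bar{K}^3_N|\leq C(T,E,M_4)/N$ and closes the argument.
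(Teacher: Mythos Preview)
Your argument is correct and is in fact neater than the paper's own proof. The paper proceeds by splitting $\RR$ into the region $I_N=\{w:|E-w^2|\leq NE/10\}$ and its complement, uses a second-order Taylor expansion of $(1-x)^{-1/2}$ around $x=0$ on $I_N$ (writing $\beta_N=1+\phi_N/N$ with an explicit remainder bound), and controls the contribution of $J_N=I_N^c$ via Chebyshev's inequality and the fourth moment. Each of $\bar K^1_N,\bar K^2_N,\bar K^3_N$ is then handled by combining these two pieces.

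Your approach bypasses all of this: the rationalized identity $1-\beta_N=(w^2-E)/[S(\tilde S+S)]$ gives \emph{global} pointwise bounds that work uniformly in $w$ and yield the $1/N$ decay directly, with no region splitting and no Taylor remainder to track. The only nontrivial step is isolating the cancellation in $\bar K^3_N$, and your algebraic identity for $A_N(w)-1/E$ (which I checked, using $\tilde S-S=(E-w^2)/(\tilde S+S)$ and $2(N-1)E=2S^2-2w^2$) does exactly that. The paper's route is more flexible if one later needs finer asymptotics in powers of $1/N$, since the Taylor coefficients are explicit; your route is more elementary and gives the lemma with shorter computations and slightly better constants.
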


\begin{lem}[a version of Gr\"onwall's lemma]
\label{lem:Gronwall}
Let $u:\RR_+ \to \RR_+$ be an almost-everywhere differentiable function satisfying $\frac{du}{dt} \leq -au + bu^{1/2} + c$ for some constants $a>0$, $b\geq 0$ and $c\geq 0$. Then,
\[
u(t)
\leq 2u(0) e^{-at} +\frac{2c}{a} + \frac{4b^2}{a^2}
\quad \forall t \geq 0.
\]
\end{lem}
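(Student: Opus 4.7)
The natural strategy is to linearize the square-root term via the substitution $v(t) := \sqrt{u(t)}$. Provided $u(t) > 0$ on the interval of interest (the degenerate case $u(t) = 0$ is harmless, as one can apply the argument to $u_\varepsilon := u + \varepsilon$, which satisfies the same type of inequality with an extra $a\varepsilon$ on the right-hand side, and then let $\varepsilon \downarrow 0$), the function $v$ is almost-everywhere differentiable with $2 v v' = u'$, and the hypothesis becomes
\[
v' \;\leq\; -\frac{a}{2}\, v \;+\; \frac{b}{2} \;+\; \frac{c}{2 v}.
\]

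Let $v^\ast := \frac{b + \sqrt{b^2 + 4 a c}}{2 a}$ denote the positive root of $a \sigma^2 - b\sigma - c = 0$, so that by construction $-\tfrac{a}{2} v^\ast + \tfrac{b}{2} + \tfrac{c}{2 v^\ast} = 0$. The key monotonicity observation is that $\sigma \mapsto \frac{c}{2\sigma}$ is decreasing on $(0,\infty)$; hence, whenever $v(t) \geq v^\ast$ we may replace $\frac{c}{2 v(t)}$ by $\frac{c}{2 v^\ast}$ and use the defining equation of $v^\ast$ to obtain
\[
v'(t) \;\leq\; -\frac{a}{2}\bigl(v(t) - v^\ast\bigr).
\]
Applying the elementary (linear) Gr\"onwall inequality to the positive part $(v - v^\ast)^+$ (which vanishes identically when $v < v^\ast$, so there is nothing to control there) gives
\[
\bigl(v(t) - v^\ast\bigr)^+ \;\leq\; \bigl(v(0) - v^\ast\bigr)^+ e^{-a t / 2} \;\leq\; \sqrt{u(0)}\, e^{-a t / 2}, \qquad \forall\, t \geq 0,
\]
and hence $v(t) \leq v^\ast + \sqrt{u(0)}\, e^{-a t / 2}$.

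Squaring this bound and using the elementary inequality $(x+y)^2 \leq 2 x^2 + 2 y^2$ yields
\[
u(t) \;=\; v(t)^2 \;\leq\; 2 (v^\ast)^2 \;+\; 2\, u(0)\, e^{-a t}.
\]
The proof is then completed by bounding the constant $2(v^\ast)^2$: squaring the quadratic formula for $v^\ast$ and again invoking $(x+y)^2 \leq 2(x^2 + y^2)$ on $(b + \sqrt{b^2 + 4ac})^2$ gives a bound of the form $2 (v^\ast)^2 \leq \tfrac{2 c}{a} + \tfrac{4 b^2}{a^2}$ after rearranging the two positive contributions. The main (minor) obstacle is simply the justification of the chain rule for the a.e.-differentiable, non-negative function $v = \sqrt{u}$ and the regularization needed to rule out pathologies at zeroes of $u$; the rest is algebraic manipulation and one application of classical Gr\"onwall.
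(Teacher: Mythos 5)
Your strategy (substitute $v=\sqrt{u}$, identify the steady state $v^\ast$, observe that above $v^\ast$ the right-hand side is at most $-\tfrac{a}{2}(v-v^\ast)$, run linear Gr\"onwall on $(v-v^\ast)^+$, then square and split the cross term) is sound and does produce a uniform-in-time estimate with the correct exponential decay rate $e^{-at}$; note that the paper gives no in-text proof of this lemma, only a citation, so this is a genuine independent argument. However, the final algebraic step contains a real error, not just a slip of notation. From $(x+y)^2\le 2x^2+2y^2$ applied to $b+\sqrt{b^2+4ac}$ one gets
\[
(v^\ast)^2=\frac{(b+\sqrt{b^2+4ac})^2}{4a^2}\le \frac{2b^2+2(b^2+4ac)}{4a^2}=\frac{b^2}{a^2}+\frac{2c}{a},
\qquad\text{hence}\qquad
2(v^\ast)^2\le \frac{2b^2}{a^2}+\frac{4c}{a},
\]
which has the roles of the $b^2/a^2$ and $c/a$ terms interchanged relative to the bound you write down. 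This is not a cosmetic difference: the inequality you actually assert, $2(v^\ast)^2\le \tfrac{2c}{a}+\tfrac{4b^2}{a^2}$, is \emph{false} in general. Setting $s=\sqrt{b^2+4ac}$, it is equivalent to $(b+s)^2\le s^2+7b^2$, i.e.\ to $2bs\le 6b^2$, i.e.\ to $ac\le 2b^2$ when $b>0$. For instance with $a=1$, $b=0.1$, $c=100$ one has $v^\ast\approx 10.05$, so $2(v^\ast)^2\approx 202$, while $\tfrac{2c}{a}+\tfrac{4b^2}{a^2}=200.04$.

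Consequently, what your argument actually establishes is
\[
u(t)\le 2u(0)e^{-at}+\frac{2b^2}{a^2}+\frac{4c}{a},
\]
a perfectly valid Gr\"onwall-type bound with the same qualitative content (and, in the paper's only application, with $a=\tfrac{\mu T}{E}$, $b=CN^{-1/6}$, $c=CN^{-1/3}$, both versions give $O(N^{-1/3})$), but not the specific constants claimed in Lemma~\ref{lem:Gronwall}. To match the lemma as stated you would need a different bounding of the cross term $2v^\ast\,(v(0)-v^\ast)^+\,e^{-at/2}$ or of $(v^\ast)^2$; the symmetric split $(x+y)^2\le 2x^2+2y^2$ that you are using cannot yield the coefficient $4$ on $b^2/a^2$ together with the coefficient $2$ on $c/a$. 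Either fix the statement you prove (it is a correct and usable variant), or find the argument that actually delivers the constants $\tfrac{2c}{a}+\tfrac{4b^2}{a^2}$; as written, the last step is not justified.

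The remaining ingredients are fine modulo the usual caveats: the chain rule for $v=\sqrt{u}$ away from zeros of $u$, the regularization $u_\varepsilon=u+\varepsilon$, and the Gr\"onwall argument on $(v-v^\ast)^+$ are all standard, and you correctly note that when $b=c=0$ (so $v^\ast=0$ and $c/(2v^\ast)$ is ill-defined) the estimate is trivial anyway.
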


We are now ready to prove Theorem \ref{thm:UPoC}.

\begin{proof}[Proof of Theorem \ref{thm:UPoC}]
Call $h(t) = \EE[(V_t^1-Z_t^1)^2]$. Using Lemma \ref{lem:decouplingEW2} and exchangeability, we obtain
\begin{align*}
\EE [W_2(\bar{\mathbf{V}}_t, f_t)^2]
&\leq 2 \EE [W_2(\bar{\mathbf{V}}_t, \bar{\mathbf{Z}}_t)^2]
     + 2 \EE [W_2(\bar{\mathbf{Z}}_t, f_t)^2] \\
&\leq 2 \EE \left[ \frac{1}{N}\sum_{i=1}^N (V_t^i - Z_t^i)^2 \right]
     + \frac{C}{N^{1/3}} \\
&= 2 h(t) + \frac{C}{N^{1/3}}.
\end{align*}
Thus, it suffices to prove that $h(t) \leq 2 e^{-\frac{\mu T}{E} t} h(0) + C N^{-1/3}$, because $h(0) = W_2(f_0^N, f_0^{\otimes N})^2$ by exchangeability and \eqref{eq:W2f0N}.

We thus study the evolution of $h(t)$. We have
\begin{equation}
\label{eq:dht_UPoC}
\partial_t h(t)
= S_t^K + S_t^T + S_t^R + S_t^D,
\end{equation}
where those four terms correspond to Kac interactions, thermostat interactions, rescaling, and drift, respectively. That is: $S_t^K$ comes from the $\P_i$ terms of \eqref{eq:sde_vi} and \eqref{eq:sde_zi}; $S_t^T$ comes from the $\Q_i$ terms; $S_t^R$ comes from the summation in \eqref{eq:sde_vi} (and no term from \eqref{eq:sde_zi}); and $S_t^D$ comes from the drift term in \eqref{eq:sde_zi} (and no term from \eqref{eq:sde_vi}). To write them explicitly, let us first shorten notation: call $V_t^\ii = V_t^{\ii(\xi)}$, $Z_t^\ii = Z_t^{\ii(\xi)}$, and $F_t^1 = F_t^1(\mathbf{Z}_{t^\-},\xi)$. For clarity, we split the remaining of the proof into small steps.

\paragraph{Step 1, Kac term:}
Recall that the intensity of $\P_1(dt,d\theta,d\xi)$ is $\frac{2 \lambda dt d\theta d\xi \ind_{\{\ii(\xi)\neq 1\}}}{2\pi(N-1)}$, thus from \eqref{eq:sde_vi} and \eqref{eq:sde_zi}, using It\^{o} calculus, for $S_t^K$ we obtain:
\begin{align}
    \notag
    S_t^K
    &= \EE \int_0^{2\pi} \int_1^N
    \left[ \left(V_t^1 \cos\theta - V_t^{\ii} \sin\theta
    - Z_t^1 \cos\theta + F_t^1 \sin\theta \right)^2 - (V_t^1 - Z_t^1)^2
    \right] \frac{2\lambda d\theta d\xi}{2\pi(N-1)} \\
    \notag
    &= \EE \int_0^{2\pi} \int_1^N
    \left[ (V_t^1 - Z_t^1)^2 (\cos^2\theta -1)  +  (V_t^\ii - F_t^1)^2 \sin^2\theta
    \right] \frac{2\lambda d\theta d\xi}{2\pi(N-1)} \\
    \label{eq:StK_prelim}
    &= 2\lambda \left[
    -\frac{1}{2}h(t) + \frac{1}{2} \EE \int_1^N (V_t^\ii - F_t^1)^2 \frac{d\xi}{N-1}
    \right],
\end{align}
where in the second equality we discarded the cross-term because $\int_0^{2\pi} \cos\theta \sin\theta d\theta = 0$. Call $a(t) = \EE \int_1^N (Z_t^\ii - F_t^1)^2 \frac{d\xi}{N-1}$, thus $a(t) = \EE[W_2(\bar{\mathbf{Z}}_t^1, f_t)^2]$ thanks to \eqref{eq:EW2ftZi}; also note that $\EE \int_1^N (V_t^\ii - Z_t^\ii)^2 \frac{d\xi}{N-1} = \EE \frac{1}{N-1} \sum_{j=2}^N (V_t^j - Z_t^j)^2 = h(t)$ by exchangeability. For the integral in \eqref{eq:StK_prelim}, we add and subtract $Z_t^\ii$, thus obtaining:
\begin{align*}
    \EE \int_1^N (V_t^\ii - F_t^1)^2 \frac{d\xi}{N-1}
    &= h(t) + a(t) + 2\EE \int_1^N (V_t^\ii - Z_t^\ii)(Z_t^\ii - F_t^1) \frac{d\xi}{N-1} \\
    &\leq h(t) + a(t) + 2 h(t)^{1/2} a(t)^{1/2},
\end{align*}
where we have used the Cauchy-Schwarz inequality. Plugging this into \eqref{eq:StK_prelim} gives
\begin{equation}
    \label{eq:StK}
    S_t^K
    \leq \lambda a(t) + 2\lambda h(t)^{1/2} a(t)^{1/2}.
\end{equation}

\paragraph{Step 2, thermostat term:}
Recall that the intensity of $\Q_1(dt,dw)$ is $\mu dt \gamma(dw)$. Thus, again from \eqref{eq:sde_vi} and \eqref{eq:sde_zi}, we have for $S_t^T$:
\begin{align}
    \notag
    S_t^T
    &= \mu \int_\RR \left[(\beta_N(w)w - w)^2 - \EE (V_t^1-Z_t^1)^2 \right] \gamma(dw) \\
    \label{eq:StT}
    &= -\mu h(t)  + \mu \int_\RR (\beta_N(w)-1)^2 w^2 \gamma(dw).
\end{align}

\paragraph{Step 3, rescaling term:} Noting that every $\Q_j(dt,dw)$ has the same intensity $\mu dt \gamma(dw)$,
again from \eqref{eq:sde_vi} and \eqref{eq:sde_zi}, we obtain:
\begin{align}
    \notag
    S_t^R
    &= \mu \sum_{j=2}^N \EE \int_\RR \left[
    (\beta_N(w)V_t^1 - Z_t^1)^2 - (V_t^1 - Z_t^1)^2
    \right] \gamma(dw) \\
    \notag
    &= \mu (N-1) \EE \int_\RR \left[
    (\beta_N(w)V_t^1 - Z_t^1)^2 - (V_t^1 - Z_t^1)^2
    \right] \gamma(dw) \\
    \notag
    &= \mu(N-1)\EE \int_\RR \left[
    \beta_N(w)^2 (V_t^1-Z_t^1)^2 + (\beta_N(w)-1)^2 (Z_t^1)^2 \right. \\
    \notag
    & \qquad \qquad \qquad \qquad \left. {} + 2 \beta_N(w) (\beta_N(w)-1) (V_t^1-Z_t^1) Z_t^1
    - (V_t^1-Z_t^1)^2
    \right] \gamma(dw) \\
    \begin{split}
    \label{eq:StR}
    &= \mu(N-1) \int_\RR \left[
    (\beta_N(w)^2-1) h(t) + (\beta_N(w)-1)^2 E \right. \\
    & \qquad \qquad \qquad \qquad \left. {} + 2 \beta_N(w) (\beta_N(w)-1) \EE[(V_t^1-Z_t^1) Z_t^1]
    \right] \gamma(dw),
    \end{split}
\end{align}
where we have used that $\EE[(Z_t^1)^2] = E$ for all $t\geq 0$.

\paragraph{Step 4, drift term and conclusion:} It is clear  that $S_t^D = -2A \EE[(V_t^1-Z_t^1) Z_t^1]$. Plugging this, \eqref{eq:StK}, \eqref{eq:StT}, and \eqref{eq:StR} into \eqref{eq:dht_UPoC} and grouping terms, yields
\begin{equation}
\label{eq:dht_UPoC_conclusion}
\partial_t h(t)
\leq 2\lambda a(t)^{1/2} h(t)^{1/2}
    + K_N^1 h(t) + [\lambda a(t) + K_N^2]
    + K_N^3 \EE[(V_t^1-Z_t^1) Z_t^1],
\end{equation}
where $K_N^1$, $K_N^2$ and $K_N^3$ are given by
\begin{align*}
    K_N^1 &= - \mu + \mu (N-1) \int_\RR (\beta_N(w)^2 -1) \gamma(dw), \\
    K_N^2 &= \mu \int_\RR (\beta_N(w)-1)^2 w^2 \gamma(dw)
            + E \mu (N-1) \int_\RR (\beta_N(w)-1)^2 \gamma(dw), \\
    K_N^3 &= -2A + 2\mu (N-1) \int_\RR \beta_N(w) (\beta_N(w)-1) \gamma(dw).
\end{align*}
As in the proof of Lemma \ref{lem:contraction_ps}, a straightforward computation shows that $K_N^1 = -\mu G_N$, where $G_N = \int \frac{N w^2 \gamma(dw)}{NE-E+w^2}$. Since $T = \int w^2 \gamma(dw)$, we obtain
\[
    K_N^1
    = - \frac{\mu T}{E} + \mu \int_\RR \left[\frac{w^2}{E} - \frac{N w^2}{NE-E+w^2} \right] \gamma(dw)
    \leq - \frac{\mu T}{E} + \frac{C}{N}.
\]
Using the notation and results of Lemma \ref{lem:app}, we have $K_N^2 = \mu \bar{K}^2_N + E \mu \bar{K}^1_N$ and $K_N^3 = \mu \bar{K}^3_N$; thus, $K_N^2 \leq C/\sqrt{N}$ and $|K_N^3| \leq C/\sqrt{N}$. Also, $|\EE[(V_t^1-Z_t^1) Z_t^1]| \leq 2E$ and $h(t) \leq 4E$ because $\EE[(V_t^1)^2] = \EE[(Z_t^1)^2] = E$ for all $t\geq 0$. Finally, since $\int |v|^r f_0(dv) < \infty$ and condition \eqref{eq:unif-r-moment} holds for this $r$, we can apply Lemma \ref{lem:decouplingEW2} and obtain $a(t) = \EE[W_2(\bar{\mathbf{Z}}_t^1, f_t)^2] \leq C/N^{1/3}$ for all $t\geq 0$, in the case $r>4$. Using all these estimates in \eqref{eq:dht_UPoC_conclusion} gives
\begin{align*}
    \partial_t h(t)
    &\leq \frac{C}{N^{1/6}} h(t)^{1/2} - \frac{\mu T}{E} h(t)
    + \frac{C}{N} h(t) + \frac{C}{N^{1/3}} + \frac{C}{N} \\
    &\leq \frac{C}{N^{1/6}} h(t)^{1/2} - \frac{\mu T}{E} h(t) + \frac{C}{N^{1/3}}.
\end{align*}
Using Lemma \ref{lem:Gronwall} yields $h(t) \leq 2 h(0) e^{-\frac{\mu T}{E} t} + C N^{-1/3}$, as desired. The case $r\in(2,4]$ is treated similarly, using the other estimates of Lemma \ref{lem:decouplingEW2}.
\end{proof}

To conclude this section, we now prove Theorem \ref{thm:chaoticSN}, regarding the construction of a chaotic sequence supported on the sphere $S^{N-1}(\sqrt{NE})$.

\begin{proof}[Proof of Theorem \ref{thm:chaoticSN}]
We use the scaling procedure given by \eqref{eq:fN}, following \cite{cortez2016,fournier-guillin2015}. First, note that for any probability measures $\nu$ and $\tilde{\nu}$ on $\RR$, we have
\begin{equation}
\label{eq:W2nuq}
    W_2(\nu,\tilde{\nu})^2
    \geq (q^{1/2}-\tilde{q}^{1/2})^2,
\end{equation}
where $q = \int v^2 \nu(dv)$ and $\tilde{q} = \int v^2 \tilde{\nu}(dv)$. Indeed: for any coupling $X \sim \nu$ and $\tilde{X} \sim \tilde{\nu}$, we have $\EE[(X-\tilde{X})^2] = q + \tilde{q} - 2\EE(X\tilde{X}) \geq (q^{1/2}-\tilde{q}^{1/2})^2$, from which \eqref{eq:W2nuq} follows. Now, as specified in \eqref{eq:fN}, consider a random vector $\mathbf{X} = (X_1,\ldots,X_N)$ with distribution $f^{\otimes N}$ and define $Q = \frac{1}{N} \sum_i X_i^2$. Without loss of generality, we assume that $f$ does not have an atom at 0, thus $Q>0$ almost surely, and let $\mathbf{Y} = (Y_1,\ldots,Y_N)$ be given by $Y_i = (E/Q)^{1/2} X_i$ for all $i\in\{1,\ldots,N\}$ (if $f$ has an atom at 0, on the event $Q=0$, simply define $\mathbf{Y}$ as some fixed, independent, exchangeable random vector $\mathbf{Z}$ taking values on $S^{N-1}(\sqrt{NE})$; the following computations can be easily extended to this case). We define $f^N$ as the law of $\mathbf{Y}$. By construction, $f^N$ is supported on the sphere $S^{N-1}(\sqrt{NE})$. We have:
\[
    \frac{1}{N} \sum_{i=1}^N (Y_i - X_i)^2
    = \frac{1}{N} \sum_{i=1}^N X_i^2 \left( \frac{E^{1/2}}{Q^{1/2}} - 1\right)^2 
    = \left( \frac{E^{1/2}}{Q^{1/2}} - 1\right)^2 Q
    = (E^{1/2} - Q^{1/2})^2.
\]
From \eqref{eq:W2nuq} with $\nu = f$ and $\tilde{\nu} = \bar{\mathbf{X}} = \frac{1}{N}\sum_i \delta_{X_i}$, we obtain $\frac{1}{N} \sum_i (Y_i-X_i)^2 \leq W_2(f,\bar{\mathbf{X}})^2$. Since $\mathbf{Y} \sim f^N$ and $\mathbf{X}\sim f^{\otimes N}$, taking expectations gives $W_2(f^N,f^{\otimes N})^2 \leq \varepsilon_N(f)$. The explicit rates now follow directly from \eqref{eq:eps}.
\end{proof}

%%%%%%%%%%%%%%%%
% Conclusion
%%%%%%%%%%%%%%%%

\section*{Conclusion}

In this paper, we considered the thermostated Kac model in \cite{bonetto-loss-vaidyanathan2014} and introduced a global thermostat in the form of a rescaling mechanism, in order to restore the total energy. We used the coupling technique in \cite{cortez-fontbona2016} to obtain a quantitative, uniform in time propagation of chaos result for our particle system with rescaling factor $\beta_N$ given by \eqref{eq:betaN}. We studied the contractivity of the corresponding kinetic equation \eqref{eq:PDE} and its stationary distribution $f_\infty$. We discovered that the behavior of $f_\infty$ at the origin shows a phase transition from blowing up, to being continuous, $C^k$, and all the way to analytic based on the parameters $E,T$ and $\mu, \lambda$. We showed equilibration for our particle system in the $2$-Wasserstein distance in Lemma \ref{lem:equilibration_ps}. We note that the particular form of the Maxwellian thermostat $\gamma$ was not needed in our proofs; we only required it to have finite moments of sufficiently large order ($4$ is enough, see Lemma \ref{lem:app}). Thus Theorem \ref{thm:UPoC} can be generalized to other distributions for the thermostat; the same is true for parts of Theorem \ref{thm:continuity}.
Also, as a by-product of our study, in Theorem \ref{thm:chaoticSN} we showed how one can construct an $f$-chaotic sequence on the sphere $S^{N-1}(\sqrt{NE})$ by scaling an $f^{\otimes N}$-distributed random vector back to the sphere. This procedure gives a quantitative rate of chaoticity without requiring $f$ to have a density, and assuming only finite moment of order $2+\epsilon$. One could also study \emph{entropic chaoticity} for this sequence, as is done for the sequence obtained by conditioning $f^{\otimes N}$ to the sphere in \cite{carlen-carvalho-leroux-loss-villani2010,carrapatoso2015}; this is the subject of future research.

As mentioned earlier, we remark that, with some modifications,
such as conforming $\beta_N$ in \eqref{eq:betaN} to the weak thermostat as
\[
\sqrt{\frac{N E}{ (N-1) E + ( \mbox{sgn}(V_i) \sqrt{E}\cos\theta-w\sin\theta)^2 } }
\]
and replacing equation \eqref{eq:thermostatedBoltzmannKac} by
\[ \partial_t f_t(v) = 2\lambda \left( B_2[f_t, f_t] - f_t \right) + \mu \left( B_2[f_t ,\gamma] - f_t\right) - \frac{\mu}{2} \frac{E-T}{2E} \partial_v (v f_t(v) ),\]
our proof for contraction and propagation of chaos also work
for the more physical weak thermostat introduced in \cite{bonetto-loss-vaidyanathan2014} with the same rates. In this case, the proof for Lemma \ref{lem:app} becomes much more technical.

Many interesting questions remain open. For instance, one can consider the particle system with the exact rescaling factor $\alpha_N$, given by \eqref{eq:alphaN}. We do not know if propagation of chaos holds in this case; we hope that Theorem \ref{thm:UPoC} can be useful in answering this question. A related problem is the study of the equilibration properties of the $N$ particle system using more conventional metrics such as the spectral gap and relative entropy. It is possible that Kac's conjecture of a spectral gap bounded below uniformly in $N$ holds for this particle system, as well as the one with rescaling $\beta_N$. Not knowing the exact form of the equilibrium distribution in either case presents an additional difficulty for studying equilibration via spectral gap or relative entropy. We hope that global thermostats can be applied to study models  that have multiple blocks of Kac-type systems. Examples of such models are provided in \cite{bonetto-loss-tossounian-vaidyanathan2017} and \cite{tossounian-vaidyanathan2015}. Within each block, one can introduce a global thermostat to moderate the total energy of that block.

\bigskip

\paragraph{Acknowledgements.}
We would like to thank Federico Bonetto and Joaquin Fontbona for fruitful discussions during our work.

%%%%%%%%%%%%%%%%
% Appendix
%%%%%%%%%%%%%%%%

\section{Appendix}
\subsection{The master and kinetic equations}
\label{sec:ChaosLN}
In this section we justify the claim that \eqref{eq:PDE} is the only candidate as the propagation of chaos limit for our model with thermostat and rescaling (for both choices: \eqref{eq:alphaN} and \eqref{eq:betaN}).

It is expedient to use the generators of the Markov processes and the corresponding master equations which we now introduce.  The distribution $f_t^N(\mathbf{v})$ in Kac's original particle system evolved via the following \emph{master equation}:
\[
    \frac{\partial f_t^N}{\partial t} = N \lambda (Q - I) f_t^N
\]
where $Q$ is the Kac collision operator:
\[
    Q= \binom{N}{2}^{-1}\sum_{1\leq i< j \leq N} Q_{i,j}
\]
and where the operator $Q_{i,j}$ represents the collision of particles $i$ and $j$:
\[
    Q_{i,j}[f_t^N](\mathbf{v}) = \int_0^{2\pi} f^N_t(v_1, \dots,v_{i-1}, v_i'(\theta), \dots, v_{j-1}, v_j'(\theta), \dots ,v_N) \frac{d\theta}{2\pi}, 
\]
and where $I$ is the identity operator: $I[f^N] = f^N$.

The operator $P_i$ for the strong thermostat acting on particle $i$ is given by
\[
    P_i[f^N](\mathbf{v}) = \int_\RR f^N(v_1,\dots, v_{i-1}, w, \dots, v_N) dw \gamma(v_i). 
\]
Now, we introduce $\tilde{P}_j$ (resp. $\tilde{P}_j^\alpha$) which encorporates the action of strongly thermostating particle $j$ followed by the rescaling by the factor $\beta_N$ in \eqref{eq:betaN} (resp. $\alpha_N$ in \eqref{eq:alphaN}).
for any continuous and bounded function $\phi(\mathbf{v})$ we have
\[
\int_{\RR^N} \tilde{P}_1 [f^N_t](v) \phi(v) d\mathbf{v} = \int_{\RR^N} \int_\RR f^N_t(\mathbf{v}) \phi\left(\beta_N(w)(w,v_2, \dots, v_N)\right) \gamma(dw) d\mathbf{v},
\]
and
\[
\int_{\RR^N} \tilde{P}_1^\alpha [f^N_t](v) \phi(v) d\mathbf{v} = \int_{\RR^N} \int_\RR f^N_t(\mathbf{v}) \phi\left(\alpha_N(w, \mathbf{v})(w,v_2, \dots, v_N)\right) \gamma(dw) d\mathbf{v}.
\]
Thus, the generator for our particle system and its associated Master equation are the following.
\[
   L_N= N\lambda(Q-I) + \mu \sum_{j=1}^N (\tilde{P}_j -I ),
\]
\[
    \frac{\partial f^N_t}{\partial t} = L_N[f^N_t].
\]
Similarly, the generator $L_N^\alpha$ for the particle system with the scaling $\alpha_N$ and the associated Master equation are given by:
\[
   L_N^\alpha[f^N]= N\lambda(Q- I) + \mu \sum_{j=1}^N (\tilde{P}_j^\alpha-I),
\]
and
\begin{equation}\label{eq:masterAlpha}
    \frac{\partial f^N_t}{\partial t} = L_N^\alpha [f^N_t].
\end{equation}
We now state and prove the following lemma which can be seen as a propagation of chaos result at the level of generators.

\begin{lem}[generator level propagation of chaos]
\label{lem:chaosLN} 
Let $(f^N(\mathbf{v}))_N$ be a sequence of symmetric probability densities, with $f^N \in L^1(S^{N-1}(\sqrt{NE}),\sigma_N)$, that is chaotic to $f_0$. Assume that for some $r>2$ we have
\begin{equation}\label{eq:moment.condition}
    \sup_N \int_{S^{N-1}} f^N(\mathbf{v}) \vert v_1\vert^r \sigma_N(dv) <\infty.
\end{equation}
Then, for any $l \in \NN$ and any $h\in C^2(\RR^l)$ with 
\[
   \Vert h\Vert_\infty + \sum_{i=1}^l \Vert \partial_i h \Vert_\infty + \sum_{i,j=1}^l \Vert \partial_{i,j} h\Vert< \infty,
\]
the following holds:
\begin{equation}
\label{eq:generator}
\begin{split}
    \lim_{N \rightarrow \infty} \int_{S^{N-1}} L^\alpha_N [f^N] h\sigma_N(dv) &= \int_{\RR^{l}} 2\lambda \sum_{j=1}^l \left(\prod_{i \neq j} f_0(v_i)\right) [B_2[f_0,f_0]-f_0](v_j) h dv
    \\
    & \quad + \int_{\RR^{l+1}} f_0^{\otimes (l)} \left( \mu \sum_{j=1}^l (P_j^\ast-I) + \mu \frac{E -T} {2 E} v.\nabla\right)[h] dv.
\end{split}
\end{equation}
 The limit in \eqref{eq:generator} holds if we replace $(S^{N-1}(\sqrt{NE}), \sigma_N)$ by $(\RR^N, dv)$ and $L_N^\alpha$ by $L_N$.
\end{lem}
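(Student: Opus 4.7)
The plan is to decompose $L_N^\alpha = N\lambda(Q-I) + \mu\sum_{j=1}^l(\tilde P_j^\alpha - I) + \mu\sum_{j=l+1}^N(\tilde P_j^\alpha - I)$, move each operator onto $h$ via the duality relations in the appendix, and identify the limit of each $f^N$-integral by combining chaoticity of $f^N$ with a Taylor expansion of $\alpha_N(v_j,w)=\sqrt{NE/(NE-v_j^2+w^2)}$ in powers of $1/N$. After duality, each operator acting on $h$ reduces to an integrand depending on only a bounded number of coordinates (plus possibly a symmetric sum over one extra coordinate), which is exactly what chaoticity of the low-order marginals of $f^N\sigma_N$ controls.

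For the Kac piece, write $N\lambda(Q-I)=\tfrac{2\lambda}{N-1}\sum_{i<j}(Q_{ij}-I)$ and use self-adjointness of each $Q_{ij}$ in $L^2(\sigma_N)$. Terms with $i,j>l$ vanish since $h$ depends only on $v_1,\ldots,v_l$; the $\binom{l}{2}$ terms with $i,j\leq l$ contribute $O(1/N)$; and the $l(N-l)$ mixed pairs $i\leq l<j$ dominate, each contributing the same integral by the symmetry of $f^N$, with prefactor $\tfrac{2\lambda l(N-l)}{N-1}\to 2\lambda l$. Chaoticity of the $(l+1)$-marginal identifies the limit of each such integral with $\int f_0^{\otimes(l+1)}(Q_{i,l+1}-I)[h]$; integrating the $v_{l+1}$-variable out using the definition \eqref{eq:B2} of $B_2$ produces the first term on the right-hand side of \eqref{eq:generator}.

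For the thermostat on $j\leq l$, the dual operator reads $\tilde P_j^{\alpha,*}[h](\mathbf v)=\int\gamma(dw)\,h(\alpha_N v_1,\ldots,\alpha_N w,\ldots,\alpha_N v_l)$ with $\alpha_N w$ in slot $j$. The mean-value theorem gives $|\tilde P_j^{\alpha,*}[h]-P_j^*[h]|\leq C|\alpha_N-1|(|v_1|+\cdots+|v_l|+|w|)$, which vanishes in $L^1(f^N\sigma_N\otimes\gamma)$ by \eqref{eq:moment.condition} and the finiteness of $l$; chaoticity of the $l$-marginal then produces the second sum on the RHS. For $j>l$ the $j$th coordinate lies outside the support of $h$, so a second-order Taylor expansion in $\alpha_N-1=\tfrac{v_j^2-w^2}{2NE}+O(1/N^2)$ and in $h$ yields $(\tilde P_j^{\alpha,*}-I)[h]=\tfrac{v_j^2-T}{2NE}\sum_i v_i\partial_i h(\mathbf v)+\text{remainder}$. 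Summing over $j=l+1,\ldots,N$ and using the sphere identity $\sum_jv_j^2=NE$ together with the $L^1(f^N\sigma_N)$-boundedness of $\sum_{i\leq l}v_i^2$, the prefactor $\mu\sum_{j>l}\tfrac{v_j^2-T}{2NE}$ converges in $L^1$ to $A=\mu\tfrac{E-T}{2E}$; chaoticity of the $l$-marginal then yields the drift term $\int f_0^{\otimes l}Av\cdot\nabla h\,dv$.

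The principal technical obstacle is uniform control of the Taylor remainder for $\alpha_N$, which blows up as $v_j^2\uparrow NE$. I would handle this via a dichotomy $\{v_j^2\leq M_N\}$ versus $\{v_j^2>M_N\}$: on the good event, $\alpha_N-1=O(M_N/N)$ uniform in $w$ of $\gamma$-typical size and the second-order remainder is $O(M_N^2/N^2)$ per $j$, summing to $O(M_N^2/N)$ across the $N-l$ values; on the bad event, the moment hypothesis \eqref{eq:moment.condition} gives an $f^N\sigma_N$-measure of $O(M_N^{-r/2})$ by Markov's inequality, and a direct moment computation controls the $\gamma$-integral of $|\alpha_N-1|$ against this small set. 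Choosing $M_N=N^a$ for suitably small $a\in(0,1)$ makes both contributions vanish once $r>2$. The $\RR^N$ case for $L_N$ is strictly easier: $\beta_N(w)=\sqrt{NE/(NE-E+w^2)}$ has no singularity in $\mathbf v$ at all, so the drift term emerges directly from $\beta_N-1\approx\tfrac{E-w^2}{2NE}$ summed over the $N-l$ thermostat hits outside the support of $h$, with no sphere identity required.
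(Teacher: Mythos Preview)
Your overall strategy matches the paper's: pass to the adjoint, cite or rederive the Kac and bare-thermostat limits, and isolate the rescaling contribution $\sum_j(\tilde P_j^{\alpha,*}-P_j^*)[h]$ via a Taylor expansion of $\alpha_N$ on a good set together with a Chebyshev bound on its complement. Your use of the sphere identity $\sum_{j>l}v_j^2=NE-\sum_{j\le l}v_j^2$ to extract the drift constant is a pleasant variant of the paper's route, which instead uses exchangeability to reduce to the single index $j=l+1$ and then chaoticity of the $(l{+}1)$-marginal to get $\int v_{l+1}^2\to E$.

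There is, however, a real gap in your cutoff balance. With $M_N=N^a$ and the pointwise bound $|\alpha_N-1|=O(M_N/N)$ on the good set, the second-order remainder indeed sums to $O(M_N^2/N)=O(N^{2a-1})$, which forces $a<\tfrac12$. But on the bad set the crude bound $2\|h\|_\infty$ together with exchangeability gives a contribution of order $(N-l)\,M_N^{-r/2}=N^{1-ar/2}$, and this vanishes only if $a>2/r$. The window $(2/r,\tfrac12)$ is nonempty only when $r>4$, not for every $r>2$ as the lemma assumes; taking $a$ \emph{small}, as you suggest, makes the bad-set bound worse, not better. The paper avoids this by choosing $M_N=NE/\log(NE)$ and, crucially, by \emph{integrating} the second-order term on the good set rather than bounding it pointwise: since $(\alpha_N-1)^2\le C(v_{l+1}^4+w^4)/(NE)^2$ there, one uses $v_{l+1}^4\le M_N v_{l+1}^2$ on the good set together with the exact second moment $\int v_{l+1}^2\,f^N\,\sigma_N=E$ (from symmetry on the sphere) to obtain a good-set contribution of order $M_N/N$ rather than $M_N^2/N$. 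With this refinement, $M_N=NE/\log(NE)$ makes the good-set remainder $\sim 1/\log N$ and the bad-set remainder $\sim N(\log N/N)^{r/2}$ both vanish for every $r>2$.
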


The idea of the proof is to divide the integral of the generator against a test function into a region where some components of $\mathbf{v}$ are controlled, and into the complementary region. The moment condition \eqref{eq:moment.condition} together with Tchebyshev's inequality will imply that the contribution from the complementary integral is small, while in the controlled region, Taylor's expansion will be effective.  
\begin{proof} We know from \cite{mckean1966} and \cite{bonetto-loss-vaidyanathan2014} that
\begin{align*}
  \lim_{N\rightarrow \infty} \int_{S^{N-1}}\hspace{-0.5cm}\sigma_N(dv) \left(N \lambda (Q-I)  + \sum_{j=1}^N (P_j -I) \right) [f^N](v) h(v) &=\\ \int_{\RR^{l}} 2\lambda \sum_{j=1}^l \left(\prod_{i \neq j} f_0(v_i)\right) [B_2[f_0,f_0]-f_0](v_j) h dv
  & +\mu \int_{\RR^k} \sum_{j=1}^l (P_j^\ast -I)h.
\end{align*}
Thus it suffices to study $\int_{S^{N-1}} \sum (\tp_j -P_j) [f^N]  h$. We have the following.
\begin{eqnarray*}
\int_{S^{N-1}} \sum (\tp_j -P_j) [f^N](v)  h(v)\sigma_N(dv) & = & \int_{S^{N-1}} f^N \sum_1^N (\tpast_j -P^\ast_j) [h] =: \mathcal{E}_1 + \mathcal{E}_2
\end{eqnarray*}
where $\mathcal{E}_1$ and $\mathcal{E}_2$ are given by
\[  
 \sum_{j=1}^l \int_{S^{N-1} \times \RR} f^N(v) \gamma(dw) [h(\alpha_N(v_{l+1}, w) (v_1, \dots, w, \dots,v_l))-h(v_1,\dots, w, \dots, v_l)]\sigma_N(dv),
\]
and
\[
(N-l) \int_{S^{N-1}\times\RR} f^N(v) [h(\alpha_N(v_{l+1},w) (v_1, \dots, v_l))-h(v)] \sigma_N(dv)\gamma(dw)
\]
respectively.
We will study $\lim_{N\rightarrow \infty}\mathcal{E}_2(N)$ only. Similar computations show that $\mathcal{E}_1(N)$ goes to zero.
Let $A_N$ be the region given below.
\begin{equation}\label{eq:ANappendix}
    A_N = S^{N-1}(\sqrt{NE})\times \RR \cap \left\{ (v,w) : \vert v_{l+1} \vert^2 \leq \frac{N E}{\log(N E)} \right\}.
\end{equation}
Then
\begin{eqnarray*}
    \mathcal{E}_2 & = & (N-l) \int_{A_N} f^N(v) [h(\alpha_N(v_{l+1},w) (v_1, \dots, v_l))-h(v)] \sigma_N(dv)dw + R_N.\\
\end{eqnarray*}
As promised, we now control the remainder term $R_N$ using Tchebyshev's inequality. Here we use the boundedness of the average $r^{\text{th}}$ moment.
\begin{eqnarray*}
\vert R_N \vert & \leq & 2 (N-l)\vert\vert h\vert\vert_\infty \times \left(\int_{A_N^c} f^N(v) \sigma_N(dv)\right)\\ & \leq & 2 (N-l) \vert\vert h\vert\vert_\infty \left(\frac{\log(NE)}{NE}\right)^{r/2} \int \vert v_{l+1}\vert^r f^N(v) \sigma_N(dv).
\end{eqnarray*}
This shows that we can neglect $R_N$. Let $\epsilon:= \alpha_N(v_{k+1},w)$. Using the following Taylor expansion:
\begin{equation*}
    h((1+\epsilon) v) - h(v) = \epsilon v. \nabla h(v) + \frac{\epsilon^2}{2} \sum_{i,j} v_i v_j \partial_{i,j}h(v_*)
\end{equation*}
we divide $\mathcal{E}_2-R_N$ into two parts $\mathcal{E}_{2A}$ ,coming from the  $\epsilon v. \nabla h(v)$ term, and $\mathcal{E}_{2B}$. We now prove that our assumptions on $(f^N)_N$ make $\mathcal{E}_{2B}(N)$ go to zero, and that $\mathcal{E}_{2A}$ gives the desired limit.
The expression:
\[
\mathcal{E}_{2B} \leq \frac{1}{2} \max_{i,j\leq l}\vert\vert\partial_{i,j} h \vert\vert_{\infty} \int_{A_N} \left(\sum_{i=1}^l v_i \right)^2 \frac{(N-l)f^N(v) \sigma_N(dv) \gamma(dw)}{ \left(\alpha_N(v_{l+1},w)^2 +1\right)^2} \left( \frac{v_{l+1}^2 -w^2}{NE +w^2 - v_{l+1}^2}\right)^2,
\]
together with \eqref{eq:ANappendix} and
the inequality $\sum_{i,j \leq l} \vert v_i v_j \vert \leq l \sum_{j=1}^l v_j^2$ show that 
\[
 \mathcal{E}_{2B} \leq C (l^2 E)  \max_{i,j\leq l}\vert\vert\partial_{i,j} h \vert\vert_{\infty} \left( \frac{1}{\log(N E)}+ \frac{ \int w^4 \gamma(dw)}{NE} \right).
\]
This implies that $\mathcal{E}_{2B}(N)$ goes to zero.
The term $\mathcal{E}_{2A}$ contributes to the expression \[
    \int_{S^{N-1}(\sqrt{NE})} L_N[f^N](\mathbf{v}) h(v_1,\dots, v_l) \sigma_N(dv)
\]
in the large $N$ limit. The coefficient of $v.\nabla h(v)$ in the limit of large $N$ behaves as follows:
\begin{equation*}
    \lim_{N\rightarrow \infty}\mu \frac{N-k}{ (\frac{N E}{NE+w^2- v_{l+1}^2}+1)} \frac{v_{l+1}^2 -w^2}{NE +w^2 - O(\frac{NE}{\log(NE )})} =\frac{\mu}{2}\frac{v_{l+1}^2 -w^2}{E}
\end{equation*}
Integrating against $w$ and against $v_{l+1}$ implies that, in the infinite $N$ limit, the coefficient of $v.\nabla h(v)$ is simply:
\[ \frac{\mu}{2} \frac{E -T}{E}.\]
Here we replaced $\int_{A_N}\int_\RR f^N \epsilon(v,w) v. \nabla h(v)$ by the integral over all of $S^{N-1}(\sqrt{NE}) \times \RR$. The cost of this change is negligible as seen above.
The same procedure shows that $\mathcal{E}_1$ becomes zero in the limit of large $N$. \end{proof}
\subsection{Proof of Lemma \ref{lem:Fourier}} \label{sec:Fourier}

\begin{proof}[Proof of Lemma \ref{lem:Fourier}]
For part (i), the fact that $y\in C^2(\RR)$ follows from $\int v^2 f_\infty(dv) = E$. Equation \eqref{eq:FourierInfinity} follows from \eqref{eq:PDEstationary} by choosing $\phi(v)= e^{-2\pi i v\xi}$, and noting that the left-hand side gives $2\pi i A \xi \int e^{-2\pi i v \xi} v f_\infty(dv) = -A \xi \frac{dy}{d\xi}(\xi)$.

We now prove (ii), that is, we need to show that $y(\xi)$ satisfies (P). To this end, let $(f_t)_{t\geq 0}$ be the weak solution to \eqref{eq:PDE} with $f_0 = \gamma$, and note that, since $f_t\to f_\infty$ weakly by Lemma \ref{lem:equilibrium}, we have $y(\xi) = \lim_{t\rightarrow\infty} \hat{f}_t(\xi)$ for all $\xi \in \RR$, where $\hat{f}_t(\xi) = \hat{g}_t(e^{A t }\xi)$ and $\hat{g}_t(\xi)$ satisfies the Fourier version of \eqref{eq:PDEweakNiceForm}, namely
\[
    \hat{g}_t = e^{-D t}\hat{\gamma} + e^{-D t}\int_0^t e^{D s}(2\lambda \hat{B_2}[\hat{g}_s,\hat{g}_s] +\mu \widehat{B_1[s]}) ds,
\]
with $D = 2\lambda + \mu$.

We now aim to prove that $\hat{g}_t$ satisfies (P) for every $t\geq 0$, for which we will use a fixed-point argument, similar to the one used in the proof of Theorem \ref{thm:WellPosedness2}. Consider the space
\[
F = \{z \in C(\RR,\mathbb{C}) : |z(\xi)| \leq 1, \forall \xi\in\RR \},
\]
endowed with the $L^\infty$ norm $\Vert \cdot \Vert_\infty$. For $\tau>0$ fixed, denote $\mathcal{X} = C([0,\tau],F)$, with the norm $|u| = \sup_{t\in[0,\tau]} \Vert u_t \Vert_\infty$, thus $(\mathcal{X},|\cdot|)$ is a complete metric space. For $u\in \mathcal{X}$, define
\[
\FF[u]_t
= e^{-D t}\hat{\gamma} + e^{-D t}\int_0^t e^{D s}(2\lambda \hat{B_2}[u_s,u_s] +\mu \widehat{B_1[s]}) ds.
\]
For $z\in F$, we have that $\hat{\gamma}$, $\widehat{B_1[s]}$ and $\hat{B}_2[z,z]$ all belong to $F$, which implies that $\FF$ maps $\mathcal{X}$ into itself. Also, since
\[
\Vert \hat{B}_2[z,z] -\hat{B}_2[w,w] \Vert_\infty \leq 2 \Vert z - w \Vert_\infty,
\]
it follows that $\FF$ is a contraction on $(\mathcal{X}, |\cdot|)$ when $\tau$ is small enough (the details are the same as in the proof of Theorem \ref{thm:WellPosedness2}). This proves that $\FF$ has a unique fixed point, which can then be extended to all times $t\geq 0$. By the uniqueness property, we deduce that $(\hat{g})_{t\geq 0}$ has to be this fixed point.

Now, consider the smaller space $F_P = \{ z\in F: \text{$z$ satisfies (P)}\}$ and $\mathcal{X}_P = C([0,\tau],F_P)$ with the same metrics as before, thus $\mathcal{X}_P$ is a closed subset of $\mathcal{X}$. We note that $\hat{B}_2[z,w]$ is always even (since replacing $\xi$ by $-\xi$ corresponds to changing $\theta$ into $\pi+\theta$.  This does not affect \eqref{eq:FourierB2}), and that $\hat{B}_2[z,z]$ is non-increasing on $[0,\infty)$ when $z$ satisfies (P): for $0\leq \xi_1 \leq \xi_2$,
\begin{eqnarray*}
\hat{B}_2[z,z](\xi_2)
& = & \int_0^{2\pi} z(\vert \xi_2\cos\theta \vert) z(\vert \xi_2\sin\theta\vert) \frac{d\theta}{2\pi} \\
& \leq & \int_0^{2\pi} z(\vert\xi_1\cos\theta\vert) z(\vert \xi_1\sin\theta\vert) \frac{d\theta}{2\pi}
= \hat{B}_2[z,z](\xi_1).
\end{eqnarray*}
Thus, $z \mapsto \hat{B}_2[z,z]$ preserves the property (P), which in turn implies that $\FF$ maps $\mathcal{X}_P$ into itself. Therefore, the fixed point $(\hat{g}_t)_{t\geq 0}$ of $\FF$ will belong to $\mathcal{X}_P$, because starting the Picard iterations from $u_t \equiv \hat{\gamma}$ will keep the sequence, and hence its limit, in $\mathcal{X}_P$. Consequently, $y(\xi)$ also satisfies (P) since $y(\xi)= \lim_{t\to \infty} \hat{g}_t(e^{At} \xi)$.

We now prove statement (iii).
The case $k=0$ holds since $y(\xi) \in C^2(\RR)$ and we have $f_\infty \in L^1(\RR)$, so that $\lim_{\xi \rightarrow \pm \infty} y(\xi) =0$ by the Riemann-Lebesgue lemma. For the cases $k\geq 1$, we use induction. Let $u_k(\xi) = \left(\xi \frac{d}{d\xi}\right)^k y$. Assume $u_k(\xi)$ is continuous and $\lim_{\xi\rightarrow \pm \infty} u_k(\xi) =0$ whenever $0\leq k \leq J$. We have the following Leibniz rule whenever $r$ and $m$ are less than $J$.
\begin{align*}
    \xi\frac{d}{d\xi} \int_0^{2\pi} u_r(\xi\cos\theta) u_m(\xi\sin\theta)\frac{d\theta}{2\pi} & = \int_0^{2\pi} u_{r+1}(\xi\cos\theta) u_m(\xi\sin\theta) \frac{d\theta}{2\pi}\\
    & \qquad + \int_0^{2\pi} u_r(\xi\cos\theta) u_{m+1} (\xi\sin\theta) \frac{d\theta}{2\pi}.
\end{align*}

This, together with equation \eqref{eq:FourierInfinity} gives:
\begin{eqnarray*}
     -A u_{J+1}(\xi) &= &2\lambda \sum_{k=0}^{J} \binom{J}{k} \int_0^{2\pi} u_{k}(\xi\cos\theta) u_{J-k}(\xi\sin\theta) \frac{d\theta}{2\pi}\\
       &  & - 2 \lambda u_J(\xi) + \mu \left( \xi \frac{d}{d\xi}\right)^J(\hat{\gamma}) - \mu u_J(\xi)
\end{eqnarray*}
The right-hand side is continuous, and converges to zero for large $\vert \xi\vert $ by the induction hypotheses and the fact that $\left( \xi \frac{d}{d\xi}\right)^J(\hat{\gamma})$ is a polynomial times a Gaussian.
\end{proof}

\subsection{Proof of Lemmas \ref{lem:GaussTrick} and \ref{lem:app}}
\label{sec:techincal}

\begin{proof}[Proof of Lemma \ref{lem:GaussTrick}]
First we prove the case when $r \in (0,1)$. Let 
\[\mathcal{G}[h](\xi) = r \hat{\gamma}(\xi) + (1-r) \int_0^{\frac{\pi}{2}} h(\xi \cos\theta)h(\xi\sin\theta)\frac{d\theta}{\pi/2},\]
and let $S$ be given by
\[S=\left\{ w \in C([0,\infty), [0,1]):  \sup_{\xi\neq 0} \frac{\vert w(\xi)-1 \vert}{\xi^{2}} < \infty\right\},\]
equipped with the norm $\Vert \cdot \Vert_S$ given by
\[\Vert f - g \Vert_S = \sup_{\xi\neq 0} \frac{\vert f(\xi)-g(\xi) \vert}{\xi^2}.\]
Then, by a standard $3\epsilon$ argument, we see that $(S, \Vert \cdot \Vert)_S$ is a complete metric space. It is also easy to show that $\mathcal{G}$ maps $S$ into $S$ and that $\mathcal{G}$ is a contraction in $(S, \Vert . \Vert_S)$ by a factor of $(1-r)$ (see  footnote \footnote{ This is because for all $\xi>0$ we have
\begin{eqnarray*}
\frac{\vert \mathcal{G}[f](\xi) - \mathcal{G}[g](\xi) \vert}{\xi^2} & \leq & (1-r) \int_0^{\frac{\pi}{2}} (f(\xi\cos\theta)+g(\xi\cos\theta))\frac{\vert (f(\xi\sin\theta)-g(\xi\sin\theta))\vert}{\xi^2}\\
 & \leq & (1-r) \Vert f - g \Vert_S \int_0^{\frac{\pi}{2}} 2 \xi^2 (\sin\theta)^2 \xi^{-2} = (1-r) \Vert f- g \Vert_S.
\end{eqnarray*}}).
It follows that $\hat{\gamma}$ is the unique fixed point of $\mathcal{G}$ in $S$.
We note also that $\mathcal{G}$ has a monotonicity property: if $u(\xi) \leq w(\xi)$ for all $\xi$, then $\mathcal{G}[u](\xi) \leq \mathcal{G}[w](\xi)$ for all $\xi$.
This implies that if $u(\xi)$ satisfies \eqref{eq:intineq}, then so does $\mathcal{G}[u]$. 
We are now ready to prove the lemma. Let $z(\xi)$ satisfy \eqref{eq:intineq}. Consider the sequence $\{z_n\}$ in $S$ with
\[ z_{k+1} = \mathcal{G}[z_k], \qquad z_0 = \max\{z, \hat{\gamma}\}.\]
We have $z_0 \in S$ since $z_0$ is continuous and $\sup_{\xi\neq 0} \vert z_0 - 1 \vert \xi^{-2}  \leq \sup_{\xi \neq 0} \vert \hat{\gamma}(\xi) - 1 \vert \xi^{-2} < \infty$.
We also note that $z_0$ satisfies \eqref{eq:intineq}. Since $z_0 \leq z_1$, we can show by induction and the monotonicity property of $\mathcal{G}$  that for each $\xi$ we have
\[ z(\xi) \leq z_1(\xi) \leq \dots \leq z_k(\xi) \leq \dots \leq 1.\]
By the contracting property of $\mathcal{G}$ we have that $z_\ast = \lim_{k\rightarrow \infty}z_k$ exists in $(S,\Vert \cdot\Vert_S)$. Thus, $\mathcal{G}(z_\ast) = z_\ast$ by continuity. Hence $z_\ast = \hat{\gamma}$ by the uniqueness of the fixed point. Therefore, we have $z(\xi) \leq z_0(\xi) \leq z_\ast(\xi) \leq \hat{\gamma}(\xi)$ for all $\xi$. This proves the lemma when $r\in (0,1)$.

When $r=0$, we argue by contradiction. Suppose $z(\xi)$ satisfies \eqref{eq:intineq} and \eqref{eq:zleqgamma}, but that $z(\xi_1)> \hat{\gamma}(\xi_1)$ for some $\xi_1$. Then $\xi_1 >0$. Since $z \leq 1$, $z(\xi_1) = \exp(-2 \pi^2 \bar{E} \xi_1^2)$ for some $\bar{E}< E$. Thus we have
\[ \frac{2}{\pi} \int_0^{\pi/2} z(\xi_1 \cos\theta) z(\xi_1 \sin\theta) \geq z(\xi_1) \geq  \frac{2}{\pi} \int_0^{\pi/2} \exp(-2 \pi^2 \bar{E} \xi_1^2 \cos^2\theta) \exp(-2 \pi^2 \bar{E} \xi_1^2\sin^2\theta).\]
So there is a subset $I_1$ of $[0,\pi/2]$ of positive measure for which
\[ \theta \in I_1 \Rightarrow \left( z(\xi_1 \cos\theta)\geq \exp(-2 \pi^2 \bar{E} \xi_1^2\cos^2\theta) \text{ or } z(\xi_1 \sin\theta)\geq \exp(-2 \pi^2 \bar{E} \xi_1^2\sin^2\theta) \right).\]
It follows that there is $\xi_2 \in (0, \xi_1)$ for which $z(\xi_2) \geq \exp(-2 \pi^2 \bar{E} \xi_2^2)$. Since $\xi_1$ was arbitrary, 
\[\inf\{ \xi>0: z(\xi) \geq \exp(-2 \pi^2 \bar{E} \xi_2)\} =0,\]
and there is a sequence $\eta_j>0$ approaching zero for which $z(\eta_j) \geq \exp(-2 \pi^2 \bar{E} \eta_j^2)$ for all $j$. Therefore
\begin{equation*}
    \liminf_{j\rightarrow \infty} (z(\eta_j)-1)\eta_j^{-2} \geq - 2\pi^2 \bar{E},
\end{equation*}
contradicting \eqref{eq:zleqgamma}.
\end{proof}

\begin{proof}[Proof of Lemma \ref{lem:app}] 
The proof relies on Taylor approximation.
Let
\[ I_N = \left\{ w: \left\vert \frac{E-w^2}{N E }\right\vert \leq \frac{1}{10}\right \}, \quad J_N = \RR - I_N.\]
We will need the observation that $\beta_N(w) \leq 2$ for all $N\geq 2$, and the fact that, when $\vert x\vert \leq \frac{1}{10}$, we have
$(1-x)^{-1/2} = 1 + \frac{x}{2} + \frac{3 }{8}x^2 \text{Er}(x)$; where $\vert \text{Er}(x) \vert \leq \left( \frac{10}{9}\right)^{5/2}$. 
We are now ready to find the desired upper bounds. We will let $x= \frac{E- w^2}{N E}$.
We note that when $w \in I_N$, we have $\vert x \vert \leq \frac{1}{10}$ and 
\begin{equation}\label{eq:Taylor}
   (N-1)(1-\beta_N(w)) = -\frac{N-1}{N} \left( \frac{E - w^2}{2 E} + \frac{3}{8} \text{Er}(\cdot) \frac{ (E - w^2)^2}{E^2 N} \right)
\end{equation}

Let us inroduce $\phi_N$ via the equation
\[
    \beta_N(w) = 1 + \frac{ \phi_N(w) }{N}.
\]
It follows that we have 
\begin{equation}
 \label{eq:phi_nu_integrable}
\int_\RR \vert \phi_N(w) \vert (1 + w^2) \nu(dv) \leq C (1 + M_4(\nu)),
\end{equation}
which follows from the fact that on $I_N$ we have
\begin{equation}
\label{eq:phi_bounded_by_energy}
\vert \phi_N \vert \leq  \left\vert \frac{E - w^2}{2 E} \right\vert \left(1 + \frac{3}{80} \left( \frac{10}{9}\right)^{5/2}\right),
\end{equation}
and the fact that, on $J_N$, we have $\vert\phi_N \vert \leq 3 N$. Thus
\[    
\int_{J_N} (1+w^2) \vert \phi_N \vert \leq 30 N \int_\RR (1 + w^2) \vert \left\vert \frac{E - w^2}{N E} \right\vert \nu(dv) \leq C(1 + M_4(\nu)). 
\]
We now tackle $\bar{K}_N^2$. Using the boundedness of $\beta_N$ and \eqref{eq:phi_nu_integrable}
\[
    \bar{K}_N^2  \leq 3 \int_\RR \nu(dw) \left\vert \frac{\phi(w)}{N} \right\vert w^2 \leq \frac{ C(1 + M_4(\nu))}{N}.
\]
We now tackle $\bar{K}_N^1$ and $\bar{K}_N^3$. Using Tchebyshev's inequality, we see that
\begin{align}
    \notag
    (N-1) \int_{J_N} (1- \beta_N)^2 \nu(dw) & \leq 9 (N-1) \int_{J_N} \nu(dw) \leq 9 (N-1) \int_{J_N} 100 \left\vert \frac{E- w^2}{E N} \right\vert^2 \nu(dw)\\  
    \label{eq:Tchebyshev}
    & \leq \frac{C}{N} (1 + M_4(\nu) ).
\end{align}
It follows that
\begin{align*}
    \notag
    \bar{K}_N^1 & = (N-1) \int_{I_N} \frac{\phi_N^2}{N^2} \nu(dw) + (N-1) \int_{J_N} (1- \beta_N)^2 \nu(dw) \\
        & \leq C \frac{1}{N} \int_\RR (1+w^2)^2 \nu(dw) + \frac{C}{N} (1 + M_4(\nu) ) \leq \frac{C}{N}(1+ M_4(\nu)).
\end{align*}
Here we used \eqref{eq:phi_bounded_by_energy} and \eqref{eq:Tchebyshev} for the first and second terms in the first inequality.

Finally, since $\bar{K}_3^N$ satisfies
\[
    \bar{K}_3^N = - 2(N-1) \int_\RR (1-\beta_N) \nu(dw) + 2 \bar{K}_N^1 - \frac{E-T}{E}
\]
the claim $\vert \bar{K}_3^N \vert \leq C(1 + M_4)/N$ follows from substituting \eqref{eq:Taylor} into the integral above and noting that each of 
\[ \int_{I_N} \frac{1}{N} \left\vert \text{Er}(\cdot) (E-w^2)^2 \right\vert \nu(dw), \quad
    \int_{J_N} \left\vert \frac{E- w^2}{2E}\right\vert \nu(dw), \quad \text{ and } 
    \int_{J_N} (N-1) \vert 1-\beta_N \vert \nu(dw)
\]
is bounded above by $C (1+ M_4(\nu) ) / N$.
\end{proof}

\subsection{Proof of the statement in Remark \ref{rmk:eps}}
\label{sec:app_Fournier}

Our case $r=4$ corresponds to the parameters $q=2p=4$ in \cite[Theorem 1]{fournier-guillin2013}. Since we are working in dimension $d=1$, we only need to consider the case $p> d/2$ (see step $2$ on page $717$ there). To extend the proof in \cite[Theorem 1]{fournier-guillin2013} to this case also, it remains to show the following inequality holds. 
\[    \sum_{n\geq 0} 2^{np} \min \left\{
    2^{-2p n} , (2^{-2pn}/N)^{1/2} \right\} \leq C N^{-1/2} \log N.
\]
This follows after we split the sum into the region $\{ n: \sqrt{N} < 2^{pn} \}$ and its complement,
 and noting that in this region $\min \{ 2^{-2p n} , (2^{-2pn}/N)^{1/2} \} = 2^{-2np}$, and the 
complementary region has O($\log N$) terms.

%%%%%%%%%%%%%%%
% References
\bibliographystyle{plain}
\bibliography{references.bib}{}
%%%%%%%%%%%%%%%

\end{document}